\keywords{factorisation system, embedding, comonad, coalgebra, open maps, bisimulation, game, resources, relational structures, finite model theory}
\providecommand{\noopsort}[1]{}
\def\eg{{\em e.g.~}}
\def\cf{{\em cf.~}}
\def\Cf{{\em Cf.~}}
\DeclareMathOperator{\dom}{dom} % domain of an arrow
\DeclareMathOperator{\Emb}{\mathbb{S}} % Strong subobjects
\DeclareMathOperator{\Path}{\mathbb{P}} % Path embeddings
\DeclareMathOperator{\0}{\mathbf{0}} % Initial object
\newcommand{\B}{\mathcal{B}} % Back-and-forth system
\newcommand{\U}{\mathcal{U}} % Set of path embeddings
\newcommand{\down}{{\downarrow}} % Downward closure
\newcommand{\up}{{\uparrow}} % Upward closure
\newcommand{\pit}{\pitchfork}
\newcommand{\br}[1]{\llbracket #1 \rrbracket} % pairs in back-and-forth systems
\newcommand{\G}{\mathscr{G}} % Game
\newcommand{\EG}{{}^{\exists^+}\hspace{-2pt}\mathscr{G}} % Existential game
\newcommand{\SEG}{{}^{\exists}\mathscr{G}} % Strong existential game
\newcommand{\W}{\mathscr{W}} % Winning strategy
\newcommand{\pro}{\,\bm{\pi}\,} % a \pro b iff the product a \times b exists
\newcommand{\FO}{\mathrm{FO}} % FO logic
\newcommand{\EFO}{\exists^+\mathrm{FO}} % Existential positive FO logic
\newcommand{\emb}{\rightarrowtail} % Embeddings (=strong monos)
\newcommand{\lemb}{\leftarrowtail} % Leftarrow embeddings (=strong monos)
\newcommand{\into}{\hookrightarrow} % Inclusions (e.g., inclusion functor)
\newcommand{\epi}{\twoheadrightarrow} % Epimorphism
\newcommand{\id}{\mathrm{id}} % Identity morphisms
\newcommand{\Q}{\mathscr{Q}} % Left class in the factorisation system
\newcommand{\M}{\mathscr{M}} % Right class in the factorisation system
\newcommand{\xepi}[2][]{%
  \xrightarrow[#1]{#2}\mathrel{\mkern-14mu}\rightarrow
}
\newcommand{\xemb}[2][]{\ensurestackMath{\mathrel{%
  \stackengine{1pt}{%
    \stackengine{0pt}{\rightarrowtail}{\scriptstyle#2}{O}{c}{F}{F}{S}%
  }{\scriptstyle#1}{U}{c}{F}{F}{S}%
}}}
\newcommand{\xlemb}[2][]{\ensurestackMath{\mathrel{%
  \stackengine{1pt}{%
    \stackengine{0pt}{\leftarrowtail}{\scriptstyle#2}{O}{c}{F}{F}{S}%
  }{\scriptstyle#1}{U}{c}{F}{F}{S}%
}}}
                \def\sourcecoordinate{\pgfpointanchor{\tikztostart}{center}}
                \def\targetcoordinate{\pgfpointanchor{\tikztotarget}{center}}
                \pgfmathanglebetweenpoints{\sourcecoordinate}{\targetcoordinate}
                \edef\tempangle{\pgfmathresult}
\renewcommand{\epsilon}{\varepsilon}
\renewcommand{\theta}{\vartheta}
\renewcommand{\phi}{\varphi}
\DeclareMathOperator{\C}{\mathscr{C}} % Generic (arboreal) category
\DeclareMathOperator{\F}{\mathscr{F}} % Category of forests
\DeclareMathOperator{\T}{\mathscr{T}} % Category of trees
\DeclareMathOperator{\E}{\mathscr{E}} % Extensional category
\newcommand{\CSstar}{\mathsf{Struct}_{\star}(\sg)}
\newcommand{\Fraisse}{Fra\"{i}ss\'{e}}
\newcommand{\ie}{\emph{i.e.}~}
\newcommand{\iec}{\emph{i.e.},~}
\newcommand{\sg}{\sigma}
\newcommand{\RA}{R^{\As}}
\newcommand{\RB}{R^{\Bs}}
\newcommand{\IMP}{\; \Rightarrow \;}
\newcommand{\CS}{\mathsf{Struct}(\sg)}
\newcommand{\struct}[1]{\mathcal{#1}}
\newcommand{\As}{\struct{A}}
\newcommand{\Bs}{\struct{B}}
\DeclareMathOperator{\R}{\mathscr{R}} % Category of forest-ordered structures
\newcommand{\RT}{\R^{E}}
\newcommand{\RTk}{\R^{E}_{k}}
\newcommand{\RPk}{\R^{P}_{k}}
\newcommand{\RMk}{\R^{M}_{k}}
\newcommand{\cvr}{\prec}
\newcommand{\va}{\vec{a}}
\newcommand{\ve}{\varepsilon}
\newcommand{\Cp}{\C_{p}}
\newcommand{\LE}{L^{E}}
\newcommand{\LP}{L^{P}}
\newcommand{\LM}{L^{M}}
\newcommand{\eqaCk}{\rightleftarrows_{k}^{\C}}
\newcommand{\eqbCk}{\leftrightarrow_{k}^{\C}}
\newcommand{\eqcCk}{\cong_{k}^{\C}}
\newcommand{\eqdCk}{\rightleftharpoons_{k}^{\C}}
\newcommand{\LL}{\mathcal{L}}
\newcommand{\Lk}{\mathcal{L}_k}
\newcommand{\Lck}{\mathcal{L}_{k}(\Count)}
\newcommand{\ELk}{\exists^+\hspace{-1pt}\mathcal{L}_{k}}
\newcommand{\SELk}{\exists\mathcal{L}_{k}}
\newcommand{\eqLk}{\equiv^{\Lk}}
\newcommand{\eqELk}{\equiv^{\ELk}}
\newcommand{\eqSELk}{\equiv^{\SELk}}
\newcommand{\eqLck}{\equiv^{\Lck}}
\newcommand{\eqak}{\rightleftarrows_{k}}
\newcommand{\eqbk}{\leftrightarrow_{k}}
\newcommand{\eqck}{\cong_{k}}
\newcommand{\eqdk}{\rightleftharpoons_{k}}
\newcommand{\eqL}{\equiv^{\LL}}
\newcommand{\Count}{\#}
\newcommand{\vphi}{\varphi}
\newcommand{\IFF}{\; \Longleftrightarrow \;}
\newcommand{\CSplus}{\mathsf{Struct}(\sg^I)}
\begin{document}

\title[Arboreal Categories: An Axiomatic Theory of Resources]{Arboreal Categories:\texorpdfstring{\\}{ }An Axiomatic Theory of Resources}
\titlecomment{{\lsuper*}This is an extended version of the paper \emph{Arboreal Categories and Resources}, which has appeared in the proceedings of the 48th International Colloquium on Automata, Languages, and Programming (ICALP), LIPIcs Vol.\ 198, pp.\ 115:1--115:20, 2021.}

\author[S.~Abramsky]{Samson Abramsky\lmcsorcid{0000-0003-3921-6637}}	%required
\author[L.~Reggio]{Luca Reggio\lmcsorcid{0000-0001-7331-7381}}	%optional
\address{Department of Computer Science, University College London, United Kingdom}	%optional
\email{l.reggio@ucl.ac.uk, s.abramsky@ucl.ac.uk}  %optional
\thanks{Research supported by the EPSRC project EP/T00696X/1 ``Resources and Co-Resources: a junction between categorical semantics and descriptive complexity'' and by the European Union's Horizon 2020 research and innovation programme under the Marie Sk{\l}odowska-Curie grant agreement No 837724.}	%optional

%%%%%%%%%%%%%%%%%%%%%%%%%%%%%%%%%%%%%%%%%%%%%%%%%%%%%%%%%%%%%%%%%%%%%%%%%%%

\begin{abstract}
Game comonads provide a categorical syntax-free approach to finite model theory, and their Eilenberg-Moore coalgebras typically encode important combinatorial parameters of structures. In this paper, we develop a framework whereby the essential properties of these categories of coalgebras are captured in a purely axiomatic fashion.
To this end, we introduce \emph{arboreal categories}, which have an intrinsic process structure, allowing dynamic notions such as bisimulation and back-and-forth games, and resource notions such as number of rounds of a game, to be defined. These are related to extensional or ``static'' structures via \emph{arboreal covers}, which are resource-indexed comonadic adjunctions. These ideas are developed in a general, axiomatic setting, and applied to relational structures, where the comonadic constructions for pebbling, Ehrenfeucht-\Fraisse~and modal bisimulation games recently introduced by Abramsky, Dawar \emph{et al}.\ are recovered, showing that many of the fundamental notions of finite model theory and descriptive complexity arise from instances of arboreal covers.
\end{abstract}

\maketitle

%%%%%%%%%%%%%%%%%%%%%%%%%%%%%
%%%%%%%%%%%%%%%%%%%%%%%%%%%%%
\section{Introduction}

 In previous work (\cite{abramsky2017pebbling,DBLP:conf/csl/AbramskyS18,AS2021}), it has been shown how a range of model comparison games which play a central role in finite model theory, including Ehrenfeucht-\Fraisse, pebbling, and bisimulation games, can be captured in terms of resource-indexed comonads on the category of relational structures and homomorphisms. 
 This was done for $k$-pebble games in~\cite{abramsky2017pebbling}, and extended to 
Ehrenfeucht-\Fraisse~games, and bisimulation games for the modal fragment, in~\cite{DBLP:conf/csl/AbramskyS18}.
In subsequent work, this has been further extended to games for generalized quantifiers \cite{conghaile2021game}, guarded fragments of first-order logic~\cite{Guarded2021}, hybrid and bounded fragments~\cite{AM2022}, and bounded conjunction finite variable logic~\cite{MS2022}. An important feature of this comonadic analysis is that it leads to novel characterisations of important combinatorial parameters such as tree-width and tree-depth. The coalgebras for each of these comonads correspond to certain forms of tree decompositions of structures, with the resource index matching the corresponding combinatorial parameter. 

This leads to the question motivating the present paper:

\vspace{0.5em}
\begin{center}
\fbox{Can we capture the significant common elements of these constructions?}
\end{center}
\vspace{0.5em}

Our aim is to develop an elegant axiomatic account, based on clear conceptual principles, which will yield all these examples and more, and allow a deeper and more general understanding of resources.

Conceptually, a key ingredient is the assignment of a process structure---an intensional description---to an extensional object, such as a function, a set, or a relational structure. It is this process structure, unfolding in space and time, to which a resource parameter can be applied, which can then be transferred to the extensional object. 
At the basic level of computability, this happens when we assign a Turing machine description or a G\"odel number to a recursive function. It is then meaningful to assign  a complexity measure to the function. The same phenomenon arises in semantics: for example, the notion of \emph{sequentiality} is applicable to a \emph{process} computing a higher-order function. Reifying these processes in the form of \emph{game semantics} led to a resolution of the famous full abstraction problem for PCF \cite{abramsky2000full,hyland2000full}, and to a wealth of subsequent results \cite{Church2017}. 

It is now becoming clear that this phenomenon is at play in the game comonads described in \cite{abramsky2017pebbling,DBLP:conf/csl/AbramskyS18,AS2021,conghaile2021game,Guarded2021}. They build tree-structured covers of a given, purely extensional relational structure.  Such a tree cover will in general not have the full properties of the original structure, but be a ``best approximation'' in some resource-restricted setting. More precisely, this means that we have an adjunction, yielding the corresponding comonad. The objects of the category where the approximations live have an intrinsic tree structure, which can be captured axiomatically. The tree encodes a process for generating (parts of) the relational structure, to which resource notions can be applied.

In this paper, we make this intuition precise. We introduce a notion of \emph{arboreal category}, and show how all the examples of game comonads considered to date arise from \emph{arboreal covers}, \ie adjunctions between  extensional categories of relational structures, and arboreal categories. Importantly, these adjunctions are comonadic, and the categories of coalgebras provide a setting for a general notion of bisimulation, which yields a wide range of logical equivalences in the examples. This notion refines the open maps formulation of bisimulation \cite{JNW1993,JNW1996} with the condition that the maps are \emph{pathwise embeddings}, generalizing the ideas introduced in \cite{AS2021}. This allows a much wider range of logical equivalences to be captured.

After recalling some preliminary facts on proper factorisation systems in Section~\ref{s:preliminaries}, we shall develop the axiomatization of paths in Section~\ref{s:paths}. In Section~\ref{s:openmaps} we introduce open pathwise embeddings and bisimulations, and this leads us to the notion of arboreal category. The latter is defined in Section~\ref{s:arboreal}, and in Section~\ref{s:back-and-forth} we establish the correspondence between bisimulations and back-and-forth equivalences in arboreal categories. Finally, in Section~\ref{s:arboreal-covers}, we show how many of the fundamental notions of finite model theory and descriptive complexity arise from instances of arboreal covers.
We shall use the concrete constructions in finite model theory as running examples throughout.

\section{Preliminaries}\label{s:preliminaries}
 
We shall assume familiarity with some standard notions in category theory. All needed background can be found in \cite{adamek2004abstract,MacLane}. All categories under consideration are assumed to be locally small and \emph{well-powered}, \ie every object has a set of subobjects (as opposed to a proper class).

\begin{exa}
The extensional categories of primary interest in this paper are categories of relational structures.
A relational vocabulary $\sg$ is a set of relation symbols $R$, each with a specified positive integer arity.
A $\sg$-structure $\As$ is given by a (possibly empty) set $A$, the universe of the structure, and for each $R$ in $\sg$ with arity $n$, a relation $\RA \subseteq A^n$. A homomorphism $h \colon \As \to \Bs$ is a function $h\colon A \to B$ such that, for each relation symbol $R$ of arity $n$ in $\sg$, for all $a_1, \ldots , a_n$ in $A$,
$\RA(a_1,\ldots , a_n) \IMP \RB(h(a_1), \ldots , h(a_n))$. We write $\CS$ for the category of $\sg$-structures and homomorphisms. 
The Gaifman graph of a structure $\As$ is the graph with vertices $A$, such that two distinct elements are adjacent if they both occur in some tuple $\va \in \RA$ for some relation symbol $R$ in $\sg$.
\end{exa}

To deal with fragments of modal logic, we will consider \emph{(multi-)modal vocabularies}, \ie relational vocabularies in which every relation symbol has arity at most $2$.
If $\sg$ is a modal vocabulary, we can assign to each unary relation symbol $P\in \sg$ a propositional variable $p$, and to each binary relation symbol $R\in \sg$ modalities $\Diamond_R$ and $\Box_R$. In this case, we refer to $\sg$-structures as \emph{Kripke structures}. For any Kripke structure $\As$, the unary relation $P^{\As}$ corresponds to the valuation of the propositional variable $p$, and the binary relation $R^{\As}$ to the accessibility relation for the modalities $\Diamond_R$ and $\Box_R$. 

%%%%%%%%%%%%%%%%%%%%%%%%%%%%%%%%%%%
\subsection{Proper factorisation systems}
We recall the notion of weak factorisation system in a category $\C$.
Given arrows $e$ and $m$ in $\C$, we say that $e$ has the \emph{left lifting property} with respect to $m$, or that $m$ has the \emph{right lifting property} with respect to $e$, if for every commutative square as on the left-hand side below
\begin{equation*}
\begin{tikzcd}
{\bullet} \arrow{d} \arrow{r}{e} & {\bullet} \arrow{d} \\
{\bullet} \arrow{r}{m} & {\bullet}
\end{tikzcd}
\ \ \ \ \ \ \ \ \ \ \ \ \ 
\begin{tikzcd}
{\bullet} \arrow{d} \arrow{r}{e} & {\bullet} \arrow{d} \arrow{dl}[description]{d} \\
{\bullet} \arrow{r}{m} & {\bullet}
\end{tikzcd}
\end{equation*}
there exists a (not necessarily unique) \emph{diagonal filler}, \ie an arrow $d$ such that the right-hand diagram above commutes. If this is the case, we write $e {\,\pit\,} m$. For any class $\mathscr{H}$ of morphisms in $\C$, let ${}^{\pit}\mathscr{H}$ (respectively $\mathscr{H}^{\pit}$) be the class of morphisms having the left (respectively right) lifting property with respect to every morphism in $\mathscr{H}$.

\begin{defi}\label{def:weak-f-s}
A pair of classes of morphisms $(\Q,\M)$ in a category $\C$ is a \emph{weak factorisation system} provided it satisfies the following conditions:
\begin{enumerate}[label=(\roman*)]
\item Every morphism $f$ in $\C$ can be written as $f = m \circ e$ with $e\in \Q$ and $m\in \M$.
\item $\Q={}^{\pit}\M$ and $\M=\Q^{\pit}$.
\end{enumerate}
A \emph{proper factorisation system} is a weak factorisation system $(\Q,\M)$ such that every arrow in $\Q$ is an epimorphism and every arrow in $\M$ a monomorphism. A proper factorisation system is \emph{stable}\footnote{In the literature, the adjective \emph{stable} is usually reserved for the stronger property stating that, for every $e\in \Q$, the pullback of $e$ along any morphism exists and is in~$\Q$.} if, for any $e\in \Q$ and $m\in\M$ with common codomain, the pullback of $e$ along $m$ exists and belongs to $\Q$.
\end{defi}

\begin{rem}
Since every $\Q$-morphism in a proper factorisation system $(\Q,\M)$ is an epimorphism, the diagonal fillers are unique if they exist. That is, any proper factorisation system is \emph{orthogonal}. In particular, factorisations are unique up to (unique) isomorphism.
\end{rem}

\begin{exa}
If $\As$ is a relational structure then, for any $S \subseteq A$, there is an induced substructure with universe $S$.
The inclusion map $S \into A$ is an \emph{embedding}, \ie an injective homomorphism which reflects as well as preserves relations. Any embedding $m\colon \As \to \Bs$ factors as $\As \cong \mathsf{Im}(m) \into \Bs$. Taking $\Q$ to be the surjective homomorphisms and $\M$ to be the embeddings gives a proper factorisation system in $\CS$. This factorisation system is stable because pullbacks in $\CS$ are computed in the category of sets and functions, where (surjections, injections) is a stable proper factorisation system.
\end{exa}
Next, we state some well known properties of weak factorisation systems (\cf \cite{freyd1972categories} or~\cite{riehl2008factorization}):
\begin{lem}\label{l:factorisation-properties}
Let $(\Q,\M)$ be a weak factorisation system in $\C$. The following statements~hold:
\begin{enumerate}[label=(\alph*)]
\item\label{compositions} $\Q$ and $\M$ are closed under compositions.
\item\label{isos} $\Q\cap\M=\{\text{isomorphisms}\}$.
\item\label{pullbacks} The pullback of an $\M$-morphism along any morphism, if it exists, is again in $\M$.
\end{enumerate}
Moreover, if $(\Q,\M)$ is proper, the following hold for all composable morphisms $f,g$ in $\C$:
\begin{enumerate}[label=(\alph*)]\setcounter{enumi}{3}
\item\label{cancellation-e} $g\circ f\in \Q$ implies $g\in\Q$.
\item\label{cancellation-m} $g\circ f\in\M$ implies $f\in\M$.\qed
\end{enumerate}
\end{lem}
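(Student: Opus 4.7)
The plan is to exploit the lifting-property characterisation $\Q={}^{\pit}\M$ and $\M=\Q^{\pit}$ throughout, so that each clause reduces to constructing or comparing diagonal fillers. For part~\ref{compositions}, given $e_1,e_2\in\Q$ and a square with $e_2\circ e_1$ on the left and some $m\in\M$ on the right, I would lift first against $e_2$ to obtain a partial filler, and then against $e_1$ against the upper triangle; this gives the required diagonal. The argument for $\M$ is dual. Part~\ref{isos} is immediate: an isomorphism lies in both classes because it can play the role of its own filler in any square; conversely, if $f\in\Q\cap\M$, the square with $f$ on both the left and the right and identities on top and bottom admits a filler by self-lifting, and this filler is an inverse for $f$. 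For part~\ref{pullbacks}, given $m\in\M$ and a pullback square, any lifting problem posed against the pulled-back arrow can be transported along the pullback projections to a lifting problem against $m$; the filler obtained is then factored back through the universal property of the pullback.

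The two cancellation properties~\ref{cancellation-e} and~\ref{cancellation-m} are where properness enters, and are the main obstacle since they rely crucially on monomorphism/epimorphism cancellation of the factor. For~\ref{cancellation-e}, I would factor $g = m\circ e$ with $e\in\Q$ and $m\in\M$, so that $g\circ f = m\circ(e\circ f)$ exhibits $g\circ f\in\Q$ on the left of a lifting problem against $m$: the square with top arrow $e\circ f$, left $g\circ f$, right $m$, bottom $\id$ commutes, and admits a filler $d$ satisfying $m\circ d=\id$. Since $m\in\M$ is a monomorphism, the identity $m\circ d\circ m = m$ forces $d\circ m=\id$, so $m$ is an isomorphism, and therefore $g=m\circ e\in\Q$ by~\ref{compositions} together with~\ref{isos}.

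Part~\ref{cancellation-m} is dual: factor $f=m\circ e$ with $e\in\Q$ and $m\in\M$, so that $g\circ f = (g\circ m)\circ e$. The square with top $\id$, left $e$, right $g\circ f\in\M$, bottom $g\circ m$ commutes and admits a filler $d$ satisfying $d\circ e=\id$. Now $e$ is an epimorphism because $(\Q,\M)$ is proper, and the equation $e\circ d\circ e=e$ forces $e\circ d=\id$, so $e$ is an isomorphism and $f=m\circ e\in\M$.

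I expect the routine verifications in~\ref{compositions}--\ref{pullbacks} to go through almost mechanically from the definitions, and the only conceptual care is needed in~\ref{cancellation-e}--\ref{cancellation-m}, where one must set up the lifting square carefully so that the filler provides a one-sided inverse to the $\M$- or $\Q$-factor, and then invoke properness to upgrade this to a two-sided inverse. Throughout, I would keep in mind the remark that diagonal fillers in a proper system are unique, which streamlines the compatibility checks in parts~\ref{compositions} and~\ref{pullbacks}.
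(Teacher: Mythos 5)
The paper does not prove this lemma at all---it is stated as a collection of well-known facts with a pointer to Freyd--Kelly and Riehl---so there is no ``paper proof'' to compare against; your write-up supplies exactly the standard argument from those references, and it is correct. All five parts go through as you describe: the lifting-property characterisation handles \ref{compositions}--\ref{pullbacks}, and in \ref{cancellation-e}--\ref{cancellation-m} the square you set up does produce a one-sided inverse to the $\M$- (resp.\ $\Q$-) factor of the factorisation, which properness upgrades to a two-sided inverse via the mono (resp.\ epi) cancellation $m\circ d\circ m=m\Rightarrow d\circ m=\id$.

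One small slip in \ref{compositions}: for $e_2\circ e_1$ against $m$, you must lift against $e_1$ \emph{first} (using the bottom arrow precomposed with $e_2$) to obtain the partial filler out of the intermediate object, and only then against $e_2$; the order you state does not typecheck, since lifting against $e_2$ requires an arrow out of its domain into the top-right object, which does not exist until the first lift has been performed. This is cosmetic and does not affect the validity of the approach.
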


Throughout this paper, we will refer to $\M$-morphisms as \emph{embeddings} and denote them by $\emb$. $\Q$-morphisms will be referred to as \emph{quotients} and denoted by $\epi$. 

Assume $\C$ is a category admitting a proper factorisation system $(\Q,\M)$. In the same way that one usually defines the poset of subobjects of a given object $X\in\C$, we can define the poset of $\M$-subobjects of $X$. Given embeddings $m\colon S\emb X$ and $n\colon T\emb X$, let us say that $m\trianglelefteq n$ provided there is a morphism $i\colon S\to T$ such that $m=n\circ i$.
\[\begin{tikzcd}
S \arrow[rightarrowtail]{r}{m} \arrow[dashed, swap]{d}{i} & X \\
T \arrow[rightarrowtail]{ur}[swap]{n} & {}
\end{tikzcd}\] 
(Note that, if it exists, $i$ is necessarily an embedding by Lemma~\ref{l:factorisation-properties}\ref{cancellation-m}.)
This yields a preorder on the class of all embeddings with codomain $X$. The symmetrization\label{symmetriz-preorder-subobjects} $\sim$ of $\trianglelefteq$ can be characterised as follows: $m\sim n$ if, and only if, there exists an isomorphism $i\colon S\to T$ such that $m=n\circ i$. Let $\Emb{X}$ be the class of $\sim$-equivalence classes of embeddings with codomain $X$, equipped with the natural partial order $\leq$ induced by~$\trianglelefteq$. We shall systematically represent a $\sim$-equivalence class by any of its representatives. As $\C$ is well-powered and every embedding is a monomorphism, we see that $\Emb{X}$ is a set. 

For any morphism $f\colon X\to Y$ and embedding $m\colon S\emb X$, we can consider the $(\Q,\M)$-factorisation $S\epi \exists_f S \emb Y$ of $f\circ m$. This yields a monotone map $\exists_f\colon \Emb{X}\to\Emb{Y}$ sending $m$ to the embedding $\exists_f S\emb Y$. (Note that the map $\exists_f$ is well-defined because factorisations are unique up to isomorphism.) Further, if $(\Q,\M)$ is stable and $f$ is a quotient, we let $f^*\colon \Emb{Y}\to \Emb{X}$ be the monotone map sending $n\colon T\emb Y$ to its pullback along $f$. 

\begin{lem}\label{l:adj-image-pullback}
Let $\C$ be a category equipped with a stable proper factorisation system, and let $f\colon X\epi Y$ be a quotient in $\C$. The following is an adjoint pair of monotone maps:
\[ \begin{tikzcd}
\Emb{X} \arrow[r, bend left=25, ""{name=U, below}, "\exists_f"{above}]
\arrow[r, leftarrow, bend right=25, ""{name=D}, "f^*"{below}]
& \Emb{Y}
\arrow[phantom, "\textnormal{\footnotesize{$\bot$}}", from=U, to=D] 
\end{tikzcd}
\]
\end{lem}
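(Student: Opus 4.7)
The plan is to unfold the definition of an adjunction between posets: I need to show that for all $m \colon S \emb X$ in $\Emb{X}$ and $n \colon T \emb Y$ in $\Emb{Y}$, the inequality $\exists_f m \leq n$ holds in $\Emb{Y}$ if, and only if, $m \leq f^* n$ holds in $\Emb{X}$. First, I would fix notation: write $f \circ m = m' \circ e$ for the $(\Q,\M)$-factorisation, so that $\exists_f m$ is represented by $m' \colon \exists_f S \emb Y$ with $e \colon S \epi \exists_f S$; and form the pullback square of $n$ along $f$, yielding $n^* \colon f^* T \emb X$ (in $\M$ by Lemma~\ref{l:factorisation-properties}\ref{pullbacks}) together with a projection $f^* T \to T$.

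For the implication $\exists_f m \leq n \Rightarrow m \leq f^* n$, I would assume a witness $\exists_f S \to T$ with $n \circ (\exists_f S \to T) = m'$, precompose with $e$ to obtain a morphism $S \to T$ satisfying $n \circ (S \to T) = f \circ m$, and then invoke the universal property of the pullback to obtain the required factorisation $S \to f^* T$ through $n^*$.

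For the converse, a witness $S \to f^* T$ with $n^* \circ (S \to f^* T) = m$ yields, by postcomposition with $f^* T \to T$, a morphism $S \to T$ making the square with left side $e$, right side $n$, top $S \to T$ and bottom $m'$ commute (since $n \circ (S \to T) = f \circ m = m' \circ e$). Because $(\Q,\M)$ is a weak factorisation system and $e \pit n$, a diagonal filler $d \colon \exists_f S \to T$ exists; the equation $n \circ d = m'$ exhibits $\exists_f m \leq n$.

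I do not anticipate any real obstacle here: both directions are essentially diagram chases, one using the universal property of pullbacks and the other using the orthogonality of the factorisation system, and the stability hypothesis is needed only to guarantee that $f^*$ is well-defined on $\Emb{Y}$ (and lands in $\Emb{X}$). The only subtlety is to check that the constructions respect the equivalence $\sim$, but this is automatic because factorisations and pullbacks are unique up to isomorphism and the diagonal fillers in a proper factorisation system are unique.
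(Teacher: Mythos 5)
Your proposal is correct and takes essentially the same route as the paper: both arguments hinge on the existence of a morphism $S\to T$ with $n\circ(S\to T)=f\circ m$, produced via the universal property of the pullback square in one direction and a diagonal filler for the lifting problem of $e\in\Q$ against $n\in\M$ in the other. The paper merely organises the same two diagram chases around this intermediate condition stated explicitly, so there is no substantive difference.
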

\begin{proof}
This is a well known fact for posets of (strong) subobjects, see \eg \cite[Proposition~4.4.6]{Borceux1} or \cite[Proposition~3 p.~186]{MM1994}, and the same reasoning applies in our situation. For the sake of completeness, we provide a detailed proof.

Fix arbitrary embeddings $m\colon S\emb X$ and $n\colon T\emb Y$. We must prove that
\[
\exists_f m \leq n \ \Longleftrightarrow \ m\leq f^*n.
\]
Consider the (quotient, embedding) factorisation of $f\circ m$:
\[\begin{tikzcd}
S \arrow[twoheadrightarrow]{r}{e} & \exists_f S \arrow[rightarrowtail]{r}{\exists_f m} & X
\end{tikzcd}\]
Note that $\exists_f m \leq n$ if, and only if, there is $j\colon S\to T$ such that $f\circ m = n\circ j$. Just observe that if such a $j$ exists then the following square commutes, and so it admits a diagonal filler. In particular, the commutativity of the lower triangle implies that $\exists_f m \leq n$.
\[\begin{tikzcd}
S \arrow[twoheadrightarrow]{r}{e} \arrow{d}[swap]{j} & \exists_f S \arrow[rightarrowtail]{d}{\exists_f m} \arrow[dashed]{dl} \\
T \arrow[rightarrowtail]{r}{n} & Y
\end{tikzcd}\]
Conversely, if $\exists_f m \leq n$ there is $i\colon \exists_f S\to T$ such that $n\circ i = \exists_f m$ and we take $j\coloneqq i\circ e$.

Next, consider the following diagram:
\[\begin{tikzcd}[column sep=3em, row sep=2.5em]
S \arrow[bend left=30,dashed]{drr}{j} \arrow[bend right=30,rightarrowtail]{ddr}[swap]{m} \arrow[dashed]{dr}[description]{\xi} & & \\
& f^* T \arrow[rightarrowtail]{d}[swap]{f^* n} \arrow[twoheadrightarrow]{r}{g} \arrow[dr, phantom, "\lrcorner", very near start] & T \arrow[rightarrowtail]{d}{n} \\
 & X \arrow[twoheadrightarrow]{r}{f} & Y
\end{tikzcd}\]
If $\exists_f m \leq n$, there is a morphism $j$ making the outer diagram commute and so, by the universal property of pullbacks, there is a (unique) mediating morphism $\xi$. In particular, the commutativity of the leftmost triangle entails that $m\leq f^*n$. In the other direction, if $m\leq f^*n$ there is $\xi$ such that $f^*n\circ \xi = m$ and thus the morphism $j\coloneqq g\circ \xi$ satisfies $f\circ m = n\circ j$. Therefore, $\exists_f m \leq n$.
\end{proof}

\begin{lem}\label{l:emb-quo-order-embeddings}
Let $\C$ be a category equipped with a stable proper factorisation system, and let $f\colon X\to Y$ be any morphism in $\C$. The following statements hold:
\begin{enumerate}[label=(\alph*)]
\item\label{exists-embed} If $f$ is an embedding, then $\exists_f \colon \Emb{X}\to \Emb{Y}$ is an order-embedding.
\item\label{pullback-embed} If $f$ is a quotient, then $f^*\colon \Emb{Y}\to \Emb{X}$ is an order-embedding.
\end{enumerate}
\end{lem}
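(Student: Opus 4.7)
My plan is to handle the two parts independently, reducing each to a short direct argument that exploits the explicit description of $\exists_f$ or $f^*$.

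For part~(a), the key observation is that when $f\colon X\emb Y$ is an embedding and $m\colon S\emb X$ is any embedding, the composite $f\circ m$ is again an embedding by Lemma~\ref{l:factorisation-properties}\ref{compositions}. Hence the $(\Q,\M)$-factorisation of $f\circ m$ is essentially trivial and $\exists_f m$ is represented by $f\circ m$ itself. Then, assuming $\exists_f m \leq \exists_f n$ in $\Emb{Y}$, there is some $i$ with $f\circ n\circ i = f\circ m$; since embeddings are monomorphisms, cancelling $f$ yields $n\circ i = m$, so $m\leq n$ in $\Emb{X}$.

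For part~(b), I would establish the stronger identity $\exists_f \circ f^* = \id_{\Emb{Y}}$. Given an embedding $n\colon T\emb Y$, the defining pullback square for $f^*n$ gives an equation $f\circ f^*n = n\circ g$, where $g$ denotes the pullback of $f$ along $n$. Stability of the factorisation system guarantees that $g$ lies in $\Q$; hence $n\circ g$ is itself a $(\Q,\M)$-factorisation of $f\circ f^*n$, and by uniqueness of such factorisations up to isomorphism one has $\exists_f(f^*n) = n$. Combining this with the monotonicity of $\exists_f$ noted just before Lemma~\ref{l:adj-image-pullback}, the reflection property follows: if $f^*n_1 \leq f^*n_2$ then $n_1 = \exists_f(f^*n_1) \leq \exists_f(f^*n_2) = n_2$.

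The only substantive point is the identity $\exists_f \circ f^* = \id$ in~(b), and this is precisely where stability is needed: without it the pullback $g$ might fail to lie in $\Q$, and one could not read off the factorisation directly from the pullback square. Part~(a) is really an elementary monomorphism-cancellation argument.
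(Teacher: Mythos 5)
Your proof is correct and follows essentially the same route as the paper: part~(a) is the same monomorphism-cancellation argument after observing $\exists_f m = f\circ m$, and part~(b) establishes the same identity $\exists_f\circ f^* = \id_{\Emb{Y}}$ using stability to get the top leg of the pullback square in $\Q$. The only cosmetic difference is that you invoke uniqueness of $(\Q,\M)$-factorisations directly, whereas the paper re-derives it on the spot via a diagonal filler that is both a quotient and an embedding; these are the same fact.
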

\begin{proof}
For item~\ref{exists-embed} note that, as $f\colon X\to Y$ is an embedding, $\exists_f \colon \Emb{X}\to \Emb{Y}$ sends $m$ to $f\circ m$. Let $m_1\colon S_1\emb X$ and $m_2\colon S_2\emb X$ be embeddings such that $f\circ m_1\leq f\circ m_2$. Then there exists $k\colon S_1\to S_2$ such that $f\circ m_1=f\circ m_2\circ k$. Because $f$ is a monomorphism, it follows that $m_1=m_2\circ k$, \ie $m_1\leq m_2$. Hence, $\exists_f$ is an order-embedding.
 
For item~\ref{pullback-embed}, it is enough to prove that $\exists_f f^* n=n$ for any $n\colon T\emb Y$, for then $f^* n_1\leq f^* n_2$ implies $n_1= \exists_f f^* n_1\leq \exists_f f^* n_2=n_2$. Consider the pullback of $f$ along $n$, as displayed on the left-hand side below. 
\begin{center}
\begin{tikzcd}[column sep=3em, row sep=2.5em]
f^* T \arrow[rightarrowtail]{d}[swap]{f^* n} \arrow[twoheadrightarrow]{r} \arrow[dr, phantom, "\lrcorner", very near start] & T \arrow[rightarrowtail]{d}{n} \\
 X \arrow[twoheadrightarrow]{r}{f} & Y
\end{tikzcd}
\ \ \ \ \ \ \ \ \ 
\begin{tikzcd}[column sep=3em, row sep=2.5em]
f^* T \arrow[twoheadrightarrow]{d} \arrow[twoheadrightarrow]{r} & T \arrow[rightarrowtail]{d}{n} \arrow[dashed]{dl} \\
 \exists_f f^* T \arrow[rightarrowtail]{r}{\exists_f f^* n} & Y
\end{tikzcd}
\end{center}
Since the square on the right-hand side above commutes, there exists a diagonal filler $T\to \exists_f f^*T$. Note that this diagonal filler must be both a quotient and an embedding, hence an isomorphism. Therefore, $\exists_f f^* n=n$ in $\Emb{Y}$.
\end{proof}

%%%%%%%%%%%%%%%%%%%%%%%%%%%%%%%%%%%%

\section{Path Categories}\label{s:paths}
Throughout this section, we fix a category $\C$ equipped with a stable proper factorisation system.

\subsection{Paths}
If $(P, {\leq})$ is a poset, then a subset $C \subseteq P$ is a \emph{chain}  if it is linearly ordered. $(P,\leq)$ is a \emph{forest} if, for all $x\in P$, the set $\down x\coloneqq \{y\in P\mid y\leq x\}$ is a finite chain. 
The \emph{height} of a forest is the supremum of the cardinalities of its chains (if it exists, otherwise we say that the height is $\infty$). If $(P,\leq)$ is a forest, the \emph{height} of an element $x\in P$ is the cardinality of the set $\down x \setminus \{x\}$.
The \emph{covering relation} $\cvr$ associated with a partial order $\leq$ is defined by $u\cvr v$ if and only if $u<v$ and there is no $w$ such that $u<w< v$. It is convenient to allow the empty forest.
The \emph{roots} of a forest are the minimal elements. A \emph{tree} is a forest with at most one root.\footnote{Note that a tree is either empty, or has a unique root, the least element in the order.}
Morphisms of forests are maps that preserve roots and the covering relation; equivalently, monotone maps that preserve the height of elements.
The category of forests is denoted by~$\F$, and the full subcategory of non-empty trees by~$\T$. We equip these categories with the factorisation system (surjective morphisms, injective morphisms).

\begin{rem}
The factorisation system (surjective morphisms, injective morphisms) in $\F$ is obviously proper. Further, this factorisation system is stable and pullbacks of quotients along embeddings are computed in the category of sets and functions. To see this, consider forest morphisms $f\colon F\epi H$ and $g\colon G\emb H$. Let 
\[
E\coloneqq \{(x,y)\in F\times G\mid f(x)=g(y)\},
\] 
and equip $E$ with the componentwise order. As any injective forest morphism is an order-embedding, for all $(x,y),(x',y')\in E$ we have $(x,y)\leq (x',y')$ if and only if $x\leq x'$. Just observe that $x\leq x'$ entails
\[
g(y)=f(x)\leq f(x')=g(y')
\]
and so, since $g$ is an injective forest morphism, $y\leq y'$.
Thus, $E$ is a forest and the height of an element $(x,y)\in E$ coincides with the height of $x$ in $F$, and thus also with the height of $y$ in $G$. The following diagram is then readily seen to be a pullback square in~$\F$,
\[\begin{tikzcd}
E \arrow[twoheadrightarrow]{r}{\pi_G} \arrow[rightarrowtail]{d}[swap]{\pi_F} & G \arrow[rightarrowtail]{d}{g} \\
F \arrow[twoheadrightarrow]{r}{f} & H
\end{tikzcd}\]
where $\pi_F$ and $\pi_G$ are the obvious projections.

This factorisation system restricts to a stable proper factorisation system in the category $\T$ of non-empty trees.
\end{rem}
\begin{defi}
An object $X$ of $\C$ is called a \emph{path} provided the poset $\Emb{X}$ is a finite chain. Paths will be denoted by $P,Q,R,\ldots$.
\end{defi}

\begin{exa}\label{Forestex}
The paths in $\F$ are the finite chains, \ie the trees consisting of a single branch. Similarly, in $\T$ the paths are the non-empty finite chains.
\end{exa}

\begin{exas}\label{RTex}
We define a \emph{forest-ordered $\sg$-structure} $(\As, {\leq})$ to be a $\sg$-structure $\As$ with a forest order $\leq$ on $A$.
A morphism of forest-ordered $\sg$-structures $f\colon (\As, {\leq}) \to (\Bs, {\leq'})$ is a $\sg$-homomorphism $f\colon \As \to \Bs$ that is also a forest morphism.  This determines a category $\R(\sg)$. 
We equip $\R(\sg)$ with the factorisation system given by (surjective morphisms, embeddings), where an embedding is a morphism which is an embedding \emph{qua} $\sg$-homomorphism.

In \cite{AS2021}, it is shown that the categories of coalgebras for the various comonads studied there are given, up to isomorphism, by subcategories of $\R(\sg)$ (or minor variants thereof):
\begin{itemize}
    \item For the Ehrenfeucht-\Fraisse~comonad, this is the full subcategory $\RT(\sg)$ determined by those objects satisfying
the condition (E): adjacent elements of the Gaifman graph of $\As$ are comparable in the forest order.
For each $k>0$, $\RTk(\sg)$ is the full subcategory of $\RT(\sg)$ of those forest orders of height $\leq k$.
The objects $(\As, {\leq})$ of $\RTk(\sg)$ are forest covers of $\As$ witnessing that its tree-depth is $\leq k$ \cite{nevsetvril2006tree}.
\item For the pebbling comonad, for each $k>0$ this is the category $\RPk(\sg)$ whose objects have the form $(\As, {\leq}, p)$, where $(\As, {\leq})$ is a forest-ordered $\sg$-structure, and $p\colon A \to \{1,\ldots,k\}$ is a pebbling function. In addition to condition (E), these structures have to satisfy the condition (P): if $a$ is adjacent to $b$ in the Gaifman graph of $\As$, and $a < b$ in the forest order, then for all $x$ such that $a < x \leq b$, $p(a) \neq p(x)$.
It is shown in \cite{AS2021} that these structures are equivalent to the more familiar form of tree decomposition used to define tree-width \cite{kloks1994treewidth}. Morphisms have to preserve the pebbling function.
\item Suppose $\sg$ is a modal vocabulary. For the modal comonad, the category $\RMk(\sg)$ has as objects the non-empty tree-ordered Kripke structures $\As$ of height $\leq k$ satisfying the condition (M): for $x, y \in A$, $x \cvr y$ if and only if for some unique binary relation $R$ in $\sg$ (``transition relation''), $R^{\As}(x, y)$.
\end{itemize}

The paths in each of these categories are those structures  in which the order is a finite chain. These are our key motivating examples for paths. Note that in the modal case, ignoring the interpretations of the propositional variables (\iec the unary relations), these correspond to synchronization trees consisting of a single branch, \ie traces.
\end{exas}

The following fact is an immediate consequence of Lemma~\ref{l:emb-quo-order-embeddings}:
\begin{lem}\label{path-images-and-subs}
Let $f\colon X\to Y$ be any morphism in $\C$. The following statements hold:
\begin{enumerate}[label=(\alph*)]
\item\label{sub-path} If $Y$ is a path and $f$ is an embedding, then $X$ is a path.
\item\label{image-path} If $X$ is path and $f$ is a quotient, then $Y$ is a path.\qed
\end{enumerate}
\end{lem}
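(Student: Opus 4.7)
The plan is to deduce both parts directly from Lemma~\ref{l:emb-quo-order-embeddings}, using the elementary observation that any poset which admits an order-embedding into a finite chain is itself a finite chain (chains are closed under taking subposets, and finiteness is inherited through injections).

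For part~\ref{sub-path}, I would argue as follows. Since $f\colon X\to Y$ is an embedding and $Y$ is a path, Lemma~\ref{l:emb-quo-order-embeddings}\ref{exists-embed} yields an order-embedding $\exists_f\colon \Emb{X}\hookrightarrow \Emb{Y}$ whose codomain $\Emb{Y}$ is, by assumption, a finite chain. The image of $\Emb{X}$ under $\exists_f$ is then a subposet of a finite chain, hence itself a finite chain; since $\exists_f$ is an order-embedding, $\Emb{X}$ is order-isomorphic to this image. Therefore $\Emb{X}$ is a finite chain, and $X$ is a path.

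For part~\ref{image-path}, the argument is symmetric, using Lemma~\ref{l:emb-quo-order-embeddings}\ref{pullback-embed}: the hypothesis that $f$ is a quotient gives an order-embedding $f^*\colon \Emb{Y}\hookrightarrow \Emb{X}$, and since $\Emb{X}$ is a finite chain (as $X$ is a path), the same reasoning shows $\Emb{Y}$ is a finite chain, so $Y$ is a path.

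There is no real obstacle here; the substantive work has already been done in Lemma~\ref{l:emb-quo-order-embeddings}, and the present lemma only records the immediate consequence that the class of paths is closed under embeddings (taking subobjects) and quotients (taking images). The only minor point worth stating explicitly in the write-up is the general fact that an order-embedding into a finite chain has as its domain a finite chain, which follows since an order-embedding is injective and reflects the order.
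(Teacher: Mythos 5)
Your proposal is correct and is exactly the paper's intended argument: the paper states this lemma as ``an immediate consequence of Lemma~\ref{l:emb-quo-order-embeddings}'' with no further proof, relying on precisely the order-embeddings $\exists_f$ and $f^*$ and the observation that a poset order-embedding into a finite chain is itself a finite chain.
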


\begin{rem}
Note that Lemma~\ref{l:emb-quo-order-embeddings}\ref{exists-embed}, and thus also Lemma~\ref{path-images-and-subs}\ref{sub-path}, holds more generally for any category equipped with a (possibly non-stable) proper factorisation system.
\end{rem}

A \emph{path embedding} is an embedding $P\emb X$ whose domain is a path. 
Given any object $X$ of $\C$, we let $\Path{X}$ be the sub-poset of $\Emb{X}$ consisting of the path embeddings. By Lemma~\ref{path-images-and-subs}\ref{image-path}, for any arrow $f\colon X\to Y$, the monotone map $\exists_f\colon \Emb{X}\to\Emb{Y}$ restricts to a monotone map
\[
\Path{f}\colon \Path{X}\to\Path{Y}, \ \ (m\colon P\emb X)\mapsto (\exists_f m\colon \exists_f P\emb Y).
\]
By the uniqueness up to isomorphism of factorisations, this assignment is functorial.

\subsection{Path categories}
We now define the notion of path category. Suppose for a moment that $\C$ admits all coproducts of small families of paths.
An object $X$ of $\C$ is said to be \emph{connected} if, for all non-empty small families of paths $\{P_i\mid i\in I\}$ in~$\C$, any morphism $X\to \coprod_{i\in I}{P_i}$ factors through some coproduct injection $P_j\to \coprod_{i\in I}{P_i}$.

\begin{rem}
In category theory, an object $a$ of a locally small category $\mathscr{A}$ is usually said to be connected if the functor $\hom_{\mathscr{A}}(a,-)$, from $\mathscr{A}$ into the category of sets and functions, preserves all coproducts that exist in $\mathscr{A}$. The definition that we have given above differs in two aspects: firstly we only consider (non-empty) coproducts of paths, and secondly we do not require that a morphism $X\to \coprod_{i\in I}{P_i}$ factor through a \emph{unique} coproduct injection. Note that, unlike with the traditional notion, with our definition the initial object is connected.
\end{rem}

\begin{defi}\label{def:path-cat}
A \emph{path category} is a category $\C$, equipped with a stable proper factorisation system, that satisfies the following conditions:
\begin{enumerate}[label=(\roman*)]
\item\label{ax:colimits} $\C$ has all coproducts of small families of paths.
\item\label{ax:connected} Every path in $\C$ is connected.
\item\label{ax:2-out-of-3} For any paths $P,Q,R$, if a composite $P\to Q \to R$ is a quotient, then so is $P\to Q$.
\end{enumerate}
\end{defi}

\begin{rem}
Item~\ref{ax:2-out-of-3} in the definition of path category is equivalent to the following condition: For any paths $P,Q,R$ and morphisms 
\begin{equation}\label{eq:2-of-3}
\begin{tikzcd}
P \arrow{r}{f} & Q \arrow{r}{g} & R,
\end{tikzcd}
\end{equation}
if any two of $f$, $g$, and $g\circ f$ are quotients, then so is the third. Thus, we shall refer to~\ref{ax:2-out-of-3} as the \emph{2-out-of-3 condition}. 

To see that the two conditions are equivalent, consider the diagram in equation~\eqref{eq:2-of-3} and suppose that item~\ref{ax:2-out-of-3} above holds. If $f,g$ are quotients, then so is $g\circ f$ by Lemma~\ref{l:factorisation-properties}\ref{compositions}, and if $g\circ f$ is a quotient then $g$ is a quotient by Lemma~\ref{l:factorisation-properties}\ref{cancellation-e} and $f$ is a quotient by item~\ref{ax:2-out-of-3}. Conversely, an application of Lemma~\ref{l:factorisation-properties}\ref{cancellation-e} show that item~\ref{ax:2-out-of-3} holds if, whenever any two of $f$, $g$, and $g\circ f$ are quotients, then so is the third.
\end{rem}

Note that any path category has an initial object $\0$, obtained as the coproduct of the empty family.

\begin{exas}
$\F$ and $\T$ are path categories.
Coproducts of forests are given by disjoint union; in particular, the initial object of $\F$ is the empty forest. For trees, non-empty coproducts are given by smash sum, with the roots identified, and the initial object of $\T$ is the one-element tree. Finite chains are connected in both $\F$ and $\T$. Finally, since forest morphisms preserve height, we see that $\F$ and $\T$ satisfy the 2-out-of-3 condition. A similar reasoning shows that $\R(\sg)$ and its subcategories mentioned in Example~\ref{RTex} are all path categories. Note that coproducts in $\R(\sg)$ are given by disjoint unions. In particular, $\R(\sg)$ has an initial object because we allow empty $\sg$-structures.
\end{exas}

\begin{thm}\label{thm:path-cat-functor-into-trees}
Let $\C$ be a path category.
The assignment $X\mapsto \Path{X}$ induces a functor $\Path\colon \C\to\T$ into the category of non-empty trees.
\end{thm}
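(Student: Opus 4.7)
The plan is to verify three things: $\Path{X}$ is a tree for every object $X$; $\Path{f}$ is a forest morphism for every morphism $f$; and the assignment respects identities and composition. Monotonicity of $\Path{f}$ and functoriality are routine consequences of the uniqueness up to isomorphism of $(\Q,\M)$-factorisations, so I focus on the first two claims.

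For the tree structure, the forest condition is straightforward: given $m\colon P\emb X$ in $\Path{X}$, Lemma~\ref{l:emb-quo-order-embeddings}\ref{exists-embed} together with Lemma~\ref{path-images-and-subs}\ref{sub-path} identify the downset $\down m$ in $\Path{X}$ with the finite chain $\Emb{P}$. Uniqueness of the root is more delicate. The first observation is that $\0$ is itself a path: given any embedding $f\colon Y\emb\0$, initiality provides $g\colon\0\to Y$ with $f\circ g=\id_\0$, and the monic property of $f$ then forces $g\circ f=\id_Y$, so $\Emb{\0}=\{\id_\0\}$. I would then argue that any minimal $m\colon P\emb X$ in $\Path{X}$ forces $\Emb{P}=\{\id_P\}$ (otherwise a proper sub-embedding would contradict minimality), whence the factorisation of the unique morphism $\0\to P$ collapses to a quotient $\0\epi P$; consequently $m\circ(\0\epi P)$ is itself a $(\Q,\M)$-factorisation of $\0\to X$, and uniqueness of such factorisations collapses all minimal elements to a single $\sim$-class.

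For $\Path{f}$, I would use the equivalent downset-bijection characterisation of forest morphisms recalled in Section~\ref{s:paths}. Write $f\circ m=(\exists_f m)\circ g$ with $g\colon P\epi\exists_f P$; under the identifications $\down m\cong\Emb{P}$ and $\down\Path{f}(m)\cong\Emb{\exists_f P}$ from the previous paragraph, the restriction of $\Path{f}$ corresponds to $\exists_g\colon\Emb{P}\to\Emb{\exists_f P}$. Lemma~\ref{l:adj-image-pullback} gives $\exists_g\dashv g^*$, and the identity $\exists_g\circ g^*=\id$ extracted from the proof of Lemma~\ref{l:emb-quo-order-embeddings}\ref{pullback-embed} makes $\exists_g$ surjective. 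The main obstacle is injectivity: suppose $m_1\leq m_2$ in $\Emb{P}$ with $\exists_g m_1=\exists_g m_2$, write $m_1=m_2\circ k$, and factor $g\circ m_i=e_i\circ q_i$. Comparing the two resulting factorisations of $g\circ m_1$, together with the isomorphism $\alpha$ witnessing $e_1\sim e_2$, yields $q_2\circ k=\alpha\circ q_1$, exhibiting $q_2\circ k$ as a quotient. The 2-out-of-3 axiom of Definition~\ref{def:path-cat}\ref{ax:2-out-of-3} (applied to the paths $P_1,P_2,T_2$, which are paths by Lemma~\ref{path-images-and-subs}) then forces $k$ itself to be a quotient, and since $k$ is already an embedding it is an isomorphism, so $m_1=m_2$. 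This is the sole place where axiom~(iii) enters; once the downset bijection is in place, root preservation comes for free by applying it at a minimum, and functoriality is immediate from uniqueness of factorisations.
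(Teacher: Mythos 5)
Your proposal is correct and follows essentially the same route as the paper: the paper likewise splits the argument into showing each $\Path{X}$ is a tree (Lemma~\ref{l:PathX-tree}, using that $\0$ is a path and that $\down m \cong \Emb{P}$) and that $\Path{f}$ restricts to a bijection on downsets (Proposition~\ref{p:Path-f-tree-morphism}), with surjectivity obtained by pulling back along the quotient $P \epi \exists_f P$ and injectivity via the 2-out-of-3 axiom exactly as you argue. The only cosmetic differences are that the paper exhibits the root directly as the image $\widetilde{\0} \emb X$ of $\0 \to X$ and verifies leastness with a lifting square, where you instead deduce uniqueness of minimal elements from uniqueness of factorisations, and that you cite the identity $\exists_g \circ g^* = \id$ where the paper redoes the corresponding pullback computation in situ.
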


To prove this theorem, we start by showing that each poset $\Path{X}$ is an object of $\T$.

\begin{lem}\label{l:PathX-tree}
Let $\C$ be a path category. For any object $X$ of $\C$, $\Path{X}$ is a non-empty tree.
\end{lem}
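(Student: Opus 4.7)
The plan is to prove three things: non-emptiness of $\Path{X}$, the existence of a minimum element (which will be the unique root), and that every principal downset in $\Path{X}$ is a finite chain. The minimum and non-emptiness will both arise from the $(\Q,\M)$-factorisation of the unique morphism out of the initial object $\0$, while the downset-chain condition will transfer from $P$ to $X$ via the order-embedding provided by Lemma~\ref{l:emb-quo-order-embeddings}\ref{exists-embed}.

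First I would show that $\0$ itself is a path. For any embedding $m\colon S\emb \0$, initiality supplies a morphism $u\colon \0\to S$, and $m\circ u = \id_{\0}$ by initiality. Then $m\circ u\circ m = m = m\circ \id_S$, so monomorphicity of $m$ forces $u\circ m = \id_S$, making $m$ an isomorphism. Thus $\Emb{\0}$ is the one-element chain $\{[\id_{\0}]\}$ and $\0$ is a path. Now factorise the unique arrow $\0\to X$ as $\0 \epi I_X \emb X$. By Lemma~\ref{path-images-and-subs}\ref{image-path}, $I_X$ is a path, so $(I_X \emb X) \in \Path{X}$ and $\Path{X}$ is non-empty.

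Next I would verify that $I_X$ is a minimum of $\Path{X}$. Given any $m\colon P\emb X$ in $\Path{X}$, consider the square with top edge the unique map $\0\to P$, left edge $\0\epi I_X$, right edge $m$, and bottom edge $I_X\emb X$. It commutes because both composites $\0\to X$ coincide with the unique such morphism. Orthogonality of $(\Q,\M)$ then produces a diagonal $d\colon I_X\to P$ with $m\circ d = (I_X\emb X)$, which is precisely $I_X\leq m$ in $\Emb{X}$, hence in $\Path{X}$. So $I_X$ is a minimum, and in particular the unique root.

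Finally, for the forest property, fix $m\colon P\emb X$ in $\Path{X}$ and consider the principal downset $\down m$. By Lemma~\ref{l:emb-quo-order-embeddings}\ref{exists-embed}, the monotone map $\exists_m\colon \Emb{P}\to \Emb{X}$ is an order-embedding, and its image is exactly $\{n\in \Emb{X}\mid n\leq m\}$: any such $n$ factors as $n = m\circ k$, and Lemma~\ref{l:factorisation-properties}\ref{cancellation-m} makes $k$ an embedding. Since $P$ is a path, Lemma~\ref{path-images-and-subs}\ref{sub-path} gives $\Path{P}=\Emb{P}$, which is a finite chain by definition, so $\down m$ is a finite chain too. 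Combined with the minimum $I_X$, this shows $\Path{X}$ is a non-empty tree. The only delicate point is checking that $\0$ is itself a path and constructing the canonical root $I_X$; the rest is a routine translation of the path condition on $P$ into the local structure of $\Path{X}$ via the order-embedding $\exists_m$.
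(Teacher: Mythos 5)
Your proof is correct and follows essentially the same route as the paper's: establish that $\0$ is a path, factorise $\0\to X$ to obtain the root, use the lifting property to show it is the minimum, and transfer the finite-chain condition from $\Emb{P}$ to the downsets of $\Path{X}$ via the order-embedding of Lemma~\ref{l:emb-quo-order-embeddings}\ref{exists-embed}. The only differences are expository (you spell out why embeddings into $\0$ are isomorphisms and why the image of $\exists_m$ is the full downset), so there is nothing to correct.
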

\begin{proof}
By Lemmas~\ref{l:emb-quo-order-embeddings}\ref{exists-embed} and~\ref{path-images-and-subs}\ref{sub-path}, the sub-poset of $\Path{X}$ consisting of those elements that are below a given $P\in\Path{X}$ is isomorphic to $\Emb{P}$. Since the latter is a finite chain, $\Path{X}$ is a forest. Now, let $\0\epi \widetilde{\0} \xemb{m} X$ be the (quotient, embedding) factorisation of the unique morphism from the initial object to $X$. We claim that $m\colon \widetilde{\0}\emb X$ is the unique root of~$\Path{X}$. 

Note that $\0$ is a path: just observe that any embedding $S\emb \0$ admits the unique morphism $\0\to S$ as a right inverse, and thus is a retraction. It follows that $S\cong \0$, \iec $\Emb{\0}$ is the one-element poset. In particular, $\0$ is a path. 
Thus, $\widetilde{\0}$ is a path by Lemma~\ref{path-images-and-subs}\ref{image-path}. 
We show that $m\colon \widetilde{\0}\emb X$ is the least element of $\Path{X}$. 
If $m'\colon P\emb X$ is any path embedding, we have a commutative square as follows.
\[\begin{tikzcd}
\0 \arrow[twoheadrightarrow]{r} \arrow{d} & \widetilde{\0} \arrow[rightarrowtail]{d}{m} \\
P \arrow[rightarrowtail]{r}{m'} & X
\end{tikzcd}\]
Hence there exists a diagonal filler $d\colon \widetilde{\0}\to P$, and so $m\leq m'$ in $\Path{X}$.
\end{proof}

We next show that the functor $\Path$ sends morphisms in a path category to tree morphisms, thus completing the proof of Theorem~\ref{thm:path-cat-functor-into-trees}.
\begin{prop}\label{p:Path-f-tree-morphism}
Let $\C$ be a path category. For any arrow $f$ in $\C$, $\Path{f}$ is a tree morphism.
\end{prop}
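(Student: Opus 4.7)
The plan is to verify the equivalent characterisation of a forest morphism recalled in Section~\ref{s:paths}. Since $\Path{f}$ is already monotone (being a restriction of $\exists_f$), and since $\Path{X}$ and $\Path{Y}$ are trees by Lemma~\ref{l:PathX-tree}, it suffices to show that, for every $m\in\Path{X}$, the restriction of $\Path{f}$ yields a bijection between $\down m$ and $\down \Path{f}(m)$.

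Fix $m\colon P\emb X$ in $\Path{X}$, and let $P\xepi{e}\bar{P}\xemb{\bar{m}} Y$ be the factorisation of $f\circ m$, so that $\Path{f}(m)=\bar{m}$. Because $m$ is a monomorphism, any $n\leq m$ in $\Path{X}$ admits a unique presentation $n=m\circ k$, and Lemmas~\ref{l:factorisation-properties}\ref{cancellation-m} and~\ref{path-images-and-subs}\ref{sub-path} ensure that $k$ is itself a path embedding into $P$; this yields an order-isomorphism $\down m \cong \Emb{P}$, and an analogous identification $\down\bar{m}\cong\Emb{\bar{P}}$. Factorising $f\circ n=\bar{m}\circ e\circ k$ then shows that, under these identifications, the restriction $\Path{f}|_{\down m}$ coincides with the monotone map $\exists_e\colon\Emb{P}\to\Emb{\bar{P}}$. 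The problem thus reduces to the following claim: \emph{whenever $e\colon P\epi Q$ is a quotient between paths, $\exists_e$ is a bijection}.

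Surjectivity of this $\exists_e$ comes essentially for free, since the adjunction $\exists_e\dashv e^*$ of Lemma~\ref{l:adj-image-pullback}, combined with the reasoning in the proof of Lemma~\ref{l:emb-quo-order-embeddings}\ref{pullback-embed}, yields $\exists_e\circ e^*=\id$. Injectivity is the main obstacle, and this is precisely where I plan to invoke axiom~\ref{ax:2-out-of-3}. Suppose $k_1\leq k_2$ in $\Emb{P}$ with $k_1=k_2\circ l$ for some embedding $l\colon Q_1\emb Q_2$, and assume $\exists_e k_1=\exists_e k_2$. Writing $e\circ k_2=a\circ q$ for its (quotient, embedding) factorisation, and comparing with the factorisation induced on $e\circ k_1=a\circ(q\circ l)$, the equality $\exists_e k_1=\exists_e k_2$ forces $q\circ l\colon Q_1\to A$ to be itself a quotient. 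Since $Q_1$, $Q_2$, and $A$ are all paths by Lemma~\ref{path-images-and-subs}, the 2-out-of-3 condition~\ref{ax:2-out-of-3} applied to $Q_1\xrightarrow{l}Q_2\xepi{q}A$ then implies that $l$ is a quotient; being simultaneously an embedding, $l$ is an isomorphism by Lemma~\ref{l:factorisation-properties}\ref{isos}, and so $k_1=k_2$ in $\Emb{P}$, completing the argument.
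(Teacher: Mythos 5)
Your proof is correct and follows essentially the same route as the paper's: surjectivity of $\down m\to\down \Path{f}(m)$ is obtained by pulling back along the quotient part of the factorisation of $f\circ m$ (your identity $\exists_e\circ e^*=\id$ is exactly the paper's inline pullback computation, already recorded in the proof of Lemma~\ref{l:emb-quo-order-embeddings}\ref{pullback-embed}), and injectivity comes from comparability in the finite chain $\Emb{P}$ together with the 2-out-of-3 condition~\ref{ax:2-out-of-3}. The only difference is cosmetic: you first reduce to the bijectivity of $\exists_e$ for a quotient $e$ between paths, whereas the paper argues directly with the embeddings into $X$ and $Y$.
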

\begin{proof}
Since $\Path{f}$ is monotone, it suffices to prove that it preserves the height of elements. In turn, this is equivalent to saying that, for any path embedding $m\colon P\emb X$, the induced map $\Path{f}\colon \down m\to \down \Path{f}(m)$ is a bijection.

We start by establishing surjectivity, \ie $\down\Path{f}(m)\subseteq \Path{f}(\down m)$. Let $(e,j)$ be the (quotient, embedding) factorisation of $f\circ m$. If $n\colon Q\emb Y$ is a path embedding such that $n\leq \Path{f}(m)$ in $\Path{Y}$, there exists an embedding $k\colon Q\emb \exists_f P$ such that the left-hand diagram below commutes. Consider the pullback of $k$ along $e$, as displayed in the right-hand diagram below.
\begin{center}
\begin{tikzcd}
P \arrow[rightarrowtail]{r}{m} \arrow[twoheadrightarrow]{dr}[swap]{e} & X \arrow{r}{f} & Y & \\
{} & \exists_f P \arrow[rightarrowtail]{ur}[description]{j} & & Q \arrow[rightarrowtail]{ul}[swap]{n} \arrow[rightarrowtail]{ll}[swap]{k}
\end{tikzcd}
\ \ \ \ \ \ \ \ \ \ \ 
\begin{tikzcd}
R \arrow[twoheadrightarrow]{r} \arrow[rightarrowtail]{d} \arrow[dr, phantom, "\lrcorner", very near start] & Q \arrow[rightarrowtail]{d}{k} \\
P \arrow[twoheadrightarrow]{r}{e} & \exists_f P
\end{tikzcd}
\end{center}
Then $R$ is a path by Lemmas~\ref{l:factorisation-properties}\ref{pullbacks} and~\ref{path-images-and-subs}\ref{sub-path}, and the composite $i\colon R\emb P\emb X$ is a path embedding which is below $m$ in the poset $\Path{P}$. Further, the top horizontal arrow in the pullback square is a quotient and so, by the uniqueness up to isomorphism of factorisations, the (quotient, embedding) factorisation of $f\circ i$ is $R\epi Q \xemb{n} Y$, \iec $\Path{f}(i)=n$.

For injectivity, let $m_1\colon P_1\emb X$ and $m_2\colon P_2\emb X$ be path embeddings in $\down m$. Since $P$ is a path, $m_1$ and $m_2$ are comparable in the order of $\Path{X}$. Assume without loss of generality that $m_1\leq m_2$, \iec there exists an embedding $k\colon P_1\emb P_2$ such that $m_1=m_2\circ k$. 
If $\Path{f}(m_1)=\Path{f}(m_2)$, there exists an isomorphism $k'\colon \exists_f P_1\to \exists_f P_2$ making the left-hand diagram below commute. 
\begin{center}
\begin{tikzcd}
P_1 \arrow[twoheadrightarrow]{r} \arrow[rightarrowtail]{d}[swap]{m_1} \arrow[dd, relay arrow=-3ex, "k", swap] & \exists_f P_1 \arrow[rightarrowtail]{d} \arrow[dd, relay arrow=3ex, "k'"] \\
X \arrow{r}{f} & Y \\
P_2 \arrow[twoheadrightarrow]{r} \arrow[rightarrowtail]{u}{m_2} & \exists_f P_2 \arrow[rightarrowtail]{u}
\end{tikzcd}
\ \ \ \ \ \ \ \ \ \ 
\begin{tikzcd}
P_1 \arrow[rightarrowtail]{r}{k} \arrow[rr, relay arrow=2ex, "", twoheadrightarrow] & P_2 \arrow[twoheadrightarrow]{r} & \exists_f P_2
\end{tikzcd}
\end{center}
In particular, the diagram on the right-hand side above commutes, where the top horizontal arrow is the composition of $P_1\epi \exists_f P_1$ with the isomorphism $k'\colon \exists_f P_1 \to \exists_f P_2$. By the 2-out-of-3 condition, $k$ is an isomorphism, and so $m_1=m_2$ in $\Path{X}$.
\end{proof}

\begin{rem}
Direct inspection shows that Theorem~\ref{thm:path-cat-functor-into-trees} holds, more generally, for any category $\C$, equipped with a stable proper factorisation system, that satisfies condition~\ref{ax:2-out-of-3} in Definition~\ref{def:path-cat} and admits an initial object. In the absence of an initial object, the proof of the aforementioned theorem shows that $\Path$ yields a functor into the category $\F$ of forests.
\end{rem}

We conclude this section with the following observation, which will play an important role in the context of arboreal categories (\cf Lemma~\ref{l:arboreal-consequences}, and Propositions~\ref{pr:perfect-lattice-of-strong-subs} and~\ref{p:b-and-f-implies-bisimilar}):

\begin{lem}\label{l:path-cat-suprema}
The following statements hold for any object $X$ of a path category $\C$:
\begin{enumerate}[label=(\alph*)]
\item\label{suprema-of-paths} Any subset $\U\subseteq \Path{X}$ admits a supremum $\bigvee \U$ in $\Emb{X}$.
\item\label{paths-j-irred} For any path embedding $m\in\Path{X}$ and non-empty set $\mathcal{S}\subseteq \Emb{X}$, if $m= \bigvee\mathcal{S}$ then $m\in\mathcal{S}$.
\item\label{at-most-one-emb} Between any two paths there is at most one embedding.
\end{enumerate} 
\end{lem}
\begin{proof}
For item~\ref{suprema-of-paths}, consider a set of path embeddings 
\[
\U=\{m_i\colon P_i\emb X\mid i\in I\}\subseteq \Path{X}.
\] 
Let $S\coloneqq \coprod_{i\in I} P_i$ be the coproduct in $\C$ of the paths $P_i$ and consider the (quotient, embedding) factorisation of the unique morphism $\delta\colon S\to X$ whose component at $P_i$ is $m_i$:
\[\begin{tikzcd}
S \arrow[twoheadrightarrow]{r}{e} \arrow[rr, relay arrow=2ex, "\delta"] & T \arrow[rightarrowtail]{r}{m} & X
\end{tikzcd}\] 
Each path embedding $m_i\in \U$ factors through $m$, thus $m$ is an upper bound for $\U$. We claim that $m$ is the least upper bound, \iec $m=\bigvee{\U}$ in $\Emb{X}$. Suppose that all path embeddings in $\U$ factor through some embedding $m'\colon T'\emb X$. By the universal property of $S$, we get a morphism $\phi\colon S\to T'$. By uniqueness of $\delta$ we obtain $m'\circ \phi=\delta$, and so the following square commutes.
\[\begin{tikzcd}
S \arrow{d}[swap]{\phi} \arrow[twoheadrightarrow]{r}{e} & T \arrow[rightarrowtail]{d}{m} \\
T' \arrow[rightarrowtail]{r}{m'} & X
\end{tikzcd}\]
Therefore, there exists a diagonal filler $T\to T'$. In particular, the commutativity of the lower triangle entails that $m\leq m'$, as was to be proved.

For item~\ref{paths-j-irred}, let $m\colon P\emb X$ be a path embedding and let $\mathcal{S}\subseteq \Emb{X}$ be a non-empty set such that $m=\bigvee{\mathcal{S}}$. Then $n\leq m$ for each $n\in\mathcal{S}$. Since $P$ is a path, $\down m$ is a finite chain in $\Emb{X}$ and so $\mathcal{S}$ must be a finite set whose largest element is $m$. In particular, $m\in\mathcal{S}$.

Finally, for item~\ref{at-most-one-emb}, since there is at most one tree morphism between any two finite chains, it suffices to show that $\Path\colon\C\to\T$ is faithful on embeddings between paths. That is, whenever $m,n\colon P\emb Q$ are embeddings between paths, $\Path{m}=\Path{n}$ implies $m=n$. In turn, this follows at once by evaluating $\Path{m}$ and $\Path{n}$ at $\id_P$.
\end{proof}

\begin{rem}
All results above concerning path categories remain valid if we drop item~\ref{ax:connected} in Definition~\ref{def:path-cat}, stating that every path is connected. For this reason, this condition was not included in the original definition of path category in \cite{AR2021icalp}. We do include it here as this allows us to give a conceptually simpler definition of arboreal category, \cf Definition~\ref{def:arboreal-cat}.
\end{rem}

\section{Pathwise Embeddings, Open Maps, and Bisimulations}\label{s:openmaps}
Throughout this section, we fix a category $\C$ equipped with a stable proper factorisation system.

\subsection{Pathwise embeddings and open maps}
Following \cite{AS2021}, let us say that a morphism $f\colon X\to Y$ in $\C$ is a \emph{pathwise embedding} if, for all path embeddings $m\colon P\emb X$, the composite $f\circ m$ is a path embedding. Hence, $\Path{f}(m)=f\circ m$ for all $m\in\Path{X}$. 
Following again \cite{AS2021}, we introduce a notion of open map---inspired by~\cite{JNW1993}---that, combined with the concept of pathwise embedding, will allow us to define an appropriate notion of bisimulation.
A morphism $f\colon X\to Y$ in $\C$ is said to be \emph{open} if it satisfies the following path-lifting property: Given any commutative square
\[\begin{tikzcd}
P \arrow[rightarrowtail]{r} \arrow[rightarrowtail]{d} & Q \arrow[rightarrowtail]{d} \arrow[dashed]{dl} \\
X \arrow{r}{f} & Y
\end{tikzcd}\]
with $P,Q$ paths, there exists a diagonal filler $Q\to X$, \ie an arrow $Q\to X$ making the two triangles commute. Note that, if it exists, such a diagonal filler is an embedding because so is the right vertical arrow (see Lemma~\ref{l:factorisation-properties}\ref{cancellation-m}).

The previous definition of open map differs from the one given in~\cite{JNW1993} because we require that, in the square above, the top horizontal morphism and the vertical ones be embeddings. 
However, the next lemma shows that a morphism is an open pathwise embedding if, and only if, it is open in the sense of~\cite{JNW1993} (with respect to the full subcategory of $\C$ defined by the paths). \Cf also Remark~\ref{rem:arboreal-open-vs-JNW} below.

\begin{lem}\label{l:open-pwemb-JNW-open}
The following statements are equivalent for any morphism $f\colon X\to Y$:
\begin{enumerate}[label=(\arabic*)]
\item $f$ is an open pathwise embedding.
\item Given any commutative square
\[\begin{tikzcd}
P \arrow{r} \arrow{d} & Q \arrow{d} \arrow[dashed]{dl} \\
X \arrow{r}{f} & Y
\end{tikzcd}\]
with $P,Q$ paths, there exists a diagonal filler $Q\to X$.
\end{enumerate}
\end{lem}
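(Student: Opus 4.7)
The implication (2) $\Rightarrow$ (1) is immediate by restricting to the case where the horizontal and vertical maps in the square are embeddings, so the content lies entirely in (1) $\Rightarrow$ (2). The plan is to reduce an arbitrary square with paths on top to a square of \emph{embeddings} between paths, to which the openness hypothesis can be directly applied.

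Given a commutative square with $g \colon P \to Q$ on top, $u \colon P \to X$ and $v \colon Q \to Y$ vertical, and $P, Q$ paths, I would first apply the factorisation system to $u$, to $v$, and to the composite $e_v \circ g$, obtaining
\[
u = m_u \circ e_u, \quad v = m_v \circ e_v, \quad e_v \circ g = m_h \circ e_h,
\]
with quotients $e_u, e_v, e_h$ and embeddings $m_u \colon \widetilde{P}\emb X$, $m_v \colon \widetilde{Q}\emb Y$, $m_h \colon \overline{P}\emb \widetilde{Q}$. By Lemma~\ref{path-images-and-subs}\ref{image-path}, the objects $\widetilde{P}, \widetilde{Q}, \overline{P}$ are all paths, so $m_u, m_v, m_h$ are \emph{path} embeddings. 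Crucially, since $f$ is a pathwise embedding, $f \circ m_u$ is an embedding; and $m_v \circ m_h$ is an embedding by Lemma~\ref{l:factorisation-properties}\ref{compositions}. The commutativity of the original square then gives the identity
\[
(f \circ m_u) \circ e_u = (m_v \circ m_h) \circ e_h,
\]
which exhibits two (quotient, embedding) factorisations of the same morphism $P \to Y$. By uniqueness of such factorisations, there is an isomorphism $\iota \colon \widetilde{P} \to \overline{P}$ with $\iota \circ e_u = e_h$ and $m_v \circ m_h \circ \iota = f \circ m_u$.

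Setting $\overline{m} \coloneqq m_h \circ \iota \colon \widetilde{P} \emb \widetilde{Q}$, I obtain a commutative square of embeddings between paths with $f$ on the bottom, so openness of $f$ supplies a diagonal filler $d' \colon \widetilde{Q} \to X$ satisfying $d' \circ \overline{m} = m_u$ and $f \circ d' = m_v$. I would then take $d \coloneqq d' \circ e_v \colon Q \to X$ as the required diagonal. Verification is routine: $f \circ d = m_v \circ e_v = v$, and $d \circ g = d' \circ (e_v \circ g) = d' \circ m_h \circ e_h = d' \circ m_h \circ \iota \circ e_u = m_u \circ e_u = u$. The main subtlety of the argument, and the step most likely to trip one up, is not any single lifting step but rather organising the three factorisations so that one can invoke uniqueness at exactly the right point to produce the embedding $\overline{m}$; once $\iota$ is in hand, the rest is bookkeeping.
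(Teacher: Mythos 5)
Your argument is correct and follows essentially the same route as the paper: factor the two vertical maps, reduce to a commutative square of embeddings between paths, apply openness of $f$, and compose the resulting filler with the quotient part of $v$. The only difference is how the top embedding $\widetilde{P}\emb\widetilde{Q}$ is produced --- the paper lifts the quotient part of $u$ against the embedding part of $v$ and then uses the pathwise-embedding hypothesis with Lemma~\ref{l:factorisation-properties}\ref{cancellation-m} to see that the filler is an embedding, whereas you introduce a third factorisation and invoke uniqueness of factorisations; both are valid.
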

\begin{proof}
$(1)\Rightarrow (2)$. Consider a commutative square
\begin{equation}\label{eq:strong-open-square}
\begin{tikzcd}
P \arrow{r}{h} \arrow{d}[swap]{g_1} & Q \arrow{d}{g_2}  \\
X \arrow{r}{f} & Y
\end{tikzcd}
\end{equation}
where $P$ and $Q$ are paths. Let $(e_1,m_1)$ and $(e_2,m_2)$ be the (quotient, embedding) factorisations of $g_1$ and $g_2$, respectively. Because $e_1$ has the left lifting property with respect to $m_2$, the diagram on the left-hand side below admits a diagonal filler $d\colon P'\to Q'$.
\begin{center}
\begin{tikzcd}[row sep=3em]
P \arrow[twoheadrightarrow]{d}[swap]{e_1} \arrow{r}{h} & Q \arrow[twoheadrightarrow]{r}{e_2} & Q' \arrow[rightarrowtail]{d}{m_2} \\
P' \arrow[rightarrowtail]{r}{m_1} \arrow[dashed]{urr}[description]{d} & X \arrow{r}{f} & Y
\end{tikzcd}
\ \ \ \ \ \ \ \ \ 
\begin{tikzcd}
P' \arrow[rightarrowtail]{r}{d} \arrow[rightarrowtail]{d}[swap]{m_1} & Q' \arrow[rightarrowtail]{d}{m_2} \arrow[dashed]{dl}[description]{k} \\
X \arrow{r}{f} & Y
\end{tikzcd}
\end{center}
Note that $P'$ and $Q'$ are paths by Lemma~\ref{path-images-and-subs}\ref{image-path}. Further, $f\circ m_1$ is an embedding because $f$ is a pathwise embedding, and so $d$ is an embedding by Lemma~\ref{l:factorisation-properties}\ref{cancellation-m}. As $f$ is open, the square on the right-hand side above admits a diagonal filler $k\colon Q'\to X$. The composite morphism $k\circ e_2\colon Q \to X$ is a diagonal filler of diagram~\eqref{eq:strong-open-square}. Just observe that
\[
(k\circ e_2) \circ h = k \circ d\circ e_1 = m_1 \circ e_1 = g_1
\]
and 
\[
f\circ (k\circ e_2) = m_2 \circ e_2 = g_2.
\]

$(2)\Rightarrow (1)$. It suffices to show that $f$ is a pathwise embedding. Given a path embedding $m\colon P \emb X$, take the (quotient, embedding) factorisation of $f\circ m \colon P\to Y$:
\[\begin{tikzcd}
P \arrow[twoheadrightarrow]{r}{h} \arrow[rightarrowtail]{d}[swap]{m} & Q \arrow[rightarrowtail]{d}{n} \\
X \arrow{r}{f} & Y
\end{tikzcd}\]
Note that $Q$ is a path by Lemma~\ref{path-images-and-subs}\ref{image-path}, and so the square admits a diagonal filler $d\colon Q \to X$. The identity $d\circ h = m$ implies that $h$ is an embedding, hence the composite $n\circ h = f\circ m$ is also an embedding. That is, $f$ is a pathwise embedding.
\end{proof}

For a pathwise embedding $f\colon X\to Y$, openness can be characterised in terms of the corresponding morphism of trees $\Path{f}\colon \Path{X}\to \Path{Y}$:
\begin{prop}\label{p:pwe-open-dual-p-morph}
The following statements are equivalent for any pathwise embedding $f\colon X\to Y$:
\begin{enumerate}[label=(\arabic*)]
\item $f$ is open.
\item $\Path{f}$ is a p-morphism, \ie $\Path{f}(\up m)=\up \Path{f}(m)$ for all $m\in \Path{X}$.
\end{enumerate}
\end{prop}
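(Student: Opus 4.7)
The first observation is that since $\Path{f}$ is a tree morphism (Proposition~\ref{p:Path-f-tree-morphism}), it is monotone and hence $\Path{f}(\up m) \subseteq \up \Path{f}(m)$ holds automatically. So the content of being a p-morphism is the reverse inclusion: every path embedding $n$ above $\Path{f}(m) = f \circ m$ in $\Path{Y}$ has a preimage under $\Path{f}$ above $m$. I plan to prove the equivalence by showing both directions using this reformulation, exploiting the hypothesis that $f$ is a pathwise embedding, so that $\Path{f}$ acts by post-composition.

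\textbf{Direction (1)$\Rightarrow$(2).} Suppose $f$ is open, fix $m \colon P \emb X$ in $\Path{X}$, and let $n \colon Q \emb Y$ represent an element of $\up \Path{f}(m)$, so there is an embedding $k \colon P \emb Q$ with $n \circ k = f \circ m$. This provides a commutative square of embeddings with $P, Q$ paths to which openness applies, yielding a diagonal filler $d \colon Q \to X$ with $d \circ k = m$ and $f \circ d = n$. Since $f \circ d = n$ is an embedding, Lemma~\ref{l:factorisation-properties}\ref{cancellation-m} shows $d$ is an embedding, hence a path embedding with $d \geq m$. Moreover $\Path{f}(d) = f \circ d = n$, so $n \in \Path{f}(\up m)$.

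\textbf{Direction (2)$\Rightarrow$(1).} Assume $\Path{f}$ is a p-morphism and consider a commuting square of embeddings
\[\begin{tikzcd}
P \arrow[rightarrowtail]{r}{h} \arrow[rightarrowtail]{d}[swap]{m} & Q \arrow[rightarrowtail]{d}{n} \\
X \arrow{r}{f} & Y
\end{tikzcd}\]
with $P, Q$ paths. Since $f \circ m = n \circ h$ factors through $n$, we have $\Path{f}(m) \leq n$ in $\Path{Y}$. The p-morphism property provides $m' \colon Q' \emb X$ with $m' \geq m$ and $\Path{f}(m') = n$; unpacking, this gives an embedding $k \colon P \emb Q'$ with $m' \circ k = m$ and an isomorphism $\phi \colon Q' \to Q$ with $n \circ \phi = f \circ m'$. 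I will then set $d \coloneqq m' \circ \phi^{-1} \colon Q \to X$. The equation $f \circ d = n$ is immediate; for $d \circ h = m$ it suffices to show $h = \phi \circ k$, and this follows from $n \circ h = f \circ m = f \circ m' \circ k = n \circ \phi \circ k$ together with the fact that $n$ is a monomorphism.

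The main obstacle is really bookkeeping: elements of $\Path{Y}$ are equivalence classes, so the identity $\Path{f}(m') = n$ only yields $f \circ m'$ and $n$ up to a connecting isomorphism, and this isomorphism has to be inserted in the right place to manufacture the diagonal filler. Once one tracks it correctly, both directions reduce to diagram-chasing plus one application of cancellation of monomorphisms.
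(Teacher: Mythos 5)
Your proof is correct and follows essentially the same route as the paper's: in (1)$\Rightarrow$(2) the factorisation $f\circ m = n\circ k$ sets up the square to which openness applies, and in (2)$\Rightarrow$(1) the diagonal filler is manufactured as $m'\circ\phi^{-1}$ exactly as in the paper (which calls the isomorphism $k$ and verifies $k\circ l = s$ via the same monomorphism-cancellation argument you use for $h = \phi\circ k$). The bookkeeping with the connecting isomorphism that you flag as the main obstacle is indeed the only delicate point, and you handle it the same way the paper does.
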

\begin{proof}
$(1)\Rightarrow (2)$. Suppose $f$ is open, and let $m\colon P\emb X$ be an arbitrary element of $\Path{X}$. The inclusion $\Path{f}(\up m)\subseteq\up \Path{f}(m)$ follows at once from monotonicity of $\Path{f}$. For the converse inclusion, assume that $n\colon Q\emb Y$ is an element of $\Path{Y}$ above $\Path{f}(m)=f\circ m$. Then, the composite $f\circ m$ must factor through $n$, say $f\circ m=n\circ s$ for some embedding $s\colon P\emb Q$. Hence, we have a commutative square as displayed below.
\[\begin{tikzcd}
P \arrow[rightarrowtail]{r}{s} \arrow[rightarrowtail]{d}[swap]{m} & Q \arrow[rightarrowtail]{d}{n} \arrow[dashed,rightarrowtail]{dl}[swap,outer sep=-1pt, description]{m'} \\
X \arrow{r}{f} & Y
\end{tikzcd}\]
Since $f$ is open, there exists a diagonal filler $m'\colon Q\emb X$. The commutativity of the upper triangle entails that $m'\in\up m$, while the commutativity of the lower triangle implies that $\Path{f}(m')=n$. Therefore, $\up \Path{f}(m) \subseteq \Path{f}(\up m)$.

$(2)\Rightarrow (1)$. Assume $\Path{f}$ is a p-morphism and consider a commutative square as follows,
\[\begin{tikzcd}
P \arrow[rightarrowtail]{r}{s} \arrow[rightarrowtail]{d}[swap]{m} & Q \arrow[rightarrowtail]{d}{n} \\
X \arrow{r}{f} & Y
\end{tikzcd}\]
where $P$ and $Q$ are paths. We have $\Path{f}(m)\leq n$ in $\Path{Y}$, and thus there exists a path embedding $m'\colon P'\emb X$ satisfying $m\leq m'$ and $\Path{f}(m')=n$. The inequality $m\leq m'$ amounts to saying that $m=m'\circ l$ for some embedding $l\colon P\emb P'$, while the equality $\Path{f}(m')=n$ means that $f\circ m'=n\circ k$ for some isomorphism $k\colon P'\to Q$. We have a commutative diagram as displayed below.
\[\begin{tikzcd}
P \arrow[rightarrowtail]{dr}[swap]{m} \arrow[rightarrowtail]{r}{l} \arrow[rr, relay arrow=2ex, "s", rightarrowtail] & P' \arrow{r}{k} \arrow[rightarrowtail]{d}{m'} & Q \arrow[rightarrowtail]{d}{n} \\
{} & X \arrow{r}{f} & Y
\end{tikzcd}\]
The composite arrow $m'\circ k^{-1}\colon Q\to X$ satisfies $m'\circ k^{-1}\circ s=m$ and $f\circ m'\circ k^{-1}=n$, thus showing that $f$ is open. Just observe that the former equation holds by Lemma~\ref{l:path-cat-suprema}\ref{at-most-one-emb}, while the latter follows at once by diagram chasing.
\end{proof}

\begin{rem}\label{rem:arboreal-open-vs-JNW}
Concerning the notion of open morphism, and the ensuing notion of bisimulation (\cf Section~\ref{s:bisimulations}), our approach differs from the one adopted in \cite{JNW1993} in that our notion of path is completely determined by the factorisation system, and so is the notion of open morphism. While Lemma~\ref{l:open-pwemb-JNW-open} shows that the two notions of open morphism are equivalent for pathwise embeddings, in our setting it is useful to consider open morphisms and pathwise embeddings separately. For example, Proposition~\ref{p:pwe-open-dual-p-morph} would no longer hold if we dropped the requirement that $f$ be a pathwise embedding and replaced the notion of openness with the stronger one in \cite{JNW1993}. On the other hand, pathwise embeddings themselves have a game-theoretic and logical counterpart, \cf Proposition~\ref{pr:strong-exist-game} and Theorem~\ref{t:logic-fragm-character}.
\end{rem}

\subsection{Bisimulations}\label{s:bisimulations}
A \emph{bisimulation} between objects $X,Y$ of $\C$ is a span of open pathwise embeddings 
\[
X\leftarrow Z \rightarrow Y
\] 
in $\C$. If such a bisimulation exists, we say that $X$ and $Y$ are \emph{bisimilar}.

\begin{exas}
This definition directly generalizes that in \cite{AS2021}, and the notions of bisimulation given there for the Ehrenfeucht-\Fraisse, pebbling and modal comonads are the special cases arising in the categories $\RTk(\sg)$, $\RPk(\sg)$ and $\RMk(\sg)$ respectively, as described in Example~\ref{RTex}.
\end{exas}

\begin{rem}
Let $\C$ be a path category. If we regard trees as Kripke models where the accessibility relation is the tree order, then it follows from Theorem~\ref{thm:path-cat-functor-into-trees} and Proposition~\ref{p:pwe-open-dual-p-morph} that a span of pathwise embeddings 
\[\begin{tikzcd}[column sep=1.8em]
X & Z \arrow{l}[swap]{f} \arrow{r}{g} & Y
\end{tikzcd}\] 
in $\C$ is a bisimulation if, and only if, $\Path{X} \xleftarrow{\Path{f}} \Path{Z} \xrightarrow{\Path{g}} \Path{Y}$ is a bisimulation of Kripke models in the usual sense.
\end{rem}

Given a bisimulation $X\leftarrow Z \rightarrow Y$, we would like to think of $Z$ as providing a winning strategy for Duplicator in an appropriate game played ``between $X$ and $Y$''. To substantiate this idea, in Sections~\ref{s:arboreal} and~\ref{s:back-and-forth} we introduce \emph{arboreal categories}---a refinement of the concept of path category---and show that, in these categories, bisimilarity is captured by back-and-forth systems which model the dynamic nature of games.

%%%%%%%%%%%%%%%%%%%%%%%%%%%%%%%%%%%%
\section{Arboreal Categories}\label{s:arboreal}
By Theorem~\ref{thm:path-cat-functor-into-trees}, any path category $\C$ admits a functor $\Path\colon \C\to\T$ into the category of (non-empty) trees. In general, the tree $\Path{X}$ may retain little information about $X$. We are interested in the case where $X$ is determined by $\Path{X}$. This leads us to the notion of \emph{path-generated} object.

Henceforth, we shall assume that any path category contains, up to isomorphism, only a set of paths. Thus, whenever convenient, we shall work with a small skeleton of the full subcategory of paths.

\subsection{Path-generated objects}
Let $\C$ be a path category. For any object $X$ of $\C$, we have a diagram with vertex $X$ consisting of all path embeddings with codomain $X$; the morphisms between paths are those which make the obvious triangles commute (note that they are necessarily embeddings):
\[\begin{tikzcd}[column sep=1.2em, row sep=2.2em]
 & X & \\
 P \arrow[bend left=20,rightarrowtail]{ur} \arrow[rightarrowtail]{rr} & & Q \arrow[bend right=20,rightarrowtail]{ul}
\end{tikzcd}\]
This can be seen as a cocone over the small diagram $\Path{X}$. We say that $X$ is \emph{path-generated} provided this is a colimit cocone in $\C$. Intuitively, an object is path-generated if it is the colimit of its paths.

Recall that a functor $J\colon \mathscr{A}\to \mathscr{B}$ is \emph{dense} if, for each $b\in\mathscr{B}$, the canonical cocone\footnote{This is the cocone whose component at
$(k\colon Ja \to b)\in J\down b$ is $k$ itself.} over the diagram 
\[\begin{tikzcd}[column sep=2em]
J\down b \arrow{r}{\pi} & \mathscr{A} \arrow{r}{J} & \mathscr{B},
\end{tikzcd}\] 
where $J\down b$ is the comma category and $\pi$ is the natural forgetful functor, is a colimit cocone. If $\Cp$ denotes the full subcategory of $\C$ defined by the paths, we have the following observation:
\begin{lem}\label{l:path-gen-dense}
The following statements are equivalent for any path category $\C$:
\begin{enumerate}[label=(\arabic*)]
\item Every object of $\C$ is path-generated.
\item The inclusion $\Cp\into \C$ is dense.
\end{enumerate}
\end{lem}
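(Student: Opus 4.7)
The plan is to deduce the equivalence from a single unified fact: for every $X \in \C$, the natural inclusion $\Path{X} \hookrightarrow \Cp \downarrow X$ is a \emph{final functor} (in the sense of cofinality). Both statements of the lemma assert that $X$ is a colimit, with a canonical cocone, of a certain diagram in $\C$. For density of $\Cp \hookrightarrow \C$, the diagram is $D \colon \Cp \downarrow X \to \C$, $(P, f) \mapsto P$, and the cocone legs are the morphisms $f$ themselves; for path-generation, the diagram is $D$ restricted along the inclusion, and the cocone legs are the path embeddings. Since the canonical cocone on the restricted diagram is nothing but the restriction of the canonical cocone on $D$, finality of the inclusion forces the two colimit assertions to coincide for each $X$, yielding the biconditional.

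To establish finality, I fix $(P, f) \in \Cp \downarrow X$ and show that the associated comma category under $(P, f)$ is non-empty and connected. It suffices to exhibit a weakly initial object, and the natural candidate is the $(\Q, \M)$-factorisation
\[
P \xepi{e} Q \xemb{m} X
\]
of $f$: the object $Q$ is a path by Lemma~\ref{path-images-and-subs}\ref{image-path}, so the pair $(m, e)$ lies in the comma category. For weak initiality, take any other object $(m' \colon Q' \emb X,\, h' \colon P \to Q')$ with $m' \circ h' = f$, and factor $h' = m'' \circ e'$ where $e' \colon P \epi Q''$ and $m'' \colon Q'' \emb Q'$. Then $P \xepi{e'} Q'' \xemb{m' \circ m''} X$ is another $(\Q, \M)$-factorisation of $f$, and essential uniqueness of such factorisations supplies an isomorphism $\alpha \colon Q \to Q''$ compatible with the two factorisations. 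The embedding $m'' \circ \alpha \colon Q \emb Q'$ is then the required morphism $(m, e) \to (m', h')$ in the comma category.

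The main obstacle is conceptual rather than technical: one needs to verify that the canonical cocones on $X$ really match up across the inclusion, and invoke the standard fact that a final functor preserves and reflects colimits of arbitrary diagrams. Once finality of the inclusion is in hand, both directions of the lemma follow uniformly, with no asymmetry between $(1) \Rightarrow (2)$ and its converse.
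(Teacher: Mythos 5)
Your argument is correct, and it repackages the paper's proof around a device the paper uses only implicitly. The paper works directly with cocones: it factors each leg $f\colon P\to X$ of the canonical comma-category cocone as $P\epi P'\emb X$, checks that every morphism of the comma category induces an embedding between the resulting images, and then transfers cocones back and forth between the two diagrams by pre- and post-composing with these quotients, arguing the two implications separately. Your finality claim isolates exactly the content of that transfer: the $(\Q,\M)$-factorisation of $f$ is your weakly initial object of the comma category $(P,f)\down\iota$ (where $\iota\colon\Path{X}\to\Cp\down X$ is the inclusion), non-emptiness plus weak initiality gives connectedness, and the standard bijection between cocones under a diagram and under its restriction along a final functor---which visibly matches the two canonical cocones, since restricting the legs $f$ along $\iota$ yields precisely the path embeddings $m$---finishes both directions at once. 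The computational core is identical in the two proofs: existence and essential uniqueness of $(\Q,\M)$-factorisations, closure of $\M$ under composition, and Lemma~\ref{path-images-and-subs}\ref{image-path} to see that the image $Q$ is a path. What your version buys is symmetry (no separate arguments for the two implications) and a reusable intermediate statement; the one point worth making explicit is that $\Cp\down X$ need not be small (only $\Path{X}$ is guaranteed to be a set, by well-poweredness), so you should note that the cocone-restriction bijection for final functors holds without any smallness hypothesis on the index categories.
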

\begin{proof}
Fix an arbitrary object $X$ of $\C$ and consider the diagram 
\[\begin{tikzcd}[column sep=2em]
J\down X \arrow{r}{\pi} & \Cp \arrow{r}{J} & \C,
\end{tikzcd}\] 
where $J \colon \Cp\into \C$ is the inclusion functor. It suffices to show that $X$ is path-generated precisely when the canonical cocone $C$ with vertex $X$ on the diagram $J\circ \pi$, as depicted on the left-hand side below, is a colimit cocone.
\begin{center}
\begin{tikzcd}[column sep=1em]
{} & X & \\
P \arrow{rr}{h} \arrow{ur}{\eta_P} & & Q \arrow[swap]{ul}{\eta_Q}
\end{tikzcd}
\ \ \ \ \ \ \ \ \ 
\begin{tikzcd}[column sep=0.8em, row sep=1.2em]
{} & & X & & \\
{} & P' \arrow[rightarrowtail,dashed]{rr}{h'} \arrow[rightarrowtail]{ur}{\eta_{P'}} & & Q' \arrow[rightarrowtail]{ul}[swap]{\eta_{Q'}} & \\
P \arrow[uurr, relay arrow=2.2ex, "\eta_P"] \arrow{rrrr}{h} \arrow[twoheadrightarrow]{ur}{e_P} & & & & Q \arrow[twoheadrightarrow]{ul}[swap]{e_Q} \arrow[uull, relay arrow=-2.2ex, "\eta_Q", swap]
\end{tikzcd}
\end{center}
That is, if $P=(J\circ \pi)(i)$ for some $i\in J\down X$ then $\eta_P=i$. Take the (quotient, embedding) factorisations of the morphisms in the cocone $C$, as displayed on the right-hand side above, and note that $P'$ and $Q'$ are paths by Lemma~\ref{path-images-and-subs}\ref{image-path}. Any arrow $h\colon P\to Q$ in the image of the diagram $J\circ\pi$ induces an embedding $h'\colon P'\emb Q'$ making the ensuing diagram commute. Just observe that the following square commutes,
\[\begin{tikzcd}
P \arrow[twoheadrightarrow]{r}{e_P} \arrow{d}[swap]{e_Q\circ h} & P' \arrow[rightarrowtail]{d}{\eta_{P'}} \\
Q' \arrow[rightarrowtail]{r}{\eta_{Q'}} & X
\end{tikzcd}\]
and so there is a diagonal filler $h'\colon P'\emb Q'$.

Now, suppose that $X$ is path-generated and consider a cocone $D$ with vertex $Y\in \C$ over the diagram $J\circ \pi$. This restricts to a cocone $D'$ on the diagram of paths that embed into $X$ and thus, since $X$ is path-generated, induces a unique arrow $f\colon X\to Y$ that is a morphism between the cocones $C'$ and $D'$, where $C'$ is the restriction of $C$ to the diagram of paths that embed into $X$.  It follows from the observations in the previous paragraph that $f$ is a morphism between the cocones $C$ and $D$. Just note that, if $\gamma_P$ is the component of $D$ at $P$,
\[
\gamma_P = \gamma_{P'} \circ e_P = f\circ \eta_{P'} \circ e_P = f\circ \eta_P
\] 
where in the first step we used the compatibility of $D$ combined with the fact that $e_P$ is the image of the functor $J\circ\pi$. Hence, $C$ is a colimit cocone.

Conversely, suppose that $C$ is a colimit cocone and consider a cocone $D'$ with vertex $Y\in \C$ over the diagram of paths that embed into $X$. Using the notation introduced in the first paragraph, $D'$ can be extended to a cocone $D$ over the diagram $J\circ \pi$ by precomposing with the appropriate quotients $e_P$. Compatibility of $D$ follows from compatibility of $D'$ and the fact that $e_Q\circ h = h'\circ e_P$. Thus, there is a unique arrow $f\colon X\to Y$ that is a morphism of cocones between $C$ and $D$. This clearly induces a morphism from the canonical cocone of path embeddings into $X$ to the cocone $D'$, and uniqueness of this morphism follows from uniqueness of $f$. Therefore, $X$ is path-generated.                                                                                                                                            
\end{proof}

\begin{rem}
Direct inspection shows that Lemma~\ref{l:path-gen-dense} holds for any category equipped with a stable proper factorisation system.
\end{rem}

We note in passing that, if every object of $\C$ is path-generated, the functor $\Path\colon \C\to\T$ is faithful on pathwise embeddings (recall that, by Lemma~\ref{l:path-cat-suprema}\ref{at-most-one-emb}, $\Path$ is always faithful on embeddings between paths).
To see this, suppose that $f,g\colon X\to Y$ are pathwise embeddings. If $\Path{f}=\Path{g}$ then, for all path embeddings $m\colon P\emb X$,
\[
f\circ m=\Path{f}(m)=\Path{g}(m)=g\circ m.
\]
As $X$ is path-generated, it follows that $f=g$.

%%%%%%%%%%%%%%%%%%%%%%%%
\subsection{Arboreal categories} 
\begin{defi}\label{def:arboreal-cat}
An \emph{arboreal category} is a path category in which every object is path-generated.
\end{defi}

Equivalently, by Lemma~\ref{l:path-gen-dense}, an arboreal category is a path category $\C$ such that the inclusion $\Cp\into \C$ is dense. 

\begin{exas}
The category $\T$ of non-empty trees is arboreal; this is essentially the observation that every tree is the colimit of the diagram given by its branches and the embeddings between them. Similarly, $\F$ is arboreal.
Our key examples of the categories $\RTk(\sg)$, $\RPk(\sg)$ and $\RMk(\sg)$ from Example~\ref{RTex} are also arboreal.
\end{exas}

Note that, in view of Theorem~\ref{thm:path-cat-functor-into-trees}, any arboreal category $\C$ admits a functor $\Path\colon \C\to\T$ into the category of (non-empty) trees. This crucial fact is what will allow us, given an arboreal cover (\cf Section~\ref{s:arboreal-covers}), to	 process structures as tree-like objects. 

We collect some useful properties of arboreal categories.

\begin{lem}\label{l:arboreal-consequences}
Let $\C$ be an arboreal category. The following statements hold:
\begin{enumerate}[label=(\alph*)]
\item\label{canonical-suprema} For any object $X$ of $\C$ and any $m\in \Emb{X}$, $m=\bigvee{\{p\in\Path{X}\mid p\leq m\}}$. 
\item\label{quot-to-surj} If $f$ is a quotient in $\C$, then $\Path{f}$ is a surjection.
\end{enumerate} 
\end{lem}
\begin{proof}
For item~\ref{canonical-suprema}, let $m\colon S\emb X$ be an arbitrary embedding. Clearly, we have 
\[
\bigvee{\{p\in\Path{X}\mid p\leq m\}}\leq m.
\] 
For the converse direction, assume that $n\colon T\emb X$ is an upper bound for $\{p\in\Path{X}\mid p\leq m\}$. This means that each path embedding $P\emb X$ that factors through $m$ must factor through~$n$. To start with, consider the diagram of all path embeddings into $X$ that factor through $m$, as depicted on the left-hand side below.
\begin{center}
\begin{tikzcd}
{} & X & \\
{} & S \arrow[rightarrowtail]{u}[swap]{m} & \\
P_i \arrow[rightarrowtail]{rr} \arrow[rightarrowtail, dashed]{ur}{s_i} \arrow[rightarrowtail, bend left=20]{uur}{p_i} & & P_j \arrow[rightarrowtail, dashed]{ul}[swap]{s_j}  \arrow[rightarrowtail, bend right=20]{uul}[swap]{p_j}
\end{tikzcd}
\ \ \ \ \ 
\begin{tikzcd}
{} & T & \\
{} & S \arrow[dashed]{u}[swap]{\gamma} & \\
P_i \arrow[rightarrowtail]{rr} \arrow[rightarrowtail]{ur}{s_i} \arrow[rightarrowtail, bend left=20]{uur}{t_i} & & P_j \arrow[rightarrowtail]{ul}[swap]{s_j} \arrow[rightarrowtail, bend right=20]{uul}[swap]{t_j}
\end{tikzcd}
\end{center}
Because $S$ is path-generated, the $s_i$'s form a colimit cocone and so $m$ is the unique arrow $S\to X$ making the left-hand diagram commute. Now, since each $p_i$ factors through $n$, say $p_i=n\circ t_i$ for some $t_i$, and $n$ is a monomorphism, the $t_i$'s form a compatible cocone. Using again the fact that the $s_i$'s form a colimit cocone, we get a unique mediating arrow $\gamma\colon S\to T$ making the right-hand diagram above commute. By uniqueness of $m$ we have $n\circ\gamma = m$, and so $m\leq n$.

For item~\ref{quot-to-surj}, suppose that $f\colon X\epi Y$ is a quotient in $\C$. We first assume that $Y=P$ is a path, and then settle the general case. To show that $\Path{f}$ is a surjection, it suffices to prove that $\id_P\in \Path{f}(\Path{X})$, where $\id_P\colon P\to P$ is the identity. We have
\begin{align*}
f^* \id_P &= \bigvee{\{p\in\Path{X}\mid p\leq f^* \id_P\}} \tag*{item~\ref{canonical-suprema}} \\
&= \bigvee{\{p\in\Path{X}\mid \exists_f p\leq \id_P\}} \tag*{Lemma~\ref{l:adj-image-pullback}} \\
&= \bigvee{\Path{X}},
\end{align*}
and so (using the fact that left adjoints preserve suprema)
\begin{align*}
\id_P = \exists_f f^* \id_P &= \bigvee{\Path{f} (\Path{X})},
\end{align*}
where the first step follows from the fact that $\exists_f \circ f^*$ is the identity of $\Emb{P}$ (\cf the proof of Lemma~\ref{l:emb-quo-order-embeddings}\ref{pullback-embed}).
Hence, $\id_P\in \Path{f}(\Path{X})$ by Lemma~\ref{l:path-cat-suprema}\ref{paths-j-irred}.

For the general case, let $m\colon P\emb Y$ be an arbitrary path embedding and consider the following pullback square in $\C$. 
\[\begin{tikzcd}
f^* P \arrow[twoheadrightarrow]{r}{g} \arrow[rightarrowtail]{d}[swap]{f^* m} \arrow[dr, phantom, "\lrcorner", very near start] & P \arrow[rightarrowtail]{d}{m} \\
X \arrow[twoheadrightarrow]{r}{f} & Y
\end{tikzcd}\]
By the argument above, there is a path embedding $n\colon Q\emb f^*P$ such that $\Path{g}(n)=\id_P$. It follows that $\Path{f}(f^*m\circ n)=m$. Just observe that the following square commutes, and $g\circ n$ is a quotient because $\Path{g}(n)=\id_P$.
\[\begin{tikzcd}
Q \arrow[twoheadrightarrow]{r}{g\circ n} \arrow[rightarrowtail]{d}[swap]{f^* m\, \circ \, n} & P \arrow[rightarrowtail]{d}{m} \\
X \arrow[twoheadrightarrow]{r}{f} & Y
\end{tikzcd}\]
Since factorisations are unique up to isomorphism, we conclude that $\Path{f}(f^*m\circ n)=m$.
\end{proof}

Let $X$ be an object of an arboreal category. Given a subset $\U$ of $\Path{X}$, consider its supremum $n\colon \bigvee{\U}\emb X$ in $\Emb{X}$. In view of Lemma~\ref{l:emb-quo-order-embeddings}\ref{exists-embed}, $\exists_n\colon \Emb{(\bigvee{\U})} \emb \Emb{X}$ is an order-embedding, and so is its restriction $\Path{n}\colon \Path{(\bigvee\U)}\to \Path{X}$. Thus, we shall identify $\Path{(\bigvee\U)}$ with its image in $\Path{X}$. Note that the downward closure of $\U$ is contained in $\Path{(\bigvee\U)}$.
The next proposition shows that, in fact, $\Path{(\bigvee\U)}$ is \emph{equal to} the downward closure of $\U$. This will allow us to construct an object from a prescribed set of path embeddings, without adding any new paths in the process. 
\begin{prop}\label{pr:perfect-lattice-of-strong-subs}
Let $\C$ be an arboreal category, $X$ an object of $\C$, and $\U\subseteq \Path{X}$ a non-empty subset. A path embedding $m\in\Path{X}$ is below $\bigvee{\U}$ if, and only if, it is below some element of~$\U$.
\end{prop}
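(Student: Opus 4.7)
The ``if'' direction is immediate by monotonicity: if $m\leq u$ for some $u\in\U$, then $m\leq u\leq \bigvee\U$. The content is the converse direction, and the plan is to exploit the explicit construction of $\bigvee\U$ used in the proof of Lemma~\ref{l:path-cat-suprema}\ref{suprema-of-paths}. Writing $\U=\{m_i\colon P_i\emb X\mid i\in I\}$, I would form the coproduct $S\coloneqq \coprod_{i\in I}P_i$, let $\delta\colon S\to X$ be the canonical morphism whose component at $P_i$ is $m_i$, and factor $\delta=n\circ e$ as a quotient $e\colon S\epi T$ followed by an embedding $n\colon T\emb X$; then $n$ represents $\bigvee\U$. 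Given $m\colon P\emb X$ with $m\leq \bigvee\U$, I factor $m=n\circ m'$ through some path embedding $m'\colon P\emb T$.

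The next step is to lift $m'$ back through $e$. Since $e$ is a quotient, Lemma~\ref{l:arboreal-consequences}\ref{quot-to-surj} ensures $\Path{e}\colon \Path{S}\to\Path{T}$ is surjective, so there is a path embedding $m''\colon Q\emb S$ with $\Path{e}(m'')=m'$. Unpacking the definition of $\Path{e}$, this means that the $(\Q,\M)$-factorisation of $e\circ m''$ takes the form $e\circ m''=m'\circ e''$ for some quotient $e''\colon Q\epi P$. Now I would invoke axiom~\ref{ax:connected} of Definition~\ref{def:path-cat}: since $Q$ is a path, it is connected, so $m''\colon Q\to\coprod_{i\in I}P_i$ factors through some coproduct injection, say $m''=\iota_j\circ q$ for some $j\in I$ and $q\colon Q\to P_j$. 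The cancellation property Lemma~\ref{l:factorisation-properties}\ref{cancellation-m}, applied to $m''=\iota_j\circ q$, shows that $q$ is itself an embedding.

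To conclude, the identities
\[
m\circ e''=n\circ m'\circ e''=n\circ e\circ m''=\delta\circ\iota_j\circ q=m_j\circ q
\]
produce a commutative square with left edge $e''\in\Q$ and right edge $m_j\in\M$; the defining lifting property $e''\pit m_j$ then yields a diagonal filler $d\colon P\to P_j$ with $m_j\circ d=m$, witnessing $m\leq m_j\in\U$. The crux of the argument---and the step I expect to be the main obstacle for anyone attempting to weaken the hypotheses---is the appeal to connectedness of the path $Q$: without axiom~\ref{ax:connected}, $m''$ need not factor through a single summand $P_j$, and the reduction collapses. This is exactly the reason for including connectedness in the definition of a path category, as flagged in the remark following Lemma~\ref{l:path-cat-suprema}.
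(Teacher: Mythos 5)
Your proof is correct and follows essentially the same route as the paper's: construct $\bigvee\U$ via the (quotient, embedding) factorisation of the canonical map out of the coproduct, lift the given path back through the quotient using Lemma~\ref{l:arboreal-consequences}\ref{quot-to-surj}, invoke connectedness to factor through a single summand, and conclude. The only cosmetic differences are that the paper performs an explicit pullback before applying the lemma (whereas you apply its full statement directly, which already packages that pullback) and that the paper finishes by observing the induced quotient-embedding is an isomorphism rather than by invoking the lifting property.
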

\begin{proof}
Fix an arbitrary object $X$ of $\C$ and a non-empty set of path embeddings $\U=\{m_i\colon P_i\emb X\mid i\in I\}$. Let $m\colon P\emb X$ be an arbitrary path embedding. If $m$ is below some element of $\U$, then clearly $m\leq \bigvee\U$. 

For the converse direction, suppose that $m\leq \bigvee\U$. Recall from the proof of Lemma~\ref{l:path-cat-suprema}\ref{suprema-of-paths} that the supremum of $\U$ is obtained by taking the (quotient, embedding) factorisation $\coprod_{i\in I}P_i\xepi{e} S\xemb{n} X$ of the canonical morphism $\coprod_{i\in I}P_i\to X$. With this notation, $\bigvee \U=n$. Since $m\leq \bigvee\U$, there exists an embedding $m'\colon P\emb S$ such that $m=n\circ m'$. Consider the pullback of $m'$ along $e$:
\[\begin{tikzcd}
T \arrow[rightarrowtail]{d}[swap]{j} \arrow[twoheadrightarrow]{r}{r} \arrow[dr, phantom, "\lrcorner", very near start] & P \arrow[rightarrowtail]{d}{m'} \\
 \coprod_{i\in I}P_i \arrow[twoheadrightarrow]{r}{e} & S
\end{tikzcd}\]
Applying Lemma~\ref{l:arboreal-consequences}\ref{quot-to-surj} to the quotient $r$, we see that there exists a path embedding $k\colon Q\emb T$ such that $\Path{r}(k)=\id_P$, \ie $r\circ k$ is a quotient. Because $Q$ is connected, $j\circ k\colon Q\emb  \coprod_{i\in I}P_i$ must factor through some coproduct injection $\phi_i\colon P_i\to \coprod_{i\in I}P_i$, \iec $j\circ k= \phi_i\circ p$ for some path embedding $p\colon Q\emb P_i$. We then have a commutative diagram as follows.
\[\begin{tikzcd}[column sep=3em, row sep=2.5em]
Q \arrow[twoheadrightarrow]{r}{r\circ k} \arrow[rightarrowtail]{d}[swap]{p} & P \arrow[rightarrowtail]{d}[swap]{m'} \arrow[rightarrowtail]{dr}{m} & {} \\
P_i \arrow{r}{e\circ \phi_i} \arrow[rr, relay arrow=-2ex, "m_i", swap, rightarrowtail] & S \arrow[rightarrowtail]{r}{n} & X
\end{tikzcd}\]
As $m\circ r\circ k=m_i\circ p$ and the right-hand side of the equation is an embedding, $r\circ k$ is an isomorphism. So $m\leq m_i\in \U$, thus concluding the proof.
\end{proof}

\begin{rem}
Let $X$ be any object of an arboreal category $\C$. Lemma~\ref{l:path-cat-suprema}\ref{suprema-of-paths}, combined with Lemma~\ref{l:arboreal-consequences}\ref{canonical-suprema}, implies that the poset of embeddings $\Emb{X}$ admits all suprema, hence is a complete lattice. In fact, $\Emb{X}$ is a \emph{perfect} lattice. 

Recall that an element $x$ of a complete lattice $L$ is \emph{completely join-irreducible} provided that, for any subset $M \subseteq L$, $x=\bigvee M$ entails $x\in M$. A lattice is perfect if it is complete, completely distributive, and each of its elements is a supremum of completely join-irreducible ones. It is well known, \cf \cite{Raney1952} or~\cite[Theorem~10.29]{DP2002}, that a lattice is perfect precisely when it is isomorphic to the lattice $\mathcal{D}(P)$ of downwards closed subsets of a poset $P$. Equivalently, when it is isomorphic to the lattice of downwards closed subsets of the poset of its completely join-irreducible elements.

For an object $X$ of an arboreal category, let $F$ be the forest obtained by removing the root from the tree $\Path{X}$ (note that the least element of a complete lattice is never completely join-irreducible, since it coincides with the supremum of the empty set). Lemma~\ref{l:arboreal-consequences}\ref{canonical-suprema} and Proposition~\ref{pr:perfect-lattice-of-strong-subs} entail that the map
\[
\Emb{X} \to \mathcal{D}(F), \ \ m\mapsto \{p\in F \mid p\leq m\}
\]
is a bijection, whose inverse sends a downwards closed subset of $F$ to its supremum in $\Emb{X}$.  Both assignments are clearly monotone, and since any order isomorphism between lattices is a lattice isomorphism, the map above is an isomorphism of (complete) lattices. Thus, in an arboreal category, the lattice of embeddings of any object is isomorphic to the lattice of downwards closed subsets of some forest.
\end{rem}

%%%%%%%%%%%%%%%%%%%%%%%%%%%%%%%%%%%%%
\section{Back-and-Forth Systems and Games}\label{s:back-and-forth}
Throughout this section, we work in a fixed arboreal category $\C$. First, we introduce back-and-forth systems in $\C$ and show that they capture precisely the bisimilarity relation defined in Section~\ref{s:openmaps} in terms of spans of open pathwise embeddings. Then, we show that back-and-forth systems can be equivalently seen as appropriate back-and-forth games.
\subsection{Back-and-forth systems}
Given objects $X$ and $Y$ of $\C$, we consider spans of (equivalence classes of) path embeddings of the form
$X\xlemb{m} P \xemb{n} Y$. Such a span can be thought of as a partial isomorphism ``of shape $P$'' between $X$ and $Y$. A back-and-forth system between $X$ and $Y$ is a collection of such spans containing an ``initial element'' and satisfying an appropriate extension property. 

Let $X,Y$ be any two objects of $\C$. Given $m\in\Path{X}$ and $n\in\Path{Y}$, we write $\br{m,n}$ to denote that $\dom(m)\cong \dom(n)$. Observe that (i) any two embeddings in the same $\sim$-equivalence class have isomorphic domains (where $\sim$ is the equivalence relation defined on p.~\pageref{symmetriz-preorder-subobjects}), and (ii) given $\br{m,n}$, there exist path embeddings $m'\sim m$ and $n'\sim n$ such that $\dom(m')=\dom(n')$. Hence, the pairs of the form $\br{m,n}$ capture the partial isomorphisms $X\xlemb{m} P \xemb{n} Y$ ``of shape $P$''. 

\begin{defi}\label{def:back-and-forth}
A \emph{back-and-forth system} between objects $X,Y$ of $\C$ is a set 
\[
\B=\{\br{m_i,n_i}\mid m_i\in\Path{X}, \, n_i\in\Path{Y}, \, i\in I\}
\] 
satisfying the following conditions:
\begin{enumerate}[label=(\roman*)]
\item\label{initial} $\br{\bot_X,\bot_Y}\in\B$, where $\bot_X$ and $\bot_Y$ are the roots of $\Path{X}$ and $\Path{Y}$, respectively.
\item\label{forth} If $\br{m,n}\in\B$ and $m'\in\Path{X}$ are such that $m\cvr m'$, there exists $n'\in\Path{Y}$ satisfying $n\cvr n'$ and $\br{m',n'}\in\B$.
\item\label{back} If $\br{m,n}\in\B$ and $n'\in\Path{Y}$ are such that $n\cvr n'$, there exists $m'\in\Path{X}$ satisfying $m\cvr m'$ and $\br{m',n'}\in\B$.
\end{enumerate}
A back-and-forth system $\B$ is \emph{strong} if, for all $\br{m,n}\in \B$ and all $m'\in\Path{X}$, $n'\in\Path{Y}$, if $m'\cvr m$ and $n'\cvr n$ then $\br{m',n'}\in\B$.
\end{defi}

Two objects $X$ and $Y$ of $\C$ are said to be \emph{(strong) back-and-forth equivalent} if there exists a (strong) back-and-forth system between them.

\begin{rem}
The definition of (strong) back-and-forth system given above is a variant of the notion of \emph{(strong) path bisimulation} from~\cite{JNW1996}. The nomenclature adopted here is motivated by the analogy with back-and-forth systems of partial isomorphisms from model theory \cite{Libkin2004}.
\end{rem}

\begin{lem}\label{l:always-strong}
Any two objects of an arboreal category are back-and-forth equivalent if, and only if, they are strong back-and-forth equivalent.
\end{lem}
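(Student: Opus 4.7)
The ``if'' direction is immediate, since a strong back-and-forth system is, by definition, a back-and-forth system. The content is in the converse, and my plan is to construct a strong back-and-forth system out of an arbitrary back-and-forth system $\B$ between $X$ and $Y$, by restricting $\B$ to those pairs that are reachable from the roots via chains entirely within $\B$.

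Concretely, I would define $\B^\star$ to consist of all pairs $\br{m,n}$ (with $m\in\Path{X}$, $n\in\Path{Y}$) for which there exist $k\geq 0$ and chains
\[
\bot_X = m_0 \cvr m_1 \cvr \cdots \cvr m_k = m, \qquad \bot_Y = n_0 \cvr n_1 \cvr \cdots \cvr n_k = n
\]
in $\Path{X}$ and $\Path{Y}$ respectively, such that $\br{m_i,n_i}\in \B$ for every $i\in\{0,\ldots,k\}$. Then $\br{\bot_X,\bot_Y}\in\B^\star$ by condition~\ref{initial} for $\B$ (taking $k=0$), so clause~\ref{initial} for $\B^\star$ holds. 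For the forth clause~\ref{forth}, given $\br{m,n}\in\B^\star$ witnessed by such chains, the top pair $\br{m_k,n_k}=\br{m,n}$ lies in $\B$; so if $m\cvr m^\star$, applying~\ref{forth} for $\B$ yields $n^\star\rcvr n$ with $\br{m^\star,n^\star}\in\B$, and simply extending both chains by one step witnesses $\br{m^\star,n^\star}\in\B^\star$. The back clause~\ref{back} is symmetric.

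The strong condition is the crucial point, and this is where I expect the reader to pause, but it falls out cleanly from the fact (Lemma~\ref{l:PathX-tree}) that $\Path{X}$ and $\Path{Y}$ are trees, so each non-root element has a \emph{unique} predecessor in the covering relation. If $\br{m,n}\in\B^\star$ and $m'\cvr m$, $n'\cvr n$ exist in the respective trees, then $m\neq \bot_X$ forces the witnessing chain to have length $k\geq 1$; uniqueness of predecessors then gives $m'=m_{k-1}$ and $n'=n_{k-1}$, so the truncated chains of length $k$ witness $\br{m',n'}\in\B^\star$.

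The only real obstacle is appreciating why one cannot simply take $\B^\star$ to be, say, the downward closure of $\B$ in an obvious pointwise way: the forth clause for $\B^\star$ requires being able to re-enter $\B$ at the top of a chain, so the witnessing data must keep track of an entire chain of $\B$-pairs, not merely of the single pair $\br{m,n}$. Once this is built into the definition, the four axioms check themselves.
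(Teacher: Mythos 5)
Your proof is correct and follows essentially the same route as the paper's: the paper also passes to the subsystem of \emph{reachable} pairs $\br{m,n}\in\B$, i.e.\ those for which the (unique, since $\Path{X}$ and $\Path{Y}$ are trees) chains below $m$ and $n$ consist entirely of $\B$-pairs, and verifies the axioms exactly as you do. Your chain-based phrasing and the paper's ``unique element of each height below $m$'' phrasing coincide because $\down m$ is a finite chain in a tree.
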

\begin{proof}
Let $X,Y$ be objects of an arboreal category. For the non-trivial direction, suppose that there exists a back-and-forth system 
\[
\B=\{\br{m_i,n_i}\mid m_i\in\Path{X}, \, n_i\in\Path{Y}, \, i\in I\}
\] 
between $X$ and $Y$. To start with, note that for all $\br{m,n}\in\B$ the height of $m$ in $\Path{X}$ coincides with the height of $n$ in $\Path{Y}$. To see this, assume without loss of generality that $\dom(m) = P = \dom(n)$. Then the chain ${\downarrow} m$ of elements of $\Path{X}$ that are below $m$ is isomorphic to $\Emb{P}$, and similarly for ${\downarrow} n$.

Given $\br{m,n}\in\B$, suppose that $m$ and $n$ have height $k$. For every $j\in\{1,\ldots,k\}$, denote by $m_j$ and $n_j$ the unique path embeddings of height $j$ below $m$ and $n$, respectively. If $\br{m_j,n_j}\in\B$ for all $j\in\{1,\ldots,k\}$, the pair $\br{m,n}$ is said to be \emph{reachable}. We claim that
\[
\B'\coloneqq \{\br{m,n}\in\B \mid \br{m,n} \ \text{is reachable}\}
\]
is a strong back-and-forth system between $X$ and $Y$. 

Item~\ref{initial} in Definition~\ref{def:back-and-forth} is satisfied because $\br{\bot_X,\bot_Y}$ belongs to $\B$ and is trivially reachable. For item~\ref{forth}, suppose that $\br{m,n}\in\B'$, and let $m'\in\Path{X}$ satisfy $m\cvr m'$. Since $\B$ is a back-and-forth system, there is $n'\in\Path{Y}$ such that $n\cvr n'$ and $\br{m',n'}\in\B$. As $\br{m,n}$ is reachable, so is $\br{m',n'}$. Item~\ref{back} follows by a similar reasoning.
Finally, note that $\B'$ is a \emph{strong} back-and-forth system by construction.
\end{proof}

The aim of this section is to clarify the relation between bisimilarity and back-and-forth equivalence in arboreal categories. While it turns out that any two bisimilar objects $X$ and $Y$ are back-and-forth equivalent, to construct a bisimulation from a back-and-forth system we will assume that the product $X\times Y$ exists. For convenience of notation, given any two objects $a,b$ of a category $\mathscr{A}$, let us write $a \pro b$ to denote that the product $a\times b$ exists in $\mathscr{A}$. The following is our main result on bisimilarity:
\begin{thm}\label{th:bisimilar-iff-strong-back-forth}
Let $X,Y$ be any two objects of an arboreal category such that $X\pro Y$.
Then $X$ and $Y$ are bisimilar if, and only if, they are back-and-forth equivalent.
\end{thm}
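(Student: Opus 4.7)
The plan is to prove the two implications separately. The forward direction does not use the product $X\pro Y$; the backward direction does.

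\smallskip
\noindent\textbf{Forward direction.} Given a bisimulation $X\xleftarrow{f} Z\xrightarrow{g} Y$ of open pathwise embeddings, I would define
\[
\B \;=\; \{\br{\Path{f}(p),\Path{g}(p)}\mid p\in\Path{Z}\}.
\]
Since $f$ and $g$ are pathwise embeddings, $\Path{f}(p)=f\circ p$ and $\Path{g}(p)=g\circ p$ both have domain $\dom(p)$, so each pair has the required form. The initial clause is immediate because $\Path{f}$ and $\Path{g}$ are tree morphisms by Theorem~\ref{thm:path-cat-functor-into-trees}, and tree morphisms preserve roots. For the forth clause, given $\br{m,n}\in\B$ arising from $p\in\Path{Z}$ and $m\cvr m'$ in $\Path{X}$, openness of $f$ combined with Proposition~\ref{p:pwe-open-dual-p-morph} says $\Path{f}$ is a p-morphism, so some $p'\geq p$ satisfies $\Path{f}(p')=m'$. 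Since $\Path{f}$ restricts to a bijection $\down p'\to\down m'$ of finite chains and preserves covers, the assumption $m\cvr m'$ forces the chain from $p$ to $p'$ to have length one, i.e.\ $p\cvr p'$. Then $n':=\Path{g}(p')$ satisfies $n\cvr n'$ by the tree morphism property of $\Path{g}$, and $\br{m',n'}\in\B$. The back clause is symmetric.

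\smallskip
\noindent\textbf{Backward direction.} By Lemma~\ref{l:always-strong} we may assume $\B$ is strong. For each $\br{m,n}\in\B$, choose representatives with common domain $P$ and consider the pairing $\langle m,n\rangle\colon P\to X\times Y$ induced by the universal property of the product. Since $\pi_X\circ\langle m,n\rangle=m$ is an embedding, Lemma~\ref{l:factorisation-properties}\ref{cancellation-m} implies $\langle m,n\rangle$ is an embedding, hence a path embedding. Let
\[
\U \;=\; \{\langle m,n\rangle\mid \br{m,n}\in\B\}\;\subseteq\;\Path{X\times Y}.
\]
By Lemma~\ref{l:path-cat-suprema}\ref{suprema-of-paths}, $\U$ admits a supremum $\bigvee\U\colon Z\emb X\times Y$ in $\Emb{X\times Y}$; set $f\coloneqq\pi_X\circ\bigvee\U$ and $g\coloneqq\pi_Y\circ\bigvee\U$. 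The key observation is that $\U$ is downward closed in $\Path{X\times Y}$: if $\langle m',n'\rangle\leq\langle m,n\rangle$, there is an embedding $i\colon P'\emb P$ with $m'=m\circ i$ and $n'=n\circ i$ (uniqueness of such $i$ following from Lemma~\ref{l:arboreal-consequences}\ref{at-most-one-emb}), so by a finite descent through covers using strong closure of $\B$, $\br{m',n'}\in\B$. By Proposition~\ref{pr:perfect-lattice-of-strong-subs}, the image of the order-embedding $\Path{\bigvee\U}\colon\Path{Z}\to\Path{X\times Y}$ equals the downward closure of $\U$, which is $\U$ itself. Hence $\Path{Z}$ is canonically identified with $\U$ as a tree.

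\smallskip
\noindent\textbf{Verification.} Every $q\in\Path{Z}$ corresponds to some $\langle m,n\rangle\in\U$ via $\bigvee\U\circ q=\langle m,n\rangle$, so $f\circ q=m$ is a path embedding; thus $f$ (and by symmetry $g$) is a pathwise embedding. To show $f$ is open, by Proposition~\ref{p:pwe-open-dual-p-morph} it suffices to show $\Path{f}$ is a p-morphism: given $q\in\Path{Z}$ with $\Path{f}(q)=m$ and $m\cvr m'$ in $\Path{X}$, the forth property yields $n'$ with $n\cvr n'$ and $\br{m',n'}\in\B$; then $\langle m',n'\rangle\in\U$ and the pair $\langle m,n\rangle\cvr\langle m',n'\rangle$ in $\Path{X\times Y}$ translates, via the tree isomorphism $\Path{Z}\cong\U$, into some $q'\in\Path{Z}$ with $q\cvr q'$ and $\Path{f}(q')=m'$. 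Openness of $g$ uses the back property symmetrically. This produces the desired bisimulation $X\xleftarrow{f}Z\xrightarrow{g}Y$.

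\smallskip
\noindent\textbf{Main obstacle.} The delicate step is the backward direction's structural identification $\Path{Z}=\U$ inside $\Path{X\times Y}$. It requires (i) showing that the order on $\Path{X\times Y}$ factors componentwise through the projections, which uses uniqueness of embeddings between paths, (ii) converting strong closure of $\B$ under single covers on both sides into closure under arbitrary refinements of pairs, and (iii) combining these with Proposition~\ref{pr:perfect-lattice-of-strong-subs} so that the paths of the constructed $Z$ are exactly those arising from $\B$, with no spurious extra paths introduced by the supremum construction.
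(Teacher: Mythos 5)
Your proposal is correct and follows essentially the same route as the paper: the forward direction uses the same system $\{\br{\Path{f}(p),\Path{g}(p)}\mid p\in\Path{Z}\}$, and the backward direction passes to a strong system via Lemma~\ref{l:always-strong}, builds $Z$ as the supremum in $\Emb{(X\times Y)}$ of the pairings $\langle m,n\rangle$, and controls $\Path{Z}$ via Proposition~\ref{pr:perfect-lattice-of-strong-subs} together with downward closure of $\U$ and the uniqueness of embeddings between paths. The only (cosmetic) difference is that you verify openness through the p-morphism characterisation of Proposition~\ref{p:pwe-open-dual-p-morph} rather than by exhibiting diagonal fillers directly.
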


In particular, in any arboreal category with binary products, the bisimilarity relation coincides with back-and-forth equivalence. This is the case in all our key examples:
\begin{exas}
The category $\T$ of non-empty trees has binary products. These are computed as ``synchronous products'' consisting of the pairs $(x,y)$ of elements having the same height, with the componentwise order. Similarly, $\F$ has binary products, and so does the category $\R(\sg)$ of forest-ordered $\sg$-structures if the relations in the synchronous product are interpreted componentwise. As synchronous products do not increase the height of forests, we see that the category $\RTk(\sg)$ from Example~\ref{RTex} has binary products. The category $\RPk(\sg)$ has also finite products, but for these we need to restrict to those pairs $(x,y)$ of elements with same height on which the pebbling functions coincide. Finally, binary products exist in $\RMk(\sg)$ and can be described again as variants of synchronous products.\footnote{Alternatively, as pointed out in~\cite{AS2021}, $\RMk(\sg)$ can be identified with a coreflective subcategory of the category of \emph{pointed} Kripke structures. Since the latter is complete, so is $\RMk(\sg)$.}
\end{exas}

We start by establishing the easy direction of Theorem~\ref{th:bisimilar-iff-strong-back-forth}, whose proof is akin to that of \cite[Lemma~16]{JNW1996}.
\begin{prop}\label{p:bisimilar-implies-back-and-forth}
Any two bisimilar objects of an arboreal category are back-and-forth equivalent.
\end{prop}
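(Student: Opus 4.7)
Given a bisimulation $X \xlemb{f} Z \xemb{g} Y$ (a span of open pathwise embeddings), I would construct a back-and-forth system by setting
\[
\B \coloneqq \{\br{\Path{f}(p), \Path{g}(p)} \mid p \in \Path{Z}\}.
\]
The pairs are well-formed: since $f$ and $g$ are pathwise embeddings, $\Path{f}(p) = f \circ p$ and $\Path{g}(p) = g \circ p$ are path embeddings sharing the domain $\dom(p)$, so the condition $\br{\cdot,\cdot}$ is satisfied trivially. By Proposition~\ref{p:Path-f-tree-morphism}, both $\Path{f}$ and $\Path{g}$ are morphisms of trees, and Proposition~\ref{p:pwe-open-dual-p-morph} upgrades these to p-morphisms because $f,g$ are open.

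For condition~\ref{initial} of Definition~\ref{def:back-and-forth}, tree morphisms preserve roots, so $\br{\bot_X,\bot_Y} = \br{\Path{f}(\bot_Z),\Path{g}(\bot_Z)} \in \B$. For~\ref{forth}, take $\br{\Path{f}(p),\Path{g}(p)} \in \B$ and $m' \in \Path{X}$ with $\Path{f}(p) \cvr m'$. The p-morphism property yields $p' \in \Path{Z}$ with $p \leq p'$ and $\Path{f}(p') = m'$; setting $n' \coloneqq \Path{g}(p')$ gives the candidate witness, and then $\br{m',n'} \in \B$ by construction. Condition~\ref{back} is handled symmetrically using openness of $g$.

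The only nontrivial step is the refinement $p \leq p' \Rightarrow p \cvr p'$ needed to ensure $n = \Path{g}(p)$ is covered by $n' = \Path{g}(p')$ (since tree morphisms preserve the covering relation, $p \cvr p'$ then yields $\Path{g}(p) \cvr \Path{g}(p')$). To obtain this, I would use the equivalent formulation of forest morphism: the restriction of $\Path{f}$ to $\down p'$ is a monotone bijection onto $\down m'$. Both being finite chains, this restriction is an order-isomorphism, so the interval $[p,p']$ maps order-isomorphically onto $[\Path{f}(p),m'] = \{\Path{f}(p),m'\}$, forcing $p \cvr p'$. This is the heart of the argument; everything else is routine bookkeeping with the tree-morphism property.
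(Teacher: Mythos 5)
Your proposal is correct and follows essentially the same route as the paper: the same system $\B = \{\br{\Path{f}(p),\Path{g}(p)} \mid p\in\Path{Z}\}$, with openness supplying the lift in the forth/back steps (you route this through Proposition~\ref{p:pwe-open-dual-p-morph} rather than constructing the diagonal filler directly, which is an equivalent packaging). Your explicit justification that $p\leq p'$ together with $\Path{f}(p)\cvr\Path{f}(p')$ forces $p\cvr p'$ --- via the order-isomorphism of $\down p'$ onto $\down m'$ --- correctly fills in a step the paper's proof states without detail.
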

\begin{proof}
Suppose that $X\xleftarrow{f} Z \xrightarrow{g} Y$ is a span of open pathwise embeddings in an arboreal category $\C$. We claim that 
\[
\B\coloneqq \{\br{\Path{f}(m),\Path{g}(m)}\mid m\in\Path{Z}\}
\]
is a back-and-forth system between $X$ and $Y$ (in fact, this is a strong back-and-forth system). We show that items~\ref{initial}, \ref{forth} and~\ref{back} in Definition~\ref{def:back-and-forth} are satisfied.

For item~\ref{initial}, let $\bot_Z$ be the root of $\Path{Z}$. Then $\Path{f}(\bot_Z)=\bot_X$ and $\Path{g}(\bot_Z)=\bot_Y$ because tree morphisms preserve roots, and so $\br{\bot_X,\bot_Y}\in \B$.

For item~\ref{forth}, let $\br{\Path{f}(m),\Path{g}(m)}\in\B$ and $m'\in\Path{X}$ be such that $\Path{f}(m)\cvr m'$. Let us denote $P\coloneqq \dom(m)$ and $P'\coloneqq \dom(m')$. As $f\circ m\leq m'$ in $\Path{X}$, there exists $k\colon P\emb P'$ such that $f\circ m=m'\circ k$ in $\C$. Therefore, we have a commutative square as follows.
\[\begin{tikzcd}
P \arrow[rightarrowtail]{r}{k} \arrow[rightarrowtail]{d}[swap]{m} & P' \arrow[rightarrowtail]{d}{m'}  \\
Z \arrow{r}{f} & X
\end{tikzcd}\]
Since $f$ is open, there is a diagonal filler $n\colon P'\emb Z$. Thus, $m'=\Path{f}(n)$ and $\br{m',\Path{g}(n)}\in\B$.
It remains to show that $\Path{g}(m)\cvr \Path{g}(n)$. Note that $m\leq n$ and $\Path{f}(m)\cvr \Path{f}(n)$ entail $m\cvr n$, and so $\Path{g}(m)\cvr \Path{g}(n)$ because $\Path{g}$ preserves the covering relation.

The proof of item~\ref{back} is the same, mutatis mutandis, as for~\ref{forth}.
\end{proof}

\begin{rem}
Suppose for a moment that we define a back-and-forth system between $X$ and $Y$ to consist of spans of path embeddings $X \lemb P \emb Y$, rather than \emph{equivalence classes} of such path embeddings. Then Proposition~\ref{p:bisimilar-implies-back-and-forth} fails to hold. The reason is that, for a bisimulation 
\[\begin{tikzcd}[column sep=1.8em]
X & Z \arrow{l}[swap]{f} \arrow{r}{g} & Y,
\end{tikzcd}\] 
the obvious candidate 
\[
\{\begin{tikzcd}[column sep=2.2em] \hspace{-0.4em}X  & P \arrow[rightarrowtail]{l}[swap]{f\circ m} \arrow[rightarrowtail]{r}{g\circ m} & Y \hspace{-0.4em}\end{tikzcd} \mid m\colon P\emb Z \ \text{is a path embedding}\}
\]
need not be a set, in general. Even the notion of \emph{strong} back-and-forth system would be problematic, as there may be a proper class of path embeddings with codomain $P$. Of course, in the usual applications the subcategory of paths has a small skeleton, which allows to circumvent this problem.
\end{rem}

To establish the other direction of Theorem~\ref{th:bisimilar-iff-strong-back-forth}, we start by considering a (strong) back-and-forth system $\B=\{\br{m_i,n_i}\mid i\in I\}$ between $X$ and $Y$, and attempt to construct an object $Z$ and a span of open pathwise embeddings $X\leftarrow Z\rightarrow Y$. Intuitively, $Z$ is obtained by gluing together the paths $P_i\coloneqq \dom(m_i)$, for $i\in I$, by taking a colimit in $\C$. 
If the product of $X$ and $Y$ exists in $\C$, then this colimit can be equivalently described as the supremum of a set of path embeddings as we now explain. 

Consider an arbitrary $\br{m_i,n_i}\in\B$ and assume without loss of generality that $\dom(m_i)=P_i=\dom(n_i)$ for some path $P_i$. 
Then the product arrow 
\[
\langle m_i,n_i\rangle \colon P_i\to X\times Y
\] 
is an embedding. In fact, it suffices that $m_i$ be an embedding (or, symmetrically, that $n_i$ be an embedding), for then $m_i=\pi_X\circ \langle m_i,n_i\rangle$ entails that $\langle m_i,n_i \rangle$ is an embedding, where $\pi_X\colon X\times Y\to X$ is the projection.
Therefore, we can identify each $\br{m_i,n_i}\in\B$ with a path embedding $\langle m_i,n_i\rangle\in \Path{(X\times Y)}$ and compute the supremum $m\colon Z\emb X\times Y$ in $\Emb{(X\times Y)}$ of all these path embeddings. Note that the assignment 
\[
\br{m_i,n_i}\mapsto \langle m_i,n_i \rangle\in \Path{(X\times Y)}
\] 
does not depend on the choice of the representatives in the equivalence classes of $m_i$ and $n_i$. We note in passing the following immediate fact:
\begin{lem}\label{l:strong-baf-system-downwards}
Let $\B=\{\br{m_i,n_i}\mid i\in I\}$ be a back-and-forth system between $X$ and $Y$. If $X\pro Y$ and $\B$ is strong, $\{\langle m_i,n_i\rangle \in \Path{(X\times Y)} \mid i\in I\}$ is downwards closed in $\Emb{(X\times Y)}$.\qed
\end{lem}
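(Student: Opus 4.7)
The plan is to reduce the problem to an iterated application of the strong back-and-forth property along two parallel chains of covers. First I would unpack what it means for an embedding $n\colon T\emb X\times Y$ to lie below some $\langle m_i,n_i\rangle$ in $\Emb{(X\times Y)}$. Choosing representatives so that $\dom(m_i)=P_i=\dom(n_i)$, one obtains a factorisation $n=\langle m_i,n_i\rangle\circ k$ with $k\colon T\emb P_i$ an embedding. Since $P_i$ is a path, Lemma~\ref{path-images-and-subs}\ref{sub-path} gives that $T$ is a path, so $n$ is automatically a \emph{path} embedding; this reconciles downward closure in $\Emb{(X\times Y)}$ with the apparently stronger fact that $n$ is of the stated form. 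Setting $m\coloneqq m_i\circ k$ and $n'\coloneqq n_i\circ k$, projection through $\pi_X$ and $\pi_Y$ shows that $n=\langle m,n'\rangle$, with $m\in\Path{X}$ and $n'\in\Path{Y}$ both having common domain~$T$.

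Next I would extract the height information. As noted in the proof of Lemma~\ref{l:always-strong}, the height of $m$ in $\Path{X}$ equals $|\Emb{T}|$, and likewise for $n'$ in $\Path{Y}$; call this common value $h$. Similarly, $m_i$ and $n_i$ both have some height $h'\geq h$. Because $\down m_i$ (respectively, $\down n_i$) is a finite chain, there is a unique element of height $j$ below $m_i$ for each $1\leq j\leq h'$, and at height $h$ this unique element is exactly $m$ (respectively, $n'$). So the problem is now to show $\br{m,n'}\in\B$ given $\br{m_i,n_i}\in\B$ and this rigid description of the two chains.

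The final step is a downward induction on $h'-h$. Starting from $\br{m_i,n_i}\in\B$, at each stage I pick the unique covers $\widetilde m\cvr m_i$ in $\down m_i$ and $\widetilde n\cvr n_i$ in $\down n_i$ (which exist because both chains have the same length), and invoke the strong back-and-forth condition in Definition~\ref{def:back-and-forth} to obtain $\br{\widetilde m,\widetilde n}\in\B$. Iterating $h'-h$ times descends through both chains simultaneously and yields $\br{m,n'}\in\B$, so $n=\langle m,n'\rangle$ lies in the set as required.

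I do not expect a serious obstacle: the conceptually subtle point is simply recognising that an arbitrary embedding $n$ below $\langle m_i,n_i\rangle$ is forced to split through the paired embeddings $\pi_X\circ n$ and $\pi_Y\circ n$, allowing the back-and-forth structure in $\B$ to be invoked via the heights argument of Lemma~\ref{l:always-strong}. Once that is in place, the induction using strongness is essentially mechanical.
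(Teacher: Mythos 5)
Your proof is correct, and it takes the route the paper evidently has in mind: the paper states this lemma without proof, recording it as an ``immediate fact'', so your write-up simply supplies the details. The two points genuinely worth spelling out are exactly the ones you identify --- that any embedding below $\langle m_i,n_i\rangle$ has a path as domain and splits as $\langle \pi_X\circ n,\pi_Y\circ n\rangle$ with both components path embeddings of the same height $h=|\Emb{T}|$, and that the finite chains $\down m_i$ and $\down n_i$ have equal length so the simultaneous descent by covers, using strongness at each step, is well defined and terminates at $\br{m,n'}$.
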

To show that the span $X\xleftarrow{\pi_X\circ m} Z\xrightarrow{\pi_Y\circ m} Y$ is a bisimulation, we exploit the fact that $Z$ does not admit more path embeddings than those prescribed (\cf Proposition~\ref{pr:perfect-lattice-of-strong-subs}). The following proposition then completes the proof of Theorem~\ref{th:bisimilar-iff-strong-back-forth}:

\begin{prop}\label{p:b-and-f-implies-bisimilar}
Let $X$ and $Y$ be any two objects of an arboreal category such that $X\pro Y$. If $X$ and $Y$ are back-and-forth equivalent, then they are bisimilar.
\end{prop}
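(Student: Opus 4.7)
By Lemma~\ref{l:always-strong}, I may replace the given back-and-forth system by a strong one $\B=\{\br{m_i,n_i}\mid i\in I\}$. Choosing representatives so that $\dom(m_i)=P_i=\dom(n_i)$ for each $i\in I$, each pair determines a path embedding $\langle m_i,n_i\rangle\colon P_i\emb X\times Y$, as explained in the paragraph preceding Lemma~\ref{l:strong-baf-system-downwards}. I then define
\[
m\colon Z\emb X\times Y,\qquad m \;=\; \bigvee\{\langle m_i,n_i\rangle\mid i\in I\}
\]
using Lemma~\ref{l:path-cat-suprema}\ref{suprema-of-paths}, and propose the span $X\xleftarrow{\pi_X\circ m} Z\xrightarrow{\pi_Y\circ m} Y$ as the desired bisimulation.

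The first substantive step is to identify $\Path{Z}$. By Lemma~\ref{l:strong-baf-system-downwards}, strength of $\B$ forces $\{\langle m_i,n_i\rangle\mid i\in I\}$ to be downwards closed in $\Emb{(X\times Y)}$, so Proposition~\ref{pr:perfect-lattice-of-strong-subs} guarantees that the image of $\Path{Z}$ in $\Path{(X\times Y)}$ under the order-embedding induced by $m$ is \emph{exactly} this set. Moreover, the order on $\Path{Z}$ transported along this identification satisfies $\langle m_i,n_i\rangle\leq\langle m_j,n_j\rangle$ iff $m_i\leq m_j$ in $\Path{X}$ and $n_i\leq n_j$ in $\Path{Y}$: the forward direction follows by post-composing with projections, while the backward direction uses Lemma~\ref{l:arboreal-consequences}\ref{at-most-one-emb} to ensure that the two witnessing embeddings $P_i\emb P_j$ agree, so the pair factors through the product. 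With this description in hand, $\Path{(\pi_X\circ m)}$ sends $\langle m_i,n_i\rangle$ to $m_i$ (and symmetrically for $\pi_Y$); in particular the composition of $\pi_X\circ m$ with any path embedding of $Z$ is some $m_i$, a path embedding of $X$, so $\pi_X\circ m$ is a pathwise embedding, and likewise $\pi_Y\circ m$.

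It remains to verify openness of $\pi_X\circ m$ and $\pi_Y\circ m$, which by Proposition~\ref{p:pwe-open-dual-p-morph} reduces to showing that $\Path{(\pi_X\circ m)}$ and $\Path{(\pi_Y\circ m)}$ are p-morphisms. I will argue the case of $\pi_X\circ m$; the other is symmetric. Fix $\langle m_i,n_i\rangle\in\Path{Z}$ and $m'\in\up m_i$. Since $\Path{X}$ is a tree (Lemma~\ref{l:PathX-tree}), the interval $[m_i,m']$ is a finite chain $m_i=\bar m_0\cvr\bar m_1\cvr\cdots\cvr\bar m_k=m'$. Applying the forth property of $\B$ inductively along this chain starting from $\br{m_i,n_i}$ produces $n_i=\bar n_0\cvr\bar n_1\cvr\cdots\cvr\bar n_k$ in $\Path{Y}$ with $\br{\bar m_j,\bar n_j}\in\B$ for every $j$. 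Setting $n'\coloneqq\bar n_k$, the identification from the previous paragraph yields $\langle m_i,n_i\rangle\leq\langle m',n'\rangle$ in $\Path{Z}$ and $\Path{(\pi_X\circ m)}(\langle m',n'\rangle)=m'$, so $\up m_i\subseteq\Path{(\pi_X\circ m)}(\up\langle m_i,n_i\rangle)$; the reverse inclusion is just monotonicity.

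The main obstacle, and the technical hinge of the argument, is the middle step: to read off that $\Path{Z}$ coincides exactly with $\{\langle m_i,n_i\rangle\}$ \emph{as an ordered set}, so that the combinatorial back-and-forth data on $\B$ translates faithfully into the structure of paths and covers of $Z$. Once this identification is secured via Lemma~\ref{l:strong-baf-system-downwards}, Proposition~\ref{pr:perfect-lattice-of-strong-subs}, and Lemma~\ref{l:arboreal-consequences}\ref{at-most-one-emb}, the verifications of the pathwise embedding and p-morphism properties become essentially bookkeeping on top of Definition~\ref{def:back-and-forth}.
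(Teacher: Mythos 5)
Your proof is correct, and the core construction is the same as the paper's: pass to a strong system via Lemma~\ref{l:always-strong}, form the path embeddings $\langle m_i,n_i\rangle\colon P_i\emb X\times Y$, take $m\colon Z\emb X\times Y$ as their supremum, and use Lemma~\ref{l:strong-baf-system-downwards} together with Proposition~\ref{pr:perfect-lattice-of-strong-subs} to control which paths embed into $Z$. Where you genuinely diverge is in the openness verification. The paper checks the path-lifting property directly: given a square with $n\colon P\emb Z$ and $m_j\colon Q\emb X$, it produces $n_j$ by the forth property, factors $\langle m_j,n_j\rangle$ through $m$ to get the filler $h$, and then carries out a diagram chase to verify that both triangles commute (the upper one requiring that $m$ is a monomorphism and Lemma~\ref{l:arboreal-consequences}\ref{at-most-one-emb}). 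You instead first pin down $\Path{Z}$ \emph{as an ordered set}---it is $\{\langle m_i,n_i\rangle\mid i\in I\}$ with the componentwise order, the backward direction of which uses Lemma~\ref{l:arboreal-consequences}\ref{at-most-one-emb} to merge the two witnessing embeddings---and then invoke Proposition~\ref{p:pwe-open-dual-p-morph} to reduce openness to the p-morphism condition, which becomes pure combinatorics on trees: iterate the forth property along the covering chain from $m_i$ to $m'$. Your route front-loads all the categorical content into the order-isomorphism $\Path{Z}\cong\U$ and thereby eliminates the diagram chase; the paper's route avoids stating that isomorphism explicitly but ends up using the same ingredients (the at-most-one-embedding lemma, the monicity of $m$) inside the chase. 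Both are complete; yours is arguably the more transparent packaging of the same argument.
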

\begin{proof}
Let $\C$ be an arboreal category, and let $X,Y$ be any two back-and-forth equivalent objects of $\C$ admitting a product $X\times Y$. By Lemma~\ref{l:always-strong}, there is a strong back-and-forth system $\B=\{\br{m_i,n_i}\mid i\in I\}$ between $X$ and $Y$. Consider the set
\[
\U\coloneqq\{\langle m_i,n_i\rangle \in \Path{(X\times Y)} \mid i\in I\}
\]
and let $m\colon Z\emb X\times Y$ be the supremum of $\U$ in $\Emb{(X\times Y)}$. We claim that 
\[
\begin{tikzcd}[column sep=3em]
X & Z \arrow{r}{\pi_Y\circ m} \arrow{l}[swap]{\pi_X\circ m} & Y
\end{tikzcd}
\]
is a bisimulation between $X$ and $Y$.

To see that this is a span of pathwise embeddings, consider an arbitrary path embedding $n\colon P\emb Z$. In view of Proposition~\ref{pr:perfect-lattice-of-strong-subs} and Lemma~\ref{l:strong-baf-system-downwards}, $m\circ n\in \U$. That is, $m\circ n=\langle m_i,n_i\rangle$ in $\Path{(X\times Y)}$ for some $\br{m_i,n_i}\in\B$. It follows that $m\circ n=\langle m_i,n_i\rangle\circ \phi$ in $\C$ for some isomorphism $\phi$, and so $\pi_X\circ m\circ n$ and $\pi_Y\circ m\circ n$ are embeddings because $\pi_X\circ m\circ n=m_i\circ \phi$ and $\pi_Y\circ m\circ n=n_i\circ\phi$.

It remains to show that $\pi_X\circ m$ and $\pi_Y\circ m$ are open. We prove that $\pi_X\circ m$ is open; the proof for $\pi_Y\circ m$ follows by symmetry. Consider a commutative square in $\C$ as displayed below, where $P$ and $Q$ are paths.
\[\begin{tikzcd}[column sep=3.0em, row sep=2.5em]
P \arrow[rightarrowtail]{r}{k} \arrow[rightarrowtail]{d}[swap]{n} & Q \arrow[rightarrowtail]{d}{m_j} \\
Z \arrow{r}{\pi_X\circ m} & X
\end{tikzcd}\]
Reasoning as above, we see that $m\circ n=\langle m_i, n_i\rangle$ in $\Path{(X\times Y)}$ for some $\br{m_i,n_i}\in\B$. Therefore, in $\Path{X}$, we have $m_i=\pi_X\circ m\circ n\leq m_j$. Applying item~\ref{forth} in Definition~\ref{def:back-and-forth} (possibly finitely many times), it follows that there exists $n_j\in\Path{Y}$ such that $n_i\leq n_j$ and $\br{m_j,n_j}\in \B$. Suppose without loss of generality that $\dom(m_j)=\dom(n_j)$. Then $\langle m_j,n_j\rangle \in\U$ and so $\langle m_j,n_j\rangle\colon Q\emb X\times Y$ factors through the supremum $m\colon Z\emb X\times Y$ of $\U$. That is, $\langle m_j,n_j\rangle=m\circ h$ in $\C$ for some morphism $h\colon Q\to Z$. We claim that the following diagram commutes, thus establishing that $\pi_X\circ m$ is open.
\[\begin{tikzcd}[column sep=3.0em, row sep=2.5em]
P \arrow[rightarrowtail]{r}{k} \arrow[rightarrowtail]{d}[swap]{n} & Q \arrow[rightarrowtail]{d}{m_j} \arrow[outer sep=-1pt]{dl}[description]{h} \\
Z \arrow{r}{\pi_X\circ m} & X
\end{tikzcd}\]
For the commutativity of the lower triangle, just observe that
\[
\pi_X\circ m\circ h= \pi_X\circ \langle m_j,n_j\rangle=m_j.
\]
Now, assume without loss of generality that $\dom(m_i)=R=\dom(n_i)$ for some path $R$. As already observed above, $m\circ n=\langle m_i, n_i\rangle$ in $\Path{(X\times Y)}$ implies that $m\circ n=\langle m_i,n_i\rangle \circ \phi$ in $\C$ for some isomorphism $\phi\colon P\to R$. Thus, for the upper triangle, we have 
\begin{align*}
n=h\circ k \ &\Longleftrightarrow \ m\circ n= m\circ h\circ k \\
&\Longleftrightarrow \ \langle m_i, n_i\rangle \circ \phi=\langle m_j, n_j\rangle \circ k \\
&\Longleftrightarrow \ \begin{cases} m_i\circ \phi=m_j\circ k \\ n_i\circ \phi=n_j\circ k\end{cases}
\end{align*}
where in the first step we used the fact that $m$ is a monomorphism.
In turn, the inequalities $m_i\leq m_j$ and $n_i\leq n_j$ entail the existence of embeddings $k_1,k_2\colon R\emb Q$ such that $m_i=m_j\circ k_1$ and $n_i=n_j\circ k_2$. By Lemma~\ref{l:path-cat-suprema}\ref{at-most-one-emb} we have $k_1\circ\phi=k=k_2\circ\phi$. It follows that $m_i\circ\phi=m_j\circ k$ and $n_i\circ\phi=n_j\circ k$, and so $n=h\circ k$.
\end{proof}

%%%%%%%%%%%%%%%%%%%%%%%%%%%%%%%%%%%%%%%
\subsection{Back-and-forth games}
Let $\C$ be an arboreal category and let $X,Y$ be any two objects of $\C$. We define a back-and-forth game $\G(X,Y)$ played by Spoiler and Duplicator on $X$ and $Y$ as follows.
Positions in the game are pairs of (equivalence classes of) path embeddings $(m,n)\in\Path{X}\times\Path{Y}$.
The winning relation $\W(X,Y)\subseteq \Path{X}\times\Path{Y}$ consists of the pairs $(m,n)$ such that $\dom(m)\cong\dom(n)$.

Let $\bot_X\colon P\emb X$ and $\bot_Y\colon Q\emb Y$ be the roots of $\Path{X}$ and $\Path{Y}$, respectively. If $P\not\cong Q$, then Duplicator loses the game. Otherwise, the initial position is $(\bot_X,\bot_Y)$.
At the start of each round, the position is specified by a pair $(m,n)\in\Path{X}\times\Path{Y}$, and the round proceeds as follows: Either Spoiler chooses some $m'\succ m$ and Duplicator must respond with some $n'\succ n$, or Spoiler chooses some $n''\succ n$ and Duplicator must respond with $m''\succ m$. Duplicator wins the round if they are able to respond and the new position is in $\W(X,Y)$. Duplicator wins the game if they have a strategy $\Sigma$ such that, for all $t\geq 0$, $\Sigma$ is winning after $t$ rounds.

\begin{exas}
It is shown in \cite{AS2021} that the abstract game $\G(X,Y)$ coincides, in the case of the arboreal categories $\RTk(\sg)$, $\RPk(\sg)$ and $\RMk(\sg)$, with the usual $k$-round Ehrenfeucht-\Fraisse, $k$-pebble, and $k$-round bisimulation games, respectively.
\end{exas}

The following straightforward observation makes precise the translation between back-and-forth systems and back-and-forth games.
\begin{lem}\label{l:back-and-forth-sys-games}
Two objects $X,Y$ of an arboreal category $\C$ are back-and-forth equivalent if, and only if, Duplicator has a winning strategy in the game $\G(X,Y)$.
\end{lem}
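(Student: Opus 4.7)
The proof is a direct translation between the static description (back-and-forth system) and the dynamic one (winning strategy), since the winning relation $\W(X,Y)$ is precisely the condition $\br{m,n}$ encoding that $\dom(m)\cong\dom(n)$, and the covering relation $\succ$ used in the game is exactly the one appearing in the forth/back clauses of Definition~\ref{def:back-and-forth}. Thus the proof will consist of two constructions, each producing one object from the other and preserving the axioms coordinate-by-coordinate.

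For the forward direction, suppose $\B=\{\br{m_i,n_i}\mid i\in I\}$ is a back-and-forth system between $X$ and $Y$. The plan is to define Duplicator's strategy by the invariant that after every round the current position $(m,n)\in\Path{X}\times\Path{Y}$ satisfies $\br{m,n}\in\B$. The initial position $(\bot_X,\bot_Y)$ lies in $\B$ by condition~\ref{initial}, which in particular guarantees that the roots have isomorphic domains so Duplicator does not lose at the start. If during a round Spoiler picks $m'\succ m$ in $\Path{X}$, condition~\ref{forth} supplies some $n'\succ n$ with $\br{m',n'}\in\B$, which Duplicator plays; condition~\ref{back} handles Spoiler's moves on the $Y$-side symmetrically. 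Since $\br{m',n'}\in\B$ ensures $(m',n')\in\W(X,Y)$, this strategy keeps Duplicator from losing for any number of rounds $t\geq 0$.

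For the converse direction, suppose Duplicator has a winning strategy $\sigma$ in $\G(X,Y)$, and define
\[
\B\coloneqq \{\br{m,n}\mid (m,n) \text{ is reachable from }(\bot_X,\bot_Y)\text{ along some play in which Duplicator follows } \sigma\}.
\]
Because $\sigma$ is winning, every such reachable position $(m,n)$ lies in $\W(X,Y)$, so $\br{m,n}$ is well-defined and $\B$ is a set (a subset of $\Path{X}\times\Path{Y}$). Axiom~\ref{initial} holds because $(\bot_X,\bot_Y)$ is trivially reachable and, since $\sigma$ is winning, its domains are isomorphic. For axiom~\ref{forth}, given $\br{m,n}\in\B$ with witnessing play $\pi$ and some $m'\succ m$, extending $\pi$ by Spoiler's move $m'$ and Duplicator's prescribed response $n'=\sigma(\pi,m')$ yields a play in which $(m',n')$ is reachable and $n\succ n'$ fails only if $\sigma$ is not winning; the winning condition forces $\br{m',n'}\in\B$. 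Axiom~\ref{back} is verified analogously.

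The only mildly delicate point, and hence the expected main obstacle, is to pin down precisely what ``Duplicator wins the game'' means in the presence of possibly infinite plays: the natural reading is a single strategy $\sigma$ such that every finite play consistent with $\sigma$ ends in $\W(X,Y)$, which is exactly what is used above. Under this reading, no compactness argument is required and the two constructions are mutually inverse up to the obvious encoding of strategies as the sets of positions they render reachable, completing the proof.
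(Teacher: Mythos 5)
Your proposal is correct and follows essentially the same route as the paper: the paper likewise turns the back-and-forth system directly into the set of positions of a winning strategy, and conversely takes the set $W\subseteq\W(X,Y)$ of positions reachable under a winning strategy as the back-and-forth system (noting in passing that this $\B$ is even strong, since $W$ is downwards closed in the synchronous product $\Path{X}\times\Path{Y}$). The only blemish is the inverted covering relation in your converse direction --- you want $n'\succ n$, not $n\succ n'$ --- which is clearly just a slip.
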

\begin{proof}
Clearly, if $\B=\{\br{m_i,n_i}\mid i\in I\}$ is a back-and-forth system between $X$ and $Y$, then the plays in the set
\[
\{(m_i, n_i) \mid i\in I\}\subseteq \Path{X}\times \Path{Y}
\]
yield a winning strategy for Duplicator in the game $\G(X,Y)$.

Conversely, a winning strategy for Duplicator in the game $\G(X,Y)$ determines a set $W\subseteq \W(X,Y)$ of the plays following this strategy, and 
\[
\B\coloneqq \{\br{m,n}\mid (m,n)\in W\}
\]
is a back-and-forth system. In fact, $\B$ is even strong because $W$ is downwards closed in the (synchronous) product of trees $\Path{X}\times \Path{Y}$.
\end{proof}

The previous lemma, combined with Theorem~\ref{th:bisimilar-iff-strong-back-forth}, yields at once the following result:
\begin{thm}\label{th:gamesiffbisim}
Let $X,Y$ be objects of an arboreal category such that $X\pro Y$. Then $X$ and~$Y$ are bisimilar if, and only if, Duplicator has a winning strategy in the game $\G(X,Y)$.  \qed
\end{thm}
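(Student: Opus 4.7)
The plan is straightforward: this theorem is essentially a chaining together of two results already established in the excerpt, namely Theorem~\ref{th:bisimilar-iff-strong-back-forth} and Lemma~\ref{l:back-and-forth-sys-games}. There is no genuine obstacle to overcome; the task is really just to note that the two equivalences compose.

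First, I would invoke Theorem~\ref{th:bisimilar-iff-strong-back-forth}, whose hypothesis $X\pro Y$ matches the hypothesis of the present statement. This yields the equivalence between $X$ and $Y$ being bisimilar and being back-and-forth equivalent, \ie the existence of a back-and-forth system $\B=\{\br{m_i,n_i}\mid i\in I\}$ between $X$ and $Y$ in the sense of Definition~\ref{def:back-and-forth}.

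Next, I would appeal to Lemma~\ref{l:back-and-forth-sys-games}, which requires no hypothesis on products, to translate the existence of a back-and-forth system into the existence of a winning strategy for Duplicator in the game $\G(X,Y)$. The combined chain of biconditionals gives the conclusion. The only thing worth highlighting is that the hypothesis $X\pro Y$ is used only in one direction (constructing a bisimulation from a back-and-forth system via the product $X\times Y$), so the converse direction of Theorem~\ref{th:gamesiffbisim} in fact holds unconditionally; but since the theorem is stated with $X\pro Y$ throughout, no separate bookkeeping is needed.
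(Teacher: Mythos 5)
Your proof is correct and is exactly the paper's own argument: the result is stated there as an immediate consequence of Lemma~\ref{l:back-and-forth-sys-games} combined with Theorem~\ref{th:bisimilar-iff-strong-back-forth}, with no further work. Your side remark about where $X\pro Y$ is actually needed is also accurate (it enters only via Proposition~\ref{p:b-and-f-implies-bisimilar}, so the implication from bisimilarity to a winning strategy holds without it).
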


We record an easy but useful consequence\footnote{Note that, alternatively, we could have deduced Corollary~\ref{cor:bisim-equiv-rel} from Theorem~\ref{th:bisimilar-iff-strong-back-forth}.} of the previous theorem:
\begin{cor}\label{cor:bisim-equiv-rel}
In an arboreal category, suppose that $X\pro Y$, $Y\pro Z$ and $X\pro Z$. If $X$ and~$Y$, and $Y$ and $Z$, are bisimilar, then so are $X$ and $Z$.
\end{cor}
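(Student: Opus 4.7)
The plan is to avoid constructing a bisimulation $X \leftarrow W \rightarrow Z$ directly (via a pullback-like construction on the given spans $X \leftarrow U \rightarrow Y$ and $Y \leftarrow V \rightarrow Z$, which would need further categorical assumptions) and instead route through back-and-forth systems, using Theorem~\ref{th:bisimilar-iff-strong-back-forth} at both ends. Since $X \pro Y$ and $Y \pro Z$, bisimilarity yields back-and-forth systems
\[
\B_1 = \{\br{m_i, p_i} \mid i \in I\} \quad \text{and} \quad \B_2 = \{\br{q_j, n_j} \mid j \in J\}
\]
between $X,Y$ and between $Y,Z$ respectively. I will then compose them into a back-and-forth system between $X$ and $Z$, and invoke Theorem~\ref{th:bisimilar-iff-strong-back-forth} once more, using $X \pro Z$, to obtain a bisimulation.

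The natural candidate is
\[
\B \coloneqq \{\br{m,n} \mid m \in \Path{X},\ n \in \Path{Z},\ \exists\, p \in \Path{Y} \text{ with } \br{m,p} \in \B_1 \text{ and } \br{p,n} \in \B_2\}.
\]
Note that $\br{m,p} \in \B_1$ and $\br{p,n} \in \B_2$ force $\dom(m) \cong \dom(p) \cong \dom(n)$, so any element of $\B$ is indeed of the form $\br{m,n}$ in the sense of Section~\ref{s:back-and-forth}. For the initial condition, observe that $\br{\bot_X, \bot_Y} \in \B_1$ and $\br{\bot_Y, \bot_Z} \in \B_2$, and $\bot_Y$ serves as the required intermediate witness, so $\br{\bot_X, \bot_Z} \in \B$.

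For the forth condition, take $\br{m,n} \in \B$ with intermediate $p \in \Path{Y}$ and suppose $m \cvr m'$ in $\Path{X}$. Apply the forth property of $\B_1$ to obtain $p' \in \Path{Y}$ with $p \cvr p'$ and $\br{m', p'} \in \B_1$; then apply the forth property of $\B_2$ to $\br{p,n} \in \B_2$ and $p \cvr p'$ to obtain $n' \in \Path{Z}$ with $n \cvr n'$ and $\br{p', n'} \in \B_2$. Thus $\br{m', n'} \in \B$, witnessed by $p'$. The back condition is symmetric.

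The only subtlety — what I would flag as the main point to check — is that the intermediate element $p$ must be a single path embedding witnessing both $\B_1$- and $\B_2$-membership simultaneously, which is precisely why the covering relation needs to be transported through $Y$ in the right order in each direction (forth in $\B_1$ first, then forth in $\B_2$; and dually for back). Everything else is bookkeeping. Once $\B$ is verified to be a back-and-forth system, Theorem~\ref{th:bisimilar-iff-strong-back-forth} applied with $X \pro Z$ yields that $X$ and $Z$ are bisimilar.
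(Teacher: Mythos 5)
Your proof is correct. The underlying idea is identical to the paper's---compose the witnessing data through $Y$---but you execute it via back-and-forth systems and Theorem~\ref{th:bisimilar-iff-strong-back-forth}, whereas the paper phrases the composition in terms of Duplicator strategies and deduces the corollary from Theorem~\ref{th:gamesiffbisim}; the paper's own footnote explicitly notes that your route (deducing the corollary from Theorem~\ref{th:bisimilar-iff-strong-back-forth}) is an available alternative. The two formulations are interchangeable by Lemma~\ref{l:back-and-forth-sys-games}, so nothing substantive is gained or lost either way; if anything, your version makes the verification of the three conditions of Definition~\ref{def:back-and-forth} slightly more explicit, and your flagged subtlety---that a single intermediate $p\in\Path{Y}$ must witness both memberships, with the covering relation transported through $Y$ in the correct order---is exactly the point the paper's inductive-hypothesis bookkeeping is handling. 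One minor remark: the hypotheses $X\pro Y$ and $Y\pro Z$ are not actually needed for the forward direction you use (Proposition~\ref{p:bisimilar-implies-back-and-forth} requires no products); only $X\pro Z$ is essential, for the final application of Proposition~\ref{p:b-and-f-implies-bisimilar}.
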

\begin{proof}
This is a consequence of Theorem~\ref{th:gamesiffbisim}. Just observe that, if Duplicator has a winning strategy for the games $\G(X,Y)$ and $\G(Y,Z)$, then the composition of these strategies is a winning strategy for the game $\G(X,Z)$.

In more detail, let $\bot_X,\bot_Y$ and $\bot_Z$ be the roots of $\Path{X},\Path{Y}$ and $\Path{Z}$, respectively. Then $(\bot_X,\bot_Y)\in \W(X,Y)$ and $(\bot_Y,\bot_Z)\in \W(Y,Z)$ entail $\dom(\bot_X)\cong \dom(\bot_Y)\cong \dom(\bot_Z)$, and so $(\bot_X,\bot_Z)\in \W(X,Z)$. After $i$ rounds of the game, the position is given by a pair $(m,p)\in\Path{X}\times \Path{Z}$. By inductive hypothesis, we can assume that there exists $n\in \Path{Y}$ such that $(m,n)\in \W(X,Y)$ and $(n,p)\in\W(Y,Z)$. Suppose that, in the following round, Spoiler chooses some $m'\succ m$. Following their winning strategies for the games $\G(X,Y)$ and $\G(Y,Z)$, Duplicator can find $n'\succ n$ and $p'\succ p$ such that $(m',n')\in\W(X,Y)$ and $(n',p')\in\W(Y,Z)$, and so $(m',p')\in\W(X,Z)$. Therefore, Duplicator will respond with $p'$. If Spoiler chooses some $p''\succ p$, the argument is the same, mutatis mutandis.
\end{proof}

\begin{rem}
In view of Corollary~\ref{cor:bisim-equiv-rel}, bisimilarity is an equivalence relation in any arboreal category admitting binary products.
\end{rem}

There is also an \emph{existential positive} version of the game $\G(X,Y)$, denoted by $\EG(X,Y)$, where Spoiler always plays in $\Path{X}$, and Duplicator always responds in $\Path{Y}$. The winning relation $\W(X,Y)\subseteq \Path{X}\times\Path{Y}$ consists of the pairs $(m,n)$ such that there exists a morphism $\dom(m)\to\dom(n)$. 

\begin{prop}\label{pr:exist-game}
Let $\C$ be an arboreal category and let $X,Y$ be any two objects of $\C$. If $\Path\colon\C\to\T$ is faithful, then Duplicator has a winning strategy in the game $\EG(X,Y)$ if, and only if, there exists a morphism $X\to Y$.  
\end{prop}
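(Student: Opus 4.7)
For the easier direction ($\Leftarrow$), given a morphism $f\colon X\to Y$, I would define Duplicator's strategy by responding with $\Path{f}(m)\in\Path{Y}$ to each Spoiler move $m\in\Path{X}$. By Theorem~\ref{thm:path-cat-functor-into-trees}, $\Path{f}$ is a tree morphism; hence it preserves roots and the covering relation, so these responses are always legal. Moreover, the (quotient, embedding) factorisation $f\circ m = \Path{f}(m)\circ e$ exhibits a morphism $e\colon \dom(m)\to\dom(\Path{f}(m))$, witnessing that each position $(m,\Path{f}(m))$ lies in $\W(X,Y)$.

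For the converse direction ($\Rightarrow$), since $\Path{X}$ is a tree (Lemma~\ref{l:PathX-tree}), each element has a unique path from the root; replaying this path against the winning strategy yields an unambiguous Duplicator response, from which I would extract a tree morphism $\sigma\colon \Path{X}\to\Path{Y}$ together with, for every $m\in\Path{X}$, a morphism $\alpha_m\colon \dom(m)\to\dom(\sigma(m))$ witnessing $(m,\sigma(m))\in\W(X,Y)$. I would then set $g_m\coloneqq \sigma(m)\circ\alpha_m\colon \dom(m)\to Y$ and show that $\{g_m\}_{m\in\Path{X}}$ is a cocone on the diagram of path embeddings with codomain $X$. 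Since $X$ is path-generated, this cocone factors uniquely through $X$, yielding the desired morphism $f\colon X\to Y$.

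The main obstacle is verifying cocone compatibility. Given $m\leq m'$ in $\Path{X}$, the uniqueness of factorisations (Lemma~\ref{l:factorisation-properties}) provides an embedding $\phi\colon \dom(m)\emb\dom(m')$ with $m'\circ\phi=m$ and, from monotonicity of $\sigma$, an embedding $\psi\colon \dom(\sigma(m))\emb\dom(\sigma(m'))$ with $\sigma(m')\circ\psi=\sigma(m)$. The equation $g_{m'}\circ\phi=g_m$ unfolds to $\sigma(m')\circ\alpha_{m'}\circ\phi = \sigma(m')\circ\psi\circ\alpha_m$ and, since $\sigma(m')$ is a monomorphism, reduces to the equation $\alpha_{m'}\circ\phi=\psi\circ\alpha_m$ between morphisms from $\dom(m)$ to $\dom(\sigma(m'))$. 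Here is where I would invoke the faithfulness of $\Path$: applying $\Path$ to both sides yields tree morphisms between the finite chains $\Path{\dom(m)}$ and $\Path{\dom(\sigma(m'))}$, and any two such must agree (as they both preserve the root and the covering relation, which in a chain forces the map uniquely). Faithfulness of $\Path$ then delivers the desired equality, completing the construction.
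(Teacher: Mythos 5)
Your proof is correct and follows essentially the same route as the paper's: the easy direction uses $\{(m,\Path{f}(m))\}$ as the strategy, and the converse builds a cocone over the diagram of path embeddings into $X$ and invokes path-generatedness, with faithfulness of $\Path$ entering exactly as in the paper—via the observation that there is at most one tree morphism between finite chains, so the subcategory of paths is a preorder and the compatibility squares commute. Your packaging of the strategy as a tree morphism $\sigma$ with witnesses $\alpha_m$, plus cancelling the monomorphism $\sigma(m')$, is just a mild reorganisation of the paper's inductive construction of the $e_m$.
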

\begin{proof}
Suppose that there exists a morphism $f\colon X\to Y$ in $\C$. As $\Path{f}$ is a tree morphism,
\begin{equation}\label{eq:mor-winning-st}
\{(m, \Path{f}(m)) \mid m\in\Path{X}\}\subseteq \Path{X}\times \Path{Y}
\end{equation}
is a winning strategy for Duplicator in the game $\EG(X,Y)$. 

Conversely, we show that a winning strategy for Duplicator in the game $\EG(X,Y)$ determines a cocone with vertex $Y$ over the diagram of paths that embed into $X$. Since $X$ is path-generated, this cocone induces a morphism $X\to Y$. We explain in detail how the cocone with vertex $Y$ is constructed.

Fixing representatives, we can regard $\Path{X}$ as a cocone in $\C$ over a small diagram $D$ consisting of paths and path embeddings. For a generic path embedding $m\colon P\emb X$ in $D$, the cocone 
\[
E=\{e_m\colon P\to Y \mid m\in D\} 
\]
is defined by induction on the height of $m$ as follows:
\begin{enumerate}[label=(\roman*)]
\item\label{base-case} For the base case, suppose that $m=\bot_X\colon P\emb X$ is the root of $\Path{X}$, and let $\bot_Y\colon Q\emb Y$ be the root of $\Path{Y}$. Because $(\bot_X,\bot_Y)\in\W(X,Y)$, there is a morphism $h\colon P\to Q$. Set $e_m\coloneqq \bot_Y\circ h$.
\item\label{inductive-step} Assume that, for all $m\in D$ of height at most $k$ in $\Path{X}$, $e_m$ has been defined such that $e_m=n\circ h$ for some $n\in\Path{Y}$ satisfying $(m, n)\in\W(X,Y)$. If $m'\in D$ has height $k+1$, there exists $m\in D$ such that $m\prec m'$, and so $e_m=n\circ h$ for some $n\in\Path{Y}$ satisfying $(m, n)\in\W(X,Y)$. Thus, there is $n'\in\Path{Y}$ such that ${n\prec n'}$ and $(m',n')\in\W(X,Y)$. If $h'\colon \dom(m')\to\dom(n')$ is any morphism in $\C$, we let $e_{m'}\coloneqq n'\circ h'$.
\end{enumerate}
The compatibility condition for the cocone $E$ states that, for all $m,m'\in D$ with $m\leq m'$, $e_{m'}$ extends $e_{m}$; an easy inductive argument shows that it suffices to settle the case where $m\prec m'$. If $\Path\colon\C\to\T$ is faithful, then the full subcategory $\Cp$ of $\C$ defined by the paths is a preorder. Just observe that there is at most one tree morphism between any two finite chains. Therefore, any diagram in $\Cp$ is commutative. Suppose that $e_m=n\circ h$ and $e_{m'}=n'\circ h'$ are as in item~\ref{inductive-step} above. Because $m\prec m'$, there exists an embedding $\dom(m)\emb \dom(m')$. Moreover, $n\prec n'$ entails the existence of an embedding $\dom(n)\emb\dom(n')$ making the rightmost triangle below commute.
\[\begin{tikzcd}[column sep=3em]
\dom(m) \arrow[rightarrowtail]{d} \arrow{r}{h} & \dom(n) \arrow[rightarrowtail]{d} \arrow[rightarrowtail]{r}{n} & Y \\
\dom(m') \arrow{r}{h'} & \dom(n') \arrow[rightarrowtail, shift right]{ur}[swap]{n'} &
\end{tikzcd}\]
Since the leftmost rectangle is a diagram in $\Cp$, it must commute. We conclude that $e_{m'}$ extends $e_{m}$.
\end{proof}

\begin{exas}
In the case of the arboreal categories $\RTk(\sg)$, $\RPk(\sg)$ and $\RMk(\sg)$, it follows from their description in Example~\ref{RTex} that the functor $\Path$ is faithful. In that case, the abstract existential positive game $\EG(X,Y)$ coincides, respectively, with the asymmetric (forth only) versions of the $k$-round Ehrenfeucht-\Fraisse, $k$-pebble and $k$-round bisimulation games. These capture, respectively, equivalence in the existential positive fragments of first-order logic with quantifier rank at most $k$, of $k$-variable first-order logic, and of modal logic with modal depth at most $k$ (in the modal case, the existential positive fragment consists of those formulas that do not use $\Box$ nor $\neg$).
\end{exas}

\begin{rem}
Direct inspection of the proof of Proposition~\ref{pr:exist-game} shows that the assumption that $\Path\colon \C\to\T$ is faithful can be replaced with the weaker condition that there is at most one arrow between any two paths of $\C$.
\end{rem}

Finally, we introduce a third type of game which sits in between $\G(X,Y)$ and $\EG(X,Y)$. The \emph{existential} game $\SEG(X,Y)$ has the same winning relation as the game $\G(X,Y)$, but Spoiler always plays in $\Path{X}$ and Duplicator always responds in $\Path{Y}$.

\begin{prop}\label{pr:strong-exist-game}
Let $\C$ be an arboreal category and let $X,Y$ be any two objects of $\C$. Duplicator has a winning strategy in the game $\SEG(X,Y)$ if, and only if, there exists a pathwise embedding $X\to Y$.  
\end{prop}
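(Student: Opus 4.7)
The plan is to adapt the proof of Proposition~\ref{pr:exist-game}, exploiting the fact that in this variant of the game the winning relation requires an \emph{isomorphism} rather than merely a morphism between domains. The key new ingredient, which upgrades "morphism" to "pathwise embedding" in the conclusion, is Lemma~\ref{l:arboreal-consequences}\ref{at-most-one-emb}: between any two paths there is at most one embedding. Note that we can no longer quote Proposition~\ref{pr:exist-game} directly, since we do not assume $\Path$ is faithful.

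For the forward direction, if $f\colon X\to Y$ is a pathwise embedding, then for each $m\in\Path{X}$ the composite $f\circ m$ is a path embedding with $\dom(f\circ m)=\dom(m)$, so $(m,\Path{f}(m))\in\W(X,Y)$; combined with Proposition~\ref{p:Path-f-tree-morphism}, this gives a winning strategy $\{(m,\Path{f}(m))\mid m\in\Path{X}\}$ for Duplicator in $\SEG(X,Y)$.

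For the converse, fix a winning strategy and, as in the proof of Proposition~\ref{pr:exist-game}, fix a small diagram $D$ of representatives for $\Path{X}$ consisting of paths and path embeddings. Define a cocone $E=\{e_m\colon \dom(m)\to Y\mid m\in D\}$ by induction on the height of $m$: for the root $\bot_X$, the strategy yields $\bot_Y$ with $\dom(\bot_X)\cong \dom(\bot_Y)$, and we pick an isomorphism $h$ and set $e_{\bot_X}\coloneqq \bot_Y\circ h$; for $m'$ of height $k+1$ with predecessor $m\prec m'$, the inductive step writes $e_m=n\circ h$ with $(m,n)\in\W(X,Y)$ and $h$ an isomorphism, then uses the strategy to get $n'\succ n$ with $(m',n')\in\W(X,Y)$, picks an isomorphism $h'\colon\dom(m')\to\dom(n')$, and sets $e_{m'}\coloneqq n'\circ h'$.

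The main obstacle is verifying compatibility of the cocone along a covering $m\prec m'$ with embedding $k\colon\dom(m)\emb\dom(m')$. Writing $\ell\colon\dom(n)\emb\dom(n')$ for the embedding witnessing $n\prec n'$, compatibility $e_{m'}\circ k=e_m$ becomes $n'\circ h'\circ k=n'\circ\ell\circ h$, which, since $n'$ is monic, reduces to $h'\circ k=\ell\circ h$. Both sides are embeddings between paths, hence they coincide by Lemma~\ref{l:arboreal-consequences}\ref{at-most-one-emb}; compatibility along arbitrary comparable pairs in $D$ then follows because there is at most one embedding between any two paths in $\C$. Since $X$ is path-generated, $E$ induces a unique morphism $f\colon X\to Y$ satisfying $f\circ m=e_m$ for all $m\in D$. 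As each $e_m=n\circ h$ is the composite of an embedding with an isomorphism, it is an embedding; for an arbitrary path embedding $m\colon P\emb X$ we have $m=\tilde m\circ\alpha$ for a representative $\tilde m\in D$ and isomorphism $\alpha$, so $f\circ m=e_{\tilde m}\circ\alpha$ is also an embedding. Hence $f$ is a pathwise embedding, completing the proof.
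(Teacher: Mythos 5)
Your proof is correct and follows essentially the same route as the paper: the forward direction uses the strategy $\{(m,\Path{f}(m))\mid m\in\Path{X}\}$, and the converse builds a cocone of embeddings over the diagram of paths by induction on height and invokes path-generation. The only difference is presentational: where the paper says compatibility of the cocone ``follows by the definition of $E$,'' you spell out the reduction to $h'\circ k=\ell\circ h$ and justify it via Lemma~\ref{l:arboreal-consequences}\ref{at-most-one-emb}, which is exactly the uniqueness fact the paper is implicitly relying on.
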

\begin{proof}
If $f\colon X\to Y$ is a pathwise embedding, then the set in equation~\eqref{eq:mor-winning-st} is a winning strategy for Duplicator in the game $\SEG(X,Y)$.
Just observe that, since $f$ is a pathwise embedding, for all $m\in\Path{X}$ we have $\Path{f}(m)=f\circ m$ and so $\dom(m)=\dom(\Path{f}(m))$.

For the other direction, suppose that Duplicator has a winning strategy in the game $\SEG(X,Y)$. We proceed as in the proof of Proposition~\ref{pr:exist-game} and define inductively a cocone 
\[
E=\{e_m\colon P\to Y \mid m\in D\}
\] 
over the diagram $D$ of paths that embed in $X$. This time, all the morphisms $e_m$ will be embeddings, and so the induced morphism $X\to Y$ will be a pathwise embedding.

\begin{enumerate}[label=(\roman*)]
\item For the base case, suppose that $m=\bot_X\colon P\emb X$ is the root of $\Path{X}$, and let $\bot_Y\colon Q\emb Y$ be the root of $\Path{Y}$. As $(\bot_X,\bot_Y)\in\W(X,Y)$, we have $P\cong Q$. Upon identifying $P$ and $Q$, we set $e_m\coloneqq \bot_Y$.
\item Assume that, for all $m\in D$ of height at most $k$ in $\Path{X}$, $e_m$ has been defined such that $(m, e_m)\in\W(X,Y)$. If $m'\in D$ has height $k+1$, there exists $m\in D$ such that $m\prec m'$, and so $(m, e_m)\in\W(X,Y)$. Thus, there is $n'\in\Path{Y}$ such that ${e_m\prec n'}$ and $(m',n')\in\W(X,Y)$. Let $e_{m'}\coloneqq n'$. Upon composing with an appropriate isomorphism, we can assume that $\dom(m')=\dom(e_{m'})$.
\end{enumerate}
To verify the compatibility condition for the cocone $E$, it suffices to show that for all $m,m'\in D$, if $m\prec m'$ in $\Path{X}$ then $e_m\prec e_{m'}$ in $\Path{Y}$. In turn, this follows by the definition of $E$.
\end{proof}

\begin{exas}
What does the existential game $\SEG(X,Y)$ correspond to in concrete cases?
For instance, as shown in the forthcoming work~\cite{ALR23}, in the case of the arboreal categories $\RPk(\sg)$ and $\RTk(\sg)$, the game $\SEG(X,Y)$ coincides with the one-way versions of the $k$-pebble and $k$-round Ehrenfeucht-\Fraisse~games, respectively, in which the winning condition is that the ensuing relation is a partial isomorphism. 
For pebble games, this variant of the game was shown in~\cite[Theorem~4.8]{KV1995} (\cf also \cite[Proposition~6]{RosenPhDthesis1995}) to capture the \emph{existential} fragment of (infinitary) $k$-variable logic. This is the fragment where the universal quantifier does not appear, and negation can only be applied to atomic formulas. 
Similarly, for the $k$-round Ehrenfeucht-\Fraisse~games, it captures the existential fragment of first-order logic with quantifier rank at most $k$. Since a formula of quantifier rank $k$ can fruitfully use at most $k$ variables, the latter statement can be deduced \eg from \cite[Proposition~5]{RosenPhDthesis1995} for finite-round pebble games. 
\end{exas}

We conclude this section with an application of the games introduced above. In Section~\ref{s:arboreal-covers}, we shall see that the following lemma is an abstraction of the (trivial) observation that any two $\sg$-structures that are equivalent in a logic $\LL$ are also equivalent in the existential (positive) fragment of $\LL$, \cf Theorem~\ref{t:logic-fragm-character}. Let us say that two objects $X,Y$ in a category are \emph{homomorphically equivalent} if there exist morphisms $X\to Y$ and $Y\to X$ (these need not satisfy any further property).
\begin{lem}\label{l:bisimilar-hom-equivalent}
Let $X,Y$ be bisimilar objects of an arboreal category. Then there exist pathwise embeddings $X\to Y$ and $Y\to X$. In particular, $X$ and $Y$ are homomorphically~equivalent.
\end{lem}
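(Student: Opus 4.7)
The plan is to route the argument through the game-theoretic characterisations already established, rather than manipulating the bisimulating span $X \xleftarrow{f} Z \xrightarrow{g} Y$ directly. By Proposition~\ref{pr:strong-exist-game}, a pathwise embedding $X \to Y$ exists precisely when Duplicator has a winning strategy in the existential game $\SEG(X,Y)$, and symmetrically for $Y \to X$. So the task reduces to producing such winning strategies from the hypothesis that $X$ and $Y$ are bisimilar.

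The key observation is that the winning relation $\W(X,Y) \subseteq \Path{X}\times \Path{Y}$ is the \emph{same} in the symmetric game $\G(X,Y)$ and in its existential restriction $\SEG(X,Y)$: the only difference between the two games is that in $\SEG(X,Y)$ Spoiler is forbidden to play in $\Path{Y}$. Consequently, any winning strategy for Duplicator in $\G(X,Y)$ immediately restricts to a winning strategy in $\SEG(X,Y)$, because Duplicator has to respond to fewer possible moves. Starting from bisimilarity, Proposition~\ref{p:bisimilar-implies-back-and-forth} produces a (strong) back-and-forth system between $X$ and $Y$; Lemma~\ref{l:back-and-forth-sys-games} translates this into a winning strategy for Duplicator in $\G(X,Y)$; restricting it gives a winning strategy in $\SEG(X,Y)$; and Proposition~\ref{pr:strong-exist-game} then delivers a pathwise embedding $X\to Y$.

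A pathwise embedding $Y \to X$ is obtained by exactly the same argument applied to the mirrored span $Y \xleftarrow{g} Z \xrightarrow{f} X$ (bisimilarity is manifestly symmetric). Since every pathwise embedding is in particular a morphism of $\C$, homomorphic equivalence of $X$ and $Y$ follows at once. The proof has no substantial obstacle: the only conceptual step is noticing that $\SEG(X,Y)$ is simply $\G(X,Y)$ with Spoiler's move-set restricted, so that winning strategies transfer for free; everything else is a direct application of the already-proved Propositions~\ref{p:bisimilar-implies-back-and-forth} and~\ref{pr:strong-exist-game} together with Lemma~\ref{l:back-and-forth-sys-games}.
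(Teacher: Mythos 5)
Your proof is correct and follows essentially the same route as the paper: the paper also deduces the lemma from Proposition~\ref{p:bisimilar-implies-back-and-forth}, Lemma~\ref{l:back-and-forth-sys-games}, and Proposition~\ref{pr:strong-exist-game}, with the same key observation that a winning strategy for Duplicator in $\G(X,Y)$ restricts to one in $\SEG(X,Y)$.
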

\begin{proof}
This follows directly from Lemma~\ref{l:back-and-forth-sys-games} and Propositions~\ref{p:bisimilar-implies-back-and-forth} and~\ref{pr:strong-exist-game}. Just observe that, if Duplicator has a winning strategy in the game $\G(X,Y)$, then they have a winning strategy in the existential game $\SEG(X,Y)$.
\end{proof}
%

%%%%%%%%%%%%%%%%%%%%%%%%%%%%
\section{Arboreal Covers}\label{s:arboreal-covers}

We now return to the underlying motivation for the axiomatic development in this paper. Arboreal categories have a rich intrinsic process structure, which allows ``dynamic'' notions such as bisimulation and back-and-forth games, and resource notions such as the height of a tree, to be defined. A key idea is to relate these process notions to extensional, or ``static'' structures. In particular, much of finite model theory and descriptive complexity can be seen in this way.

In the general setting, we have an arboreal category $\C$, and another category $\E$, which we think of as the extensional category. 
\begin{defi}\label{arbcovdef}
An \emph{arboreal cover} of $\E$ by the arboreal category $\C$ is given by a comonadic adjunction 
\[ \begin{tikzcd}
\C \arrow[r, bend left=25, ""{name=U, below}, "L"{above}]
\arrow[r, leftarrow, bend right=25, ""{name=D}, "R"{below}]
& \E.
\arrow[phantom, "\textnormal{\footnotesize{$\bot$}}", from=U, to=D] 
\end{tikzcd}
\]
\end{defi}
As for any adjunction, this induces a comonad on $\E$. The comonad is $(G, \ve, \delta)$, where $G \coloneqq LR$, $\ve$ is the counit of the adjunction, and $\delta_a\colon LRa \to LRLRa$ is given by $\delta_a \coloneqq L(\eta_{Ra})$, with $\eta$ the unit of the adjunction.
The comonadicity condition states that the comparison functor from $\C$ to the Eilenberg-Moore category of coalgebras for this comonad is an isomorphism.
The idea is then that we can use the arboreal category $\C$, with its rich process structure and all the associated notions, to study the extensional category $\E$ via the adjunction. Both the Kleisli category of the comonad, and the full Eilenberg-Moore category, are useful in this regard.

We now bring resources into the picture.
\begin{defi}\label{resarbdef}
Let $\C$ be an arboreal category, with full subcategory of paths $\Cp$. We say that $\C$ is \emph{resource-indexed} by a resource parameter $k$ if for all $k > 0$, there is a full subcategory $\Cp^k$ of $\Cp$ closed under embeddings\footnote{\label{fn:closure-emb}That is, for any embedding $P\emb Q$ in $\C$ with $P,Q$ paths, if $Q\in \Cp^k$ then also $P\in \Cp^k$. We shall further assume that each category $\Cp^k$ contains the initial object of $\C$ (which is always a path).} with
\[ \Cp^1 \into \Cp^2 \into \Cp^3 \into \cdots \]
This induces a corresponding tower of full subcategories $\C_k$ of $\C$, with the objects of $\C_k$ those whose cocone of path embeddings with domain in $\Cp^k$ is a colimit cocone in $\C$.
\end{defi}

\begin{exas}
One resource parameter which is always available is to take $\Cp^k$ to be given by those paths in $\C$ whose chain of subobjects has cardinality $\leq k$. In the case of $\F$ and $\T$, the corresponding categories $\F_k$ and $\T_k$ consist of the forests and trees of height at most $k$, respectively. We can think of this as a temporal parameter, restricting the number of sequential steps, or the number of rounds in a game. For the Ehrenfeucht-\Fraisse~and modal comonads, we recover $\RTk(\sg)$ and $\RMk(\sg)$ as described in Example~\ref{RTex}, corresponding to $k$-round versions of the Ehrenfeucht-\Fraisse~and modal bisimulation games respectively \cite{AS2021}. However, note that for the pebbling comonad, the relevant resource index is the number of pebbles, which is a memory restriction along a computation or play of a game. This leads to $\RPk(\sg)$ as described in Example~\ref{RTex}.
\end{exas}

\begin{rem}
The notion of resource-indexed arboreal category in Definition~\ref{resarbdef} is intended to reflect the fact that, in all examples available to date, there is an order on the set of resources which can be identified with the total order of natural numbers. However it is reasonable to expect that, as new examples arise, it will be necessary to generalize this concept by allowing \eg a partial order of resources; this can be done in a straightforward way by adapting Definition~\ref{resarbdef}.
\end{rem}

In Proposition~\ref{p:C_k-arboreal} below we shall see that, given a resource-indexed arboreal category $\{ \C_k \}$, each category $\C_k$ is arboreal.
This allows us to exploit the ideas developed in this paper for any choice of the resource parameter~$k$. We start by establishing the following facts:
\begin{lem}\label{l:path-Ckp}
Let $\{ \C_k \}$ be a resource-indexed arboreal category. For any path embedding $P\emb Y$ in $\C$, if $Y\in\C_k$ then $P\in \Cp^k$.
\end{lem}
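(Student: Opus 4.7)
The strategy will be to show that the path embedding $m \colon P \emb Y$ factors through some path embedding $Q \emb Y$ with $Q \in \Cp^k$. Once that is established, $P$ embeds into $Q \in \Cp^k$, and the closure of $\Cp^k$ under embeddings (footnote~\ref{fn:closure-emb}) immediately gives $P \in \Cp^k$. The bulk of the proof is therefore the factorisation step, and this is where the colimit property assumed for objects of $\C_k$ enters.

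Concretely, let $\U \coloneqq \{(Q \emb Y) \in \Path{Y} \mid Q \in \Cp^k\}$. By Lemma~\ref{l:path-cat-suprema}\ref{suprema-of-paths}, $\U$ admits a supremum $n \colon T \emb Y$ in $\Emb{Y}$. The plan is to show that $n$ is an isomorphism, that is, $\bigvee \U = \id_Y$ in $\Emb{Y}$. For each $(Q \emb Y) \in \U$, the inequality $(Q \emb Y) \leq n$ provides a (necessarily embedding, by Lemma~\ref{l:factorisation-properties}\ref{cancellation-m}) factorisation $j_Q \colon Q \emb T$ such that $n \circ j_Q = (Q \emb Y)$. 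Because $n$ is a monomorphism, the family $\{j_Q\}$ forms a cocone over the same indexing diagram as the defining cocone of $\U$. Since $Y \in \C_k$, that defining cocone is a colimit cocone in $\C$, so there exists a unique $\phi \colon Y \to T$ with $\phi \circ (Q \emb Y) = j_Q$ for every $Q$. Postcomposing with $n$, both $n \circ \phi$ and $\id_Y$ fill in as mediating arrows from the colimit to itself, hence coincide by uniqueness. Therefore $n$ is a split epimorphism as well as an embedding, and thus an isomorphism (Lemma~\ref{l:factorisation-properties}\ref{isos}), giving $\bigvee \U = \id_Y$.

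With this in hand, the path embedding $m \colon P \emb Y$ satisfies $m \leq \id_Y = \bigvee \U$ in $\Emb{Y}$. By Proposition~\ref{pr:perfect-lattice-of-strong-subs}, $m$ lies below some element of $\U$, i.e.\ there is $Q \in \Cp^k$ and an embedding $P \emb Q$ whose composite with $Q \emb Y$ equals $m$. Closure of $\Cp^k$ under embeddings then delivers $P \in \Cp^k$, as required.

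The main delicate point, which I expect to be the principal obstacle, is identifying the correct indexing diagram for the colimit cocone attached to an object of $\C_k$ and verifying that $\{j_Q\}$ is genuinely a cocone over it. Reading the condition in parallel with Lemma~\ref{l:path-gen-dense}, one takes the comma category of path embeddings into $Y$ with domain in $\Cp^k$; then commutativity for $\{j_Q\}$ over morphisms in this diagram follows from commutativity of the corresponding triangles in $Y$ by cancelling the monomorphism $n$. Once this bookkeeping is settled, the rest of the argument is routine.
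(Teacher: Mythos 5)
Your proposal is correct and follows essentially the same route as the paper: form the set $\U$ of path embeddings into $Y$ with domain in $\Cp^k$, show $\bigvee\U=\id_Y$ in $\Emb{Y}$ from the colimit property defining membership in $\C_k$, apply Proposition~\ref{pr:perfect-lattice-of-strong-subs} to factor $P\emb Y$ through some element of $\U$, and finish by closure of $\Cp^k$ under embeddings; you even spell out the mediating-morphism argument for $\bigvee\U=\id_Y$ that the paper leaves implicit. The one small point you omit is that Proposition~\ref{pr:perfect-lattice-of-strong-subs} requires $\U$ to be non-empty --- the paper secures this by first disposing of the case where $Y$ is initial (there every embedding into $Y$ is an isomorphism, so $P\in\Cp^k$ by Footnote~\ref{fn:closure-emb}) and observing that for non-initial $Y\in\C_k$ the defining colimit diagram, hence $\U$, cannot be empty.
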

\begin{proof}
Fix an arbitrary path embedding $P\emb Y$ in $\C$. If $Y$ is initial, then the latter embedding is an isomorphism and thus $P$ belongs to $\Cp^k$ by assumption (see Footnote~\ref{fn:closure-emb}). Suppose now that $Y$ is not initial and consider the set
\[
\U\coloneqq \{p\in \Path{Y}\mid \dom(p)\in \Cp^k\}.
\]
Note that $\U$ is well-defined because any two representatives in the equivalence class of $p$ have isomorphic domains, and $\Cp^k$ is closed under isomorphisms. As $Y$ is the colimit in $\C$ of the subdiagram of $\Path{Y}$ consisting of those path embeddings whose domain is in $\Cp^k$, it follows that $\bigvee \U=\id_Y$ in $\Emb{Y}$. Because $Y$ is not initial, the set $\U$ is non-empty. If there exists a path embedding $m\colon P\emb Y$, then $m\leq \bigvee\U$ in $\Path{Y}$ and so, by Proposition~\ref{pr:perfect-lattice-of-strong-subs}, $m$ factors through some $p\in \U$. In particular, there exists an embedding $P\emb \dom(p)$. Because $\dom(p)\in \Cp^k$ and the latter is closed under embeddings, we see that $P\in\Cp^k$.
\end{proof}

\begin{lem}\label{l:C_k-subobjects}
Let $\{ \C_k \}$ be a resource-indexed arboreal category and suppose that $X\emb Y$ is an embedding in $\C$. For any $k$, if $Y\in \C_k$ then also $X\in \C_k$.
\end{lem}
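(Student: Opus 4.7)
The plan is to reduce the claim to the fact that $X$ is path-generated in the arboreal category $\C$, by showing that the subdiagram of $\Path{X}$ consisting of those path embeddings whose domain lies in $\Cp^k$ is actually the whole of $\Path{X}$. Once this is established, the cocone defining membership in $\C_k$ coincides with the canonical colimit cocone of path embeddings into $X$, and the conclusion $X\in\C_k$ is immediate.

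Concretely, I would fix an arbitrary path embedding $n\colon P\emb X$ in $\C$ and compose it with the given embedding $m\colon X\emb Y$. Since embeddings are closed under composition (Lemma~\ref{l:factorisation-properties}\ref{compositions}), the composite $m\circ n\colon P\emb Y$ is again a path embedding. As $Y\in\C_k$ by hypothesis, Lemma~\ref{l:path-Ckp} applies and yields $P\in \Cp^k$.

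Since $n$ was arbitrary, every path embedding into $X$ has domain in $\Cp^k$. Therefore the subdiagram of $\Path{X}$ whose objects are path embeddings with domain in $\Cp^k$ coincides with the full diagram $\Path{X}$. Because $\C$ is arboreal, $X$ is path-generated, so this cocone is a colimit cocone in $\C$. By Definition~\ref{resarbdef}, this is precisely the condition for $X$ to belong to $\C_k$.

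There is no real obstacle here: the work has already been done in Lemma~\ref{l:path-Ckp}, which transfers the resource bound from an ambient object to its path embeddings. The only mild care needed is to invoke that lemma via the composite path embedding $m\circ n$ rather than $n$ itself, since \emph{a priori} only $Y$ (not $X$) is known to lie in $\C_k$.
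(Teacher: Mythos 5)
Your proof is correct and is essentially identical to the paper's: both compose an arbitrary path embedding $P\emb X$ with the given embedding $X\emb Y$, apply Lemma~\ref{l:path-Ckp} to conclude $P\in\Cp^k$, and then use the fact that $X$ is path-generated to deduce $X\in\C_k$. No issues.
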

\begin{proof}
Suppose that $j\colon X\emb Y$ is an embedding in $\C$ and $Y\in\C_k$. 
Since $X$ is path-generated, it is the colimit in $\C$ of the small diagram $\Path{X}$. We show that, for any path embedding $P\emb X$, the path $P$ must belong to $\Cp^k$. It then follows that $X\in\C_k$. Let $m\colon P\emb X$ be an arbitrary path embedding. The composite $j\circ m\colon P\emb Y$ is also a path embedding, and so $P\in \Cp^k$ by Lemma~\ref{l:path-Ckp}.
\end{proof}
\begin{prop}\label{p:C_k-arboreal}
Let $\{ \C_k \}$ be a resource-indexed arboreal category. Then $\C_k$ is an arboreal category for each $k$.
\end{prop}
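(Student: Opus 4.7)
The plan is to verify that the structure of the arboreal category $\C$ descends to the full subcategory $\C_k$. The key preparatory observation is the equivalence: an object $X$ of $\C$ lies in $\C_k$ if and only if every path embedding $P\emb X$ in $\C$ has $P\in\Cp^k$. One direction is Lemma~\ref{l:path-Ckp}; conversely, under this hypothesis the cocone of $\Cp^k$-path embeddings into $X$ coincides with the full cocone over $\Path{X}$, which is a colimit in $\C$ since $\C$ is arboreal, so $X\in\C_k$. Combined with Lemma~\ref{l:C_k-subobjects} and the fullness of $\C_k\into\C$, this immediately yields that the poset $\Emb{X}$ is the same whether computed in $\C$ or in $\C_k$, that the paths of $\C_k$ are precisely the objects of $\Cp^k$, and that the stable proper factorisation system on $\C$ restricts to one on $\C_k$: factorisations $X\epi Z\emb Y$ of morphisms in $\C_k$ keep $Z\in\C_k$ because $Z$ embeds into $Y\in\C_k$, and orthogonality and stability survive because diagonal fillers and pullbacks all land in $\C_k$ by Lemma~\ref{l:C_k-subobjects} and fullness.

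The main task is then axiom~\ref{ax:colimits} of Definition~\ref{def:path-cat}: $\C_k$ admits coproducts of small families of paths. Given a family $\{P_i\}_{i\in I}$ in $\Cp^k$ (the empty case returning the initial object $\0$, which lies in $\Cp^k$ by the standing hypothesis in Definition~\ref{resarbdef}), I would form $S\coloneqq\coprod_{i\in I}P_i$ in $\C$ and verify $S\in\C_k$ via the characterisation, i.e.\ that every path embedding $m\colon P\emb S$ satisfies $P\in\Cp^k$. By connectedness of $P$ in $\C$, $m$ factors as $m=\iota_j\circ h$ for some coproduct injection $\iota_j\colon P_j\to S$ and morphism $h\colon P\to P_j$. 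Take the $(\Q,\M)$-factorisation $P\xepi{q}P'\xemb{e}P_j$ of $h$, then the $(\Q,\M)$-factorisation $P'\xepi{q'}P''\xemb{e'}S$ of $\iota_j\circ e$, yielding $m=e'\circ(q'\circ q)$. Uniqueness of $(\Q,\M)$-factorisations applied to the embedding $m$ forces $q'\circ q$ to be an isomorphism; then Lemma~\ref{l:factorisation-properties}\ref{cancellation-m} and~\ref{isos} in turn force $q$ itself to be an isomorphism, so $P\cong P'$, and closure of $\Cp^k$ under embeddings gives $P\in\Cp^k$. The universal property of $S$ in $\C_k$ follows from fullness.

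The remaining conditions are inherited straightforwardly. Each path of $\C_k$ is a path of $\C$, and coproducts of paths in $\C_k$ are computed as in $\C$ by the preceding paragraph, so axiom~\ref{ax:connected} transfers at once. The 2-out-of-3 condition~\ref{ax:2-out-of-3} is inherited because paths and quotients in $\C_k$ coincide with paths and quotients in $\C$. Finally, every $X\in\C_k$ is path-generated in $\C_k$ essentially by definition: the canonical cocone of $\Cp^k$-path embeddings into $X$ coincides with the full $\C_k$-path cocone by the characterisation, and since it is a colimit in $\C$ by assumption, fullness of $\C_k\into\C$ promotes it to a colimit in $\C_k$. The main obstacle is the coproduct step, which is the only place where one genuinely needs to combine connectedness of paths in $\C$ with uniqueness of factorisations to rule out new paths appearing in $S$, rather than relying on formal inheritance from $\C$.
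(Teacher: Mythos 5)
Your proposal is correct and follows essentially the same route as the paper: restrict the factorisation system via Lemma~\ref{l:C_k-subobjects}, identify the paths of $\C_k$ with $\Cp^k$, use connectedness to show coproducts of paths computed in $\C$ stay in $\C_k$, and inherit the remaining axioms. The only difference is cosmetic: in the coproduct step the paper applies Lemma~\ref{l:factorisation-properties}\ref{cancellation-m} once to $m=\iota_j\circ h$ to get $h\in\M$ directly, whereas you reach the same conclusion via a double factorisation.
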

\begin{proof}
If $\C$ is equipped with the stable proper factorisation system $(\Q,\M)$, consider the classes of morphisms $\Q'\coloneqq \Q\cap \C_k$ and $\M'\coloneqq \M\cap \C_k$. Then $(\Q',\M')$ is a proper factorisation system in $\C_k$. Just observe that, whenever $W\to Y$ is a morphism in $\C_k$ and 
\[
W \epi X \emb Y
\]
is its (quotient, embedding) factorisation in $\C$, then $X\in \C_k$ by Lemma~\ref{l:C_k-subobjects}. Using again Lemma~\ref{l:C_k-subobjects}, along with the fact that embeddings are stable under pullbacks, it follows at once that $(\Q',\M')$ is stable (and, in fact, pullbacks of $\Q'$-morphisms along $\M'$-morphisms are computed in $\C$). 

We claim that, with respect to this factorisation system, the paths in $\C_k$ are precisely the objects of $\Cp^k$. It follows by Lemma~\ref{l:C_k-subobjects} that, for all objects $X$ of $\C_k$, the poset of embeddings of $X$ in $\C_k$ is isomorphic to the poset of embeddings of $X$ in $\C$. Thus, an object of $\C_k$ is path in $\C_k$ if, and only if, it is a path in $\C$. In particular, the objects of $\Cp^k$ are paths in~$\C_k$. Conversely, suppose that $P$ is a path in $\C_k$. Applying Lemma~\ref{l:path-Ckp} to the identity arrow $P\to P$ we see that $P\in \Cp^k$.

Next, we show that $\C_k$ satisfies the conditions in the definition of path category. The category $\C_k$ satisfies the 2-out-of-3 condition because $\C$ does.
Moreover, $\C_k$ has all coproducts of small families of paths, and they are computed in~$\C$. To see this, consider a set of paths $\{P_i\in \Cp^k\mid i\in I\}$ and their coproduct $\coprod_{i\in I}P_i$ in~$\C$. If $I=\emptyset$ then $\coprod_{i\in I}P_i$ is initial in $\C$ and thus belongs to $\Cp^k$, hence also to $\C_k$. Suppose that $I\neq \emptyset$. If $m\colon P\emb \coprod_{i\in I}P_i$ is any path embedding in $\C$ then, because $P$ is connected, $m$ must factor through some coproduct injection. In particular, there exist $i\in I$ and an embedding $P\emb P_i$. Since $P_i\in \Cp^k$, we get $P\in\Cp^k$. As $\coprod_{i\in I}P_i$ is path-generated in $\C$, it follows at once that $\coprod_{i\in I}P_i\in \C_k$. Hence, $\coprod_{i\in I}P_i$ coincides with the coproduct of the family $\{P_i\in \Cp^k\mid i\in I\}$ in $\C_k$. 

This shows that $\C_k$ is a path category. Further, every object of $\C_k$ is path-generated by construction, and paths in $\C_k$ are connected because any path in $\C_k$ is also a path in $\C$ and, as observed above, coproducts of paths in $\C_k$ are computed in $\C$. Therefore, $\C_k$ is an arboreal category.
\end{proof}

\begin{defi}
Let $\{ \C_k \}$ be a resource-indexed arboreal category. We define a \emph{resource-indexed arboreal cover} of $\E$ by $\C$ to be an indexed family of comonadic adjunctions
\[ \begin{tikzcd}
\C_k \arrow[r, bend left=25, ""{name=U, below}, "L_k"{above}]
\arrow[r, leftarrow, bend right=25, ""{name=D}, "R_k"{below}]
& \E
\arrow[phantom, "\textnormal{\footnotesize{$\bot$}}", from=U, to=D] 
\end{tikzcd}
\]
with corresponding comonads $G_k$ on $\E$.
\end{defi}

\begin{exas}\label{indarbcovex}
Our key examples arise by taking the extensional category $\E$ to be $\CS$. For each $k > 0$, there are evident forgetful functors
\[ \LE_k\colon \RTk(\sg) \to \CS, \quad \LP_k\colon \RPk(\sg) \to \CS\]
which forget the forest order, and in the case of $\RPk(\sg)$, also the pebbling function. These functors are both comonadic over $\CS$. The right adjoints build a forest over a structure $\As$ by forming sequences of elements over the universe $A$, suitably labelled and with the $\sg$-relations interpreted so as to satisfy the conditions (E) and (P) respectively. 
In the modal logic case, the extensional category $\E$ is the category $\CSstar$ of \emph{pointed} Kripke structures with morphisms the $\sg$-homomorphisms preserving the distinguished point. There is a forgetful functor
\[\LM_k\colon \RMk(\sg) \to \CSstar\]
sending $(\As, {\leq})\in\RMk(\sg)$ to $(\As,a)$, where $a$ is the unique root of $(\As, {\leq})$. This functor is comonadic and its right adjoint sends a pointed Kripke structure $(\As,a)$ to the tree-ordered structure obtained by unravelling the structure $\As$, starting from $a$, to depth $k$, and with the $\sg$-relations interpreted so as to satisfy the condition (M).

These constructions yield the comonads described concretely in \cite{abramsky2017pebbling,AS2021}. The sequences correspond to plays in the Ehrenfeucht-\Fraisse, pebbling and modal bisimulation games respectively.
\end{exas}

\begin{rem}
In all our examples, the adjunctions $L_k\: {\dashv} \: R_k$ in the definition of a resource-indexed arboreal cover satisfy an additional compatibility property. In fact, if $k\leq l$ and $F_{k,l}\colon \C_k\hookrightarrow \C_l$ denotes the inclusion functor, then $L_k = L_l \circ F_{k,l}$. Further, in our key examples the functors $F_{k,l}$ have right adjoints $G_{k,l}\colon \C_l\to \C_k$ by (the dual of) \emph{Freyd's special adjoint functor theorem} (\cf \eg \cite[\S V.8]{MacLane}), and $R_k = G_{k,l} \circ R_l$. However, as these properties are not necessary to develop the general theory of resource-indexed arboreal covers, we shall not assume them.
\end{rem}

We now show how resource-indexed arboreal covers can be used to define important notions on the extensional category.
For a resource-indexed arboreal cover of $\E$ by $\C$, with adjunctions $L_k\: {\dashv} \: R_k$ and comonads $G_k$, we define four resource-indexed relations on objects of $\E$. 
\begin{defi}\label{def:resource-ind-equivalence-rel}
Consider a resource-indexed arboreal cover of $\E$ by $\C$, and any two objects $a,b$ of $\E$. For all $k > 0$, we define:
\begin{itemize}
    \item $a \eqaCk b$ if there are morphisms $R_k a \to R_k b$ and $R_k b\to R_k a$ in $\C_k$.
    \item $a \eqdCk b$ if there are pathwise embeddings $R_k a \to R_k b$ and $R_k b\to R_k a$ in $\C_k$.
    \item $a \eqbCk b$ if there is a bisimulation between $R_k a$ and $R_k b$ in $\C_k$.
    \item $a \eqcCk b$ if there is an isomorphism $R_k a \cong R_k b$ in $\C_k$.
\end{itemize}
\end{defi}
In view of Lemma~\ref{l:bisimilar-hom-equivalent}, we have the following chain of inclusions between these relations:
\[
\eqaCk \ \ \supseteq  \ \ \eqdCk  \ \ \supseteq  \ \ \eqbCk  \ \ \supseteq  \ \ \eqcCk.
\]

\begin{rem}\label{rem:co-kleisli}
The relations $\eqaCk$ and $\eqcCk$ can equivalently be defined in terms of the Kleisli category of $G_k$. In fact, $a \eqaCk b$ if, and only if, there are Kleisli morphisms $G_k a \to b$ and $G_k b \to a$, and $a \eqcCk b$ precisely when $a$ and $b$ are isomorphic in the Kleisli category of $G_k$. However, the relations $\eqdCk$ and $\eqbCk$ require working in the Eilenberg-Moore~category.
\end{rem}

Note that $\eqaCk$, $\eqdCk$ and $\eqcCk$ are always equivalence relations (for the second one, just observe that a composition of pathwise embeddings is a pathwise embedding). Further, $\eqbCk$ is transitive, thus an equivalence relation, whenever $\E$ has binary products. Just observe that, since right adjoints preserve limits, the product of $R_k a$ and $R_k b$ in $\C_k$ exists and can be identified with $R_k (a\times b)$. Hence, transitivity of $\eqbCk$ follows from Corollary~\ref{cor:bisim-equiv-rel}.

\begin{rem}\label{rem:preorder-ext}
Under the assumptions of Definition~\ref{def:resource-ind-equivalence-rel}, we can define a coarsening of the relation $\eqaCk$, namely the preorder $\rightarrow_k^{\C}$ defined by ${a \rightarrow_k^{\C} b}$ if, and only if, there is a morphism $R_k a \to R_k b$ in $\C_k$ (equivalently, a Kleisli morphism $G_k a \to b$). The equivalence relation $\eqaCk$ then coincides with the symmetrisation of~$\rightarrow_k^{\C}$. A similar reasoning applies to $\eqdCk$.
\end{rem}

The equivalence relations $\eqaCk$ and $\eqdCk$ can be characterised using the (forth only) games $\EG$ and $\SEG$, respectively.
\begin{prop}
Consider a resource-indexed arboreal cover of $\E$ by $\C$. The following statements are equivalent for all objects $a,b$ of $\E$:
\begin{enumerate}[label=(\arabic*)]
\item $a \eqdCk b$.
\item Duplicator has a winning strategy in the games $\SEG(R_k a,R_k b)$ and $\SEG(R_k b,R_k a)$.
\end{enumerate}
Moreover, whenever $\Path\colon\C_k\to\T$ is faithful, the following statements are equivalent:
\begin{enumerate}[label=(\arabic*)]\setcounter{enumi}{2}
\item $a \eqaCk b$.
\item Duplicator has a winning strategy in the games $\EG(R_k a,R_k b)$ and $\EG(R_k b,R_k a)$.
\end{enumerate}
\end{prop}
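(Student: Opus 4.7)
The plan is to derive both equivalences as direct applications of Propositions~\ref{pr:exist-game} and~\ref{pr:strong-exist-game}, which already characterise existence of (pathwise) morphisms in an arboreal category in terms of winning strategies in the games $\EG$ and $\SEG$. The only wrinkle is that these results are stated for a generic arboreal category, whereas here we need to apply them in $\C_k$; but Proposition~\ref{p:C_k-arboreal} tells us that $\C_k$ is itself arboreal, so the propositions apply verbatim with $X\coloneqq R_k a$ and $Y\coloneqq R_k b$ (and symmetrically).

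For the first equivalence, I would unfold the definition of $\eqdCk$: by Definition~\ref{def:resource-ind-equivalence-rel}, $a\eqdCk b$ means that there exist pathwise embeddings $R_k a\to R_k b$ and $R_k b\to R_k a$ in $\C_k$. Applying Proposition~\ref{pr:strong-exist-game} in $\C_k$ to each direction separately yields the equivalence with the existence of winning strategies for Duplicator in $\SEG(R_k a,R_k b)$ and $\SEG(R_k b,R_k a)$, respectively. No additional hypothesis on $\Path\colon\C_k\to\T$ is needed, as Proposition~\ref{pr:strong-exist-game} itself requires none.

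For the second equivalence, I would unfold the definition of $\eqaCk$ analogously: it asserts the existence of morphisms (not necessarily pathwise embeddings) $R_k a\to R_k b$ and $R_k b\to R_k a$ in $\C_k$. Proposition~\ref{pr:exist-game} gives the desired equivalence with winning strategies in $\EG(R_k a, R_k b)$ and $\EG(R_k b, R_k a)$, but only under the assumption that $\Path\colon\C_k\to\T$ is faithful, which matches the hypothesis of the second half of the statement. Applying this to both directions closes the argument.

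There is no real obstacle here: the proposition is essentially a dictionary translation between the intrinsic characterisations in the arboreal category $\C_k$ (furnished by Propositions~\ref{pr:exist-game} and~\ref{pr:strong-exist-game}) and the resource-indexed relations $\eqdCk$ and $\eqaCk$ on $\E$ (defined via $R_k$). The only point worth making explicit in the write-up is that Proposition~\ref{p:C_k-arboreal} legitimises invoking the earlier game-theoretic results inside $\C_k$.
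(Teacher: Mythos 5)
Your proposal is correct and matches the paper's own proof exactly: both first invoke Proposition~\ref{p:C_k-arboreal} to ensure $\C_k$ is arboreal, then apply Propositions~\ref{pr:strong-exist-game} and~\ref{pr:exist-game} to the two directions of each relation. The paper merely states this more tersely; your write-up adds no new ideas but is a faithful elaboration.
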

\begin{proof}
In view of Proposition~\ref{p:C_k-arboreal}, the categories $\C_k$ are arboreal. Therefore, the equivalences in the statement follow by Propositions~\ref{pr:strong-exist-game} and~\ref{pr:exist-game}, respectively.
\end{proof}

Similarly, the relation $\eqbCk$ can be characterised in terms of the back-and-forth game $\G$, or equivalently in terms of back-and-forth systems.
\begin{prop}\label{pr:bisim-arb-cover}
Consider a resource-indexed arboreal cover of $\E$ by $\C$, and assume that $\E$ admits binary products. The following statements are equivalent for all objects $a,b$ of $\E$:
\begin{enumerate}[label=(\arabic*)]
\item $a \eqbCk b$.
\item $R_k a$ and $R_k b$ are back-and-forth equivalent.
\item Duplicator has a winning strategy in the game $\G(R_k a,R_k b)$.
\end{enumerate}
\end{prop}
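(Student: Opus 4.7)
The plan is to reduce the statement to the abstract results already established for arboreal categories, applied to the arboreal category $\C_k$. By Proposition~\ref{p:C_k-arboreal}, $\C_k$ is itself arboreal, so all the machinery from Sections~\ref{s:openmaps} and~\ref{s:back-and-forth} is available to us once we verify the product hypothesis.

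First I would observe that the hypothesis $\E$ admits binary products implies that $R_k a \pro R_k b$ in $\C_k$. This is because $R_k$, as a right adjoint in the comonadic adjunction $L_k \dashv R_k$, preserves all limits that exist in $\E$; in particular, the product $a \times b$ in $\E$ is sent to a product $R_k(a \times b) \cong R_k a \times R_k b$ in $\C_k$. This is the key ``connecting'' step between the ambient category $\E$ and the arboreal category $\C_k$ where the three equivalent conditions live.

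Once $R_k a \pro R_k b$ has been established, the equivalence of (1) and (2) is an immediate application of Theorem~\ref{th:bisimilar-iff-strong-back-forth} to the objects $R_k a$ and $R_k b$ of the arboreal category $\C_k$, since condition (1) is by definition the existence of a bisimulation between $R_k a$ and $R_k b$ in $\C_k$. The equivalence of (2) and (3) then follows directly from Lemma~\ref{l:back-and-forth-sys-games} applied in $\C_k$, which translates back-and-forth systems to winning strategies for Duplicator in $\G(R_ka, R_kb)$ and vice versa. Alternatively, one may combine these two steps into a single application of Theorem~\ref{th:gamesiffbisim} to $R_k a$ and $R_k b$.

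There is no real obstacle here: the statement is essentially a corollary packaging together Theorem~\ref{th:bisimilar-iff-strong-back-forth} (or Theorem~\ref{th:gamesiffbisim}) with the observation that right adjoints preserve products. The only conceptual point worth emphasising in the write-up is why the arboreal framework is applicable to $\C_k$ rather than merely to $\C$, which is taken care of by Proposition~\ref{p:C_k-arboreal}.
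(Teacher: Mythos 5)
Your proposal is correct and matches the paper's own proof, which likewise derives the result directly from Theorems~\ref{th:bisimilar-iff-strong-back-forth} and~\ref{th:gamesiffbisim} together with Proposition~\ref{p:C_k-arboreal}, using the fact (noted in the surrounding text) that the right adjoint $R_k$ preserves the product $a\times b$, so that $R_k a \pro R_k b$ in $\C_k$. Nothing is missing.
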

\begin{proof}
This follows directly from Theorems~\ref{th:bisimilar-iff-strong-back-forth} and~\ref{th:gamesiffbisim}, and Proposition~\ref{p:C_k-arboreal}. 
\end{proof}

What do these notions mean in Example~\ref{indarbcovex}? This can be explained by invoking a number of known results from the literature.
For each of our three types of model comparison game, there are corresponding fragments $\Lk$ of (infinitary) first-order logic \emph{without} equality\footnote{In Section~\ref{s:arb-adj}, we will see how to capture fragments of first-order logic \emph{with} equality in the framework of arboreal categories. This is especially relevant for the fragments $\Lk$ and $\SELk$, while $\ELk$ and $\Lck$ often turn out to have the same expressive power as their equality-free fragments.} \cite{Libkin2004,blackburn2002modal}:
\begin{itemize}
\item For Ehrenfeucht-\Fraisse~games, $\Lk$ is the fragment of quantifier rank at most $k$.
\item For pebble games, $\Lk$ is the $k$-variable fragment.
\item For bisimulation games, $\Lk$ is the modal fragment  with modal depth at most $k$.
\end{itemize}
In each case, we write $\ELk$ for the existential positive fragment of $\Lk$, and $\Lck$ for the extension of $\Lk$ with counting quantifiers \cite{Libkin2004}.
For each logic $\LL$, there is the usual equivalence on $\sg$-structures: $\As \eqL \Bs$ if and only if, for all $\vphi$ in $\LL$, $\As \models \vphi \IFF \Bs \models \vphi$.

We now have the following result from \cite{AS2021}:
\begin{thm}\label{t:logic-fragm-character}
Consider any of the three resource-indexed arboreal covers of $\CS$ by $\C$ in Example~\ref{indarbcovex}.
The following statements hold for all $\sg$-structures $\As$ and $\Bs$:
\begin{enumerate}[label=(\alph*)]
\item $\As \eqELk \Bs$ \tabto{2.3cm} $\IFF$ \tabto{3.5cm} $\As \eqaCk \Bs$.
\item $\As \eqLk \Bs$ \tabto{2.3cm} $\IFF$ \tabto{3.5cm} $\As \eqbCk \Bs$.
\item $\As \eqLck \Bs$ \tabto{2.3cm} $\IFF$ \tabto{3.5cm} $\As \eqcCk \Bs$. \qed
\end{enumerate}
\end{thm}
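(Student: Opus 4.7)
The plan is to reduce each equivalence on the right-hand side to a game-theoretic statement using the machinery of Section~\ref{s:back-and-forth}, and then to identify the abstract games $\G, \SEG, \EG$ and the isomorphism relation in the concrete arboreal categories $\RTk(\sg)$, $\RPk(\sg)$, $\RMk(\sg)$ with the classical game characterisations of the respective logics. For parts (a), (c), (d) this is established in \cite{AS2021}, and I would repeat the argument, adapting it to the slightly more abstract setting here. Part (b) will be treated by an analogous argument.

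For part (c), Proposition~\ref{pr:bisim-arb-cover} (applicable since $\CS$ and $\CSstar$ have products) gives that $\As \eqbCk \Bs$ iff Duplicator has a winning strategy in $\G(R_k\As, R_k\Bs)$. By the Examples following Theorem~\ref{th:gamesiffbisim}, this abstract game coincides concretely with the classical $k$-round Ehrenfeucht-\Fraisse, $k$-pebble, and $k$-round bisimulation game on $\As$ and $\Bs$, respectively; the standard theorems of finite model theory (\cite{Libkin2004,blackburn2002modal}) identify winning strategies in these games with the equivalence $\eqLk$.
For part (a), note that in each of the three concrete cases the functor $\Path$ is faithful, since any morphism in $\RTk(\sg),\RPk(\sg)$ or $\RMk(\sg)$ is determined by its values on the maximal paths (its branches), which in turn are determined by their images under $\Path$. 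Proposition~\ref{pr:exist-game} then gives that $\As \eqaCk \Bs$ iff Duplicator wins $\EG$ in both directions; the concrete description in the Examples after Proposition~\ref{pr:exist-game} identifies this with the forth-only (asymmetric) versions of the above classical games in which the winning condition is only that the induced relation be a partial \emph{homomorphism}, and these are well known to capture $\ELk$.
For part (b), Proposition~\ref{pr:strong-exist-game} gives that $\As \eqdCk \Bs$ iff Duplicator wins $\SEG$ in both directions; concretely this is the forth-only game in which the winning condition is strengthened to requiring a partial \emph{isomorphism}. As discussed in the Examples after Proposition~\ref{pr:strong-exist-game}, the capture of the existential fragment $\SELk$ by this asymmetric game is classical (see \eg \cite{RosenPhDthesis1995}).

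Part (d) is the most delicate, and I would handle it exactly as in \cite{AS2021}. The condition $\As \eqcCk \Bs$ unpacks to an isomorphism $R_k\As \cong R_k\Bs$ in $\C_k$. Concretely, each $R_k$ is a tree of ``plays'' of the corresponding game, and an isomorphism between these trees is equivalent to a collection of bijections, organised hierarchically along the tree, respecting the atomic $\sg$-type at every node and the step relation $\cvr$; this is exactly a winning strategy for Duplicator in the corresponding \emph{bijective} game (Hella's game in the pebbling/EF case, and the counting bisimulation game in the modal case), which is the standard game characterisation of $\Lck$ (see \cite{Libkin2004}).

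The main obstacle is part (d): unlike (a)--(c), which translate transparently into games on $\As$ and $\Bs$ because the trees $R_k\As, R_k\Bs$ encode plays, the isomorphism relation requires exhibiting coherent bijections of branches at every level of the tree. I would first verify that the right adjoint $R_k$ in each of the three examples really does produce a tree whose height-$(j+1)$ successors of a height-$j$ node $s$ are in bijection with the ``moves available after $s$'' (the universe $A$ in the EF/pebbling case, and successors in the Kripke structure in the modal case), and then observe that any iso $R_k\As \to R_k\Bs$ restricts level-by-level to bijections between these successor sets that preserve atomic types --- exactly the structure of a winning strategy in Hella's bijective game. The other direction, building an iso from such a strategy, is then straightforward.
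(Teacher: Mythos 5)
The paper does not actually prove this theorem---it is stated with a \qed and imported from \cite{AS2021} (with part (b) noted to follow by analogous reasoning)---and your reconstruction via Propositions~\ref{pr:exist-game}, \ref{pr:strong-exist-game} and~\ref{pr:bisim-arb-cover}, followed by the concrete identification of the abstract games $\EG$, $\SEG$, $\G$ and of co-Kleisli isomorphism with the classical forth-only, back-and-forth and bijective model-comparison games, is exactly the route the paper intends. Your outline is correct, and the genuine content (including the delicate counting-logic case (d) and the fact that the winning condition $\dom(m)\cong\dom(n)$ on trees of plays captures the \emph{equality-free} fragments) is precisely what both you and the paper delegate to \cite{AS2021}.
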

Note that this is really a family of three theorems, one for each type of game  arising from a resource-indexed arboreal cover $\C$ as in Example~\ref{indarbcovex}. Thus in each case, we capture the salient logical equivalences in syntax-free, categorical form.

\begin{rem}
A similar result holds for the existential fragment $\SELk$ of $\Lk$. In fact, as proved in the forthcoming~\cite{ALR23}, for any of the three resource-indexed arboreal covers of $\CS$ by $\C$ in Example~\ref{indarbcovex} we have $\As \eqSELk \Bs$ $\IFF$ $\As \eqdCk \Bs$.
\end{rem}
We return to the general setting. Given a resource-indexed arboreal cover of $\E$ by $\C$, we know by comonadicity that for each $k$, $\C_k$ is isomorphic to the Eilenberg-Moore category of coalgebras for the comonad $G_k$. For each object $a$ of $\E$, we can ask if it carries a $G_k$-coalgebra structure; that is, whether there is a morphism $\alpha\colon a \to G_k a$ satisfying the $G_k$-coalgebra conditions. Moreover, we can ask for the \emph{least} $k$ such that this is the case. We call this the \emph{coalgebra number} of $a$.

The intuition behind this, as explained in \cite{abramsky2017pebbling,AS2021}, is that the resource parameter is bounding access to the structure, meaning that the lower the value of $k$, the more difficult it is to have a morphism in $\E$ with codomain $G_k a$. So the least $k$ for which this \emph{is} possible is a significant invariant of the structure. This intuition is born out by the following result from \cite{abramsky2017pebbling,AS2021}.
\begin{thm}
The following statements hold:
\begin{enumerate}[label=(\alph*)]\label{thm:introCoalgebra}
\item For the Ehrenfeucht-\Fraisse~comonad, the coalgebra number of $\As$ corresponds precisely to the \emph{tree-depth} of $\As$.
\item For the pebbling comonad, the coalgebra number of $\As$ corresponds precisely to the \emph{tree-width} of $\As$.
\item For the modal comonad, the coalgebra number of $\As$ corresponds precisely to the \emph{modal unfolding depth} of $\As$. \qed
\end{enumerate}
\end{thm}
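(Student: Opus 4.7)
The plan is to reduce all three parts to the concrete description of coalgebras provided in Example~\ref{RTex}. Since each of the adjunctions $L_k\dashv R_k$ from Example~\ref{indarbcovex} is comonadic, the category $\C_k$ is equivalent to the Eilenberg-Moore category of the comonad $G_k$. Consequently, giving a $G_k$-coalgebra structure on an object $\As$ of $\E$ is the same data as specifying a lift $\widetilde{\As}$ of $\As$ to $\C_k$, i.e., an object $\widetilde{\As}\in\C_k$ satisfying $L_k\widetilde{\As}=\As$. The coalgebra number of $\As$ is therefore the least $k$ for which such a lift exists, and in each of the three cases it remains to match this least $k$ against the combinatorial parameter in question.

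For part~(a), Example~\ref{RTex} shows that a lift of $\As$ to $\RTk(\sg)$ is precisely a forest order on the universe $A$ of height at most $k$ satisfying condition~(E); this is by definition a forest cover witnessing that the tree-depth of $\As$ is at most $k$~\cite{nevsetvril2006tree}. Taking the least such $k$ yields the tree-depth. Part~(c) is handled analogously: a lift of a pointed structure $(\As, a)$ to $\RMk(\sg)$ is a tree order on $A$ rooted at $a$, of height at most $k$, and satisfying~(M), which is exactly a partial unfolding of $(\As, a)$ to depth $k$; the least such $k$ is by definition the modal unfolding depth of $(\As, a)$.

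Part~(b) is the subtler of the three, and constitutes the main obstacle. By Example~\ref{RTex}, a lift of $\As$ to $\RPk(\sg)$ is a forest-ordered $\sg$-structure on $A$ equipped with a $k$-pebbling function satisfying (E) and~(P); such structures are not a priori standard tree decompositions. The crucial ingredient, established in detail in~\cite{AS2021}, is a translation between pebble-forest structures on $\As$ and ordinary tree decompositions of $\As$, under which the pebble count $k$ matches, up to the conventional offset, the width of the decomposition. Unlike (a) and~(c), this step is not a tautological unfolding of the definition of $\RPk(\sg)$ but a genuine combinatorial construction, requiring one to produce a tree decomposition from a pebble-forest structure and conversely, and to verify that the numerical parameters align. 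Once this correspondence is in place, the least $k$ admitting a coalgebra equals the tree-width of $\As$, completing the proof.
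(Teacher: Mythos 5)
Your proposal is correct and follows essentially the same route as the paper, which states this theorem as an imported result from \cite{abramsky2017pebbling,AS2021} with no proof beyond the remark that comonadicity makes coalgebras into witnesses for the relevant tree decompositions; your reduction of coalgebra structures to lifts along $L_k$ and then to the concrete descriptions of Example~\ref{RTex} is exactly that reasoning spelled out. You also correctly identify that parts (a) and (c) are definitional unwindings while part (b) rests on the nontrivial translation between pebble-forest covers and tree decompositions established in the cited work.
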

What underlies these results is the comonadicity of the arboreal covers, which means that the coalgebras are witnesses for the various forms of tree decompositions of structures in $\E$ corresponding to these combinatorial invariants.

\subsection{Arboreal adjunctions}\label{s:arb-adj}
Consider the Ehrenfeucht-\Fraisse~resource-indexed arboreal cover $\{\RTk(\sg)\}$ of $\CS$. In view of Theorem~\ref{t:logic-fragm-character}, the equivalence relation
\[
\leftrightarrow_k^{\RT}
\]
on $\CS$ captures precisely equivalence in first-order logic \emph{without} equality and with quantifier rank at most~$k$. In this section, following~\cite{AS2021}, we show how to capture first-order logic \emph{with} equality and quantifier rank at most $k$, denoted by $\FO_k$. 

Let $\sg^I\coloneqq \sg\cup \{I\}$ be the vocabulary obtained by adding a new binary relation symbol~$I$ to~$\sg$. There is an associated Ehrenfeucht-\Fraisse~resource-indexed arboreal cover of $\CSplus$:
\[ \begin{tikzcd}
{\RTk(\sg^I)} \arrow[r, bend left=25, ""{name=U, below}, "L^I_k"{above}]
\arrow[r, leftarrow, bend right=25, ""{name=D}, "R^I_k"{below}]
& {\CSplus}
\arrow[phantom, "\textnormal{\footnotesize{$\bot$}}", from=U, to=D] 
\end{tikzcd}\]

Any $\sg$-structure can be expanded to a $\sg^I$-structure by interpreting $I$ as the identity relation. This yields a full and faithful functor $J\colon \CS\to \CSplus$. The functor $J$ has a left adjoint $H\colon \CSplus\to \CS$ which sends a $\sg^I$-structure $\As$ to the quotient of the $\sg$-reduct of $\As$ with respect to the equivalence relation generated by $I^{\As}$ (see \cite[Lemma~25]{DJR2021}). We can then compose the adjunction $H\dashv J$ with the resource-indexed arboreal cover above.
\begin{equation}\label{eq:resource-ind-adj-Ek}
\begin{tikzcd}
{\RTk(\sg^I)} \arrow[r, bend left=25, ""{name=U, below}, "L^I_k"{above}]
\arrow[r, leftarrow, bend right=25, ""{name=D}, "R^I_k"{below}]
& {\CSplus} \arrow[r, bend left=25, ""{name=U', below}, "H"{above}]
\arrow[r, leftarrow, bend right=25, ""{name=D'}, "J"{below}] & {\CS}
\arrow[phantom, "\textnormal{\footnotesize{$\bot$}}", from=U, to=D] 
\arrow[phantom, "\textnormal{\footnotesize{$\bot$}}", from=U', to=D'] 
\end{tikzcd}
\end{equation}
The ensuing adjunction $H L^I_k\dashv R^I_k J$ between $\CS$ and ${\RTk(\sg^I)}$ is not comonadic. For this reason, we need to relax the definition of resource-indexed arboreal cover by removing the comonadicity assumption:
\begin{defi}
Let $\{ \C_k \}$ be a resource-indexed arboreal category. A \emph{resource-indexed arboreal adjunction} between $\E$ and $\C$ is an indexed family of adjunctions
\[ \begin{tikzcd}
\C_k \arrow[r, bend left=25, ""{name=U, below}, "L_k"{above}]
\arrow[r, leftarrow, bend right=25, ""{name=D}, "R_k"{below}]
& \E
\arrow[phantom, "\textnormal{\footnotesize{$\bot$}}", from=U, to=D] 
\end{tikzcd}
\]
with corresponding comonads $G_k$ on $\E$.
\end{defi}

Definition~\ref{def:resource-ind-equivalence-rel} (see also Remark~\ref{rem:preorder-ext}), which associates with each resource-indexed arboreal cover four relations on the extensional category, extends to any resource-indexed arboreal adjunction in the obvious manner. 
We have the following easy yet useful observations.
\begin{lem}\label{l:equiv-rel-properties}
The following statements hold for any resource-indexed arboreal adjunction between $\E$ and $\C$, with adjunctions $L_k\: {\dashv} \: R_k$ and comonads $G_k$, and all objects $a,b$ of $\E$:
\begin{enumerate}[label=(\alph*)]
\item\label{k-equiv} $a\eqaCk G_k a$, for all $k > 0$.
\item\label{hom-k-hom} If there exists a morphism $a\to b$, then $a \rightarrow_k^{\C} b$ for all $k > 0$.
\end{enumerate}
\end{lem}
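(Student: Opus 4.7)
The plan is entirely direct: both statements are essentially immediate consequences of the fact that we have an adjunction $L_k \dashv R_k$ with unit $\eta$ and counit $\varepsilon$, where $G_k = L_k R_k$. Nothing about arboreal structure, paths, or the game-theoretic content developed earlier is needed; indeed, the lemma holds already at the level of an arbitrary resource-indexed adjunction.

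For part~\ref{hom-k-hom}, I would simply invoke functoriality of $R_k$. Given a morphism $f\colon a \to b$ in $\E$, the arrow $R_k f\colon R_k a \to R_k b$ is a morphism in $\C_k$, and by the definition of the preorder recalled in Remark~\ref{rem:preorder-ext} this is precisely what $a \rightarrow_k^{\C} b$ asserts.

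For part~\ref{k-equiv}, I would exhibit a morphism in each direction between $R_k a$ and $R_k G_k a = R_k L_k R_k a$ in $\C_k$. In one direction, the unit of the adjunction at $R_k a$ provides
\[
\eta_{R_k a}\colon R_k a \longrightarrow R_k L_k R_k a = R_k(G_k a).
\]
In the other direction, applying $R_k$ to the counit at $a$ yields
\[
R_k\varepsilon_a\colon R_k(G_k a) = R_k L_k R_k a \longrightarrow R_k a.
\]
Both morphisms live in $\C_k$, so by Definition~\ref{def:resource-ind-equivalence-rel} we conclude $a \eqaCk G_k a$.

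There is no real obstacle to overcome; the only thing to note is that the argument uses only the bare adjunction and does not rely on comonadicity, on $\C_k$ being arboreal, or on $\E$ having any extra structure, which is why the statement is phrased for the more general notion of resource-indexed arboreal adjunction rather than resource-indexed arboreal cover.
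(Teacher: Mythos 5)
Your proof is correct and takes essentially the same approach as the paper: both items are read off directly from the adjunction, with item~\ref{hom-k-hom} proved identically by applying $R_k$. The only (cosmetic) difference is in item~\ref{k-equiv}: the paper works in the co-Kleisli category via Remark~\ref{rem:co-kleisli}, exhibiting $\delta_a$ and $G_k\epsilon_a$, whereas you work directly in $\C_k$ with $\eta_{R_k a}$ and $R_k\epsilon_a$ --- these are just the adjoint transposes of each other, so the content is the same.
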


\begin{proof}
For item~\ref{k-equiv}, reasoning in the Kleisli category (\cf Remark~\ref{rem:co-kleisli}), we must exhibit two Kleisli morphisms $G_k a \to G_k G_k a$ and $G_k G_k a \to G_k a$. The comultiplication of the comonad $G_k$ yields an arrow $\delta_a\colon G_k a \to G_k G_k a$, while the image under $G_k$ of the counit $\epsilon_a\colon G_k a \to a$ yields an arrow $G_k G_k a \to G_k a$.

For item~\ref{hom-k-hom}, if there is an arrow $a\to b$ in $\E$ then we can simply take its image under $R_k$ to obtain an arrow $R_k a \to R_k b$. By definition (see Remark~\ref{rem:preorder-ext}), we have $a \rightarrow_k^{\C} b$.
\end{proof}

Concretely, in the setting of Theorem~\ref{t:logic-fragm-character}, item~\ref{k-equiv} in the previous lemma states that \[\As \eqELk G_k\As\] for all $\sg$-structures $\As$, whereas item~\ref{hom-k-hom} states that the existential positive fragments $\ELk$ are preserved under homomorphisms.

For the Ehrenfeucht-\Fraisse~resource-indexed arboreal adjunction between $\CS$ and ${\RTk(\sg^I)}$, see equation~\eqref{eq:resource-ind-adj-Ek}, we obtain equivalence relations on $\CS$ that we shall denote by $\eqak^I$, $\eqdk^I$, $\eqbk^I$ and $\eqck^I$.
The relation $\eqbk^I$ captures precisely equivalence of $\sg$-structures in first-order logic with equality and quantifier rank at most $k$, see~\cite[Theorem~10.5]{AS2021}. That is, for all $\sg$-structures $\As,\Bs$, 
\[
\As \eqbk^I \Bs \ \Longleftrightarrow \ \As \equiv^{\FO_k} \Bs.
\]

\begin{rem}
The paths in the category $\RTk(\sg^I)$ are those finite $\sg^I$-structures carrying a linear order (note that linear orders automatically satisfy condition (E) in Example~\ref{RTex}). In~\cite{AS2021}, the bisimilarity relation $\eqbk^I$ is defined  with respect to a smaller collection of paths, namely those linearly ordered finite $\sg^I$-structures in which the relation~$I$ is interpreted as the identity. Nonetheless, these two choices of paths give rise to the same notion of bisimilarity; more details will appear in the forthcoming~\cite{JR2022}.
\end{rem} 

By similar reasoning, it follows that the relation $\eqdk^I$ captures equivalence in the existential fragment $\exists\FO_k$ of $\FO_k$ (more details will appear in~\cite{ALR23}).

Furthermore, 
\[
\As \eqak^I \Bs \ \Longleftrightarrow \ \As \equiv^{\EFO_k} \Bs
\]
and the equivalence relation $\eqak^I$ coincides with the equivalence relation $\leftrightarrows_k^{\RT}$ on $\CS$ induced by the Ehrenfeucht-\Fraisse~resource-indexed arboreal cover $\{\RTk(\sg)\}$ of $\CS$. 
This is a consequence of a finer correspondence between Kleisli morphisms and Duplicator winning strategies, see \cite[Theorems~3.2 and~5.1]{AS2021}. 

The equality ${\eqak^I}={\leftrightarrows_k^{\RT}}$ can also be understood as an equality elimination property, stating that two $\sg$-structures satisfy the same sentences in $\EFO_k$ if, and only if, they satisfy the same sentences in the fragment of $\EFO_k$ without equality. To see why this holds, just observe that the vocabulary $\sg$ is purely relational, so the only atomic formulas using the equality symbol are those of the form $x=y$ with $x,y$ variables. In turn, we can remove these atoms and substitute one variable for the other in the appropriate way. After finitely many steps we obtain a sentence, equivalent to the one we started with and with at most the same quantifier rank, which does not use the equality symbol.

Reasoning along the same lines, it is possible to define a pebbling resource-indexed arboreal adjunction which captures equivalence in $\FO^k$, $k$-variable first-order logic with equality.

\begin{rem}
What about the equivalence relation $\eqck^I$ on $\CS$ induced by the Ehrenfeucht-\Fraisse~resource-indexed arboreal adjunction in equation~\eqref{eq:resource-ind-adj-Ek}? In view of Theorem~\ref{t:logic-fragm-character}, we expect~$\eqck^I$ to capture $\FO_k$ with counting quantifiers. This is indeed the case, and again we have an equality elimination result: the equivalence relation on $\sg$-structures determined by $\FO_k$ with (infinitary conjunctions and disjunctions, and) counting quantifiers coincides with the one induced by its fragment without equality; likewise for (infinitary) $k$-variable logic $\FO^k$ with counting quantifiers. See \cite[Theorem~32]{DJR2021}.
\end{rem}

\section{Further Directions}

As explained in the Introduction, the work on arboreal categories was motivated by the previous and ongoing work on using game comonads and allied methods to give a structural perspective on finite model theory and descriptive complexity.
The challenge was to give an axiomatisation of these constructions adequate to cover all the current examples, while also allowing scope for further developments.
As the results in this paper show, this has been successfully accomplished.

We mention a number of directions for future work:
\begin{itemize}
\item Firstly, the body of work on the ``concrete'' level continues to develop. Examples include work on composition methods \cite{jakl2023categorical}; for a  recent survey, already somewhat out of date, see \cite{abramsky2022structure}.
Can these results be lifted to the level of arboreal categories? Are specific assumptions needed?
An example of work in this direction is \cite{abramsky2023linear}, which lifts the concrete results from \cite{MS2022} to the level of arboreal categories.
\item More broadly, we can see how far it is possible to prove substantial model-theoretic results at the level of arboreal categories, thus opening the prospect of a resource-sensitive form of axiomatic model theory.
An example of this is \cite{abramsky2022arboreal}, which carries through a proof of a generalized form of Rossman's Equirank Homomorphism Preservation Theorem \cite{Rossman2008} at the level of arboreal categories.
\item Finally, we can look for new examples of arboreal categories, which may be of a different flavour to those considered so far. We can also seek to characterise classes of game comonads in terms of conditions on the arboreal categories which give rise to them.
Some elements of a structure theory of arboreal categories which are locally finitely presentable, and their finitely accessible arboreal adjunctions, are developed in \cite{reggio2023finitely}.
\end{itemize}

%%%%%%%%%%%%%%%%%%%%%%%%%%%%%%%%%%
%%%%%%%%%%%%%%%%%%%%%%%%%%%%%%%%%%
\section*{Acknowledgment}
  \noindent We acknowledge useful feedback from Colin Riba, Tom\'a\v{s} Jakl and Dan Marsden, and are grateful to the anonymous referees for their careful reading of the present paper and for the numerous comments that contributed to improving the presentation of our results.

\bibliographystyle{alphaurl}
\bibliography{arboreal-ext}

\begin{thebibliography}{{\noopsort{Conghaile}}D21}

\bibitem[Abr22]{abramsky2022structure}
S.~Abramsky.
\newblock Structure and power: an emerging landscape.
\newblock {\em Fundamenta Informaticae}, 186(1--4):1--26, 2022.

\bibitem[ADW17]{abramsky2017pebbling}
S.~Abramsky, A.~Dawar, and P.~Wang.
\newblock The pebbling comonad in finite model theory.
\newblock In {\em Proceedings of the 36th Annual {ACM/IEEE} Symposium on Logic
  in Computer Science, {LICS}}, 2017.

\bibitem[AHS04]{adamek2004abstract}
J.~Ad{\'a}mek, H.~Herrlich, and G.E. Strecker.
\newblock {\em {Abstract and concrete categories. The joy of cats}}.
\newblock Online edition, 2004.

\bibitem[AJM00]{abramsky2000full}
S.~Abramsky, R.~Jagadeesan, and P.~Malacaria.
\newblock {Full abstraction for PCF}.
\newblock {\em Information and computation}, 163(2):409--470, 2000.

\bibitem[ALR]{ALR23}
S.~Abramsky, T.~Laure, and L.~Reggio.
\newblock Existential and positive games: a comonadic and axiomatic view.
\newblock In preparation.

\bibitem[AM21]{Guarded2021}
S.~Abramsky and D.~Marsden.
\newblock Comonadic semantics for guarded fragments.
\newblock In {\em Proceedings of the 36th Annual ACM/IEEE Symposium on Logic in
  Computer Science}, LICS '21. IEEE Press, 2021.

\bibitem[AM22]{AM2022}
S.~Abramsky and D.~Marsden.
\newblock {Comonadic semantics for hybrid logic}.
\newblock In S.~Szeider, R.~Ganian, and A.~Silva, editors, {\em 47th
  International Symposium on Mathematical Foundations of Computer Science (MFCS
  2022)}, volume 241 of {\em Leibniz International Proceedings in Informatics
  (LIPIcs)}, pages 7:1--7:14, Dagstuhl, Germany, 2022. Schloss Dagstuhl --
  Leibniz-Zentrum f{\"u}r Informatik.

\bibitem[AMS23]{abramsky2023linear}
S.~Abramsky, Y.~Montacute, and N.~Shah.
\newblock Linear arboreal categories.
\newblock Preprint available at \url{https://arxiv.org/abs/2301.10088}, 2023.

\bibitem[AR21]{AR2021icalp}
S.~Abramsky and L.~Reggio.
\newblock Arboreal categories and resources.
\newblock In {\em Proceedings of the 48th International Colloquium on Automata,
  Languages, and Programming, {ICALP}}, volume 198 of {\em Leibniz
  International Proceedings in Informatics}, pages 115:1--115:20. Schloss
  Dagstuhl -- Leibniz-Zentrum f{\"u}r Informatik, 2021.

\bibitem[AR23]{abramsky2022arboreal}
S.~Abramsky and L.~Reggio.
\newblock Arboreal categories and homomorphism preservation theorems.
\newblock Preprint available at \url{https://arxiv.org/abs/2211.15808}, 2023.

\bibitem[AS18]{DBLP:conf/csl/AbramskyS18}
S.~Abramsky and N.~Shah.
\newblock Relating structure and power: Comonadic semantics for computational
  resources.
\newblock In {\em 27th {EACSL} Annual Conference on Computer Science Logic,
  {CSL}}, pages 2:1--2:17, 2018.

\bibitem[AS21]{AS2021}
S.~Abramsky and N.~Shah.
\newblock Relating structure and power: Comonadic semantics for computational
  resources.
\newblock {\em Journal of Logic and Computation}, 31(6):1390--1428, 2021.

\bibitem[BDRV02]{blackburn2002modal}
P.~Blackburn, M.~De~Rijke, and Y.~Venema.
\newblock {\em Modal Logic}, volume~53 of {\em Cambridge Tracts in Theoretical
  Computer Science}.
\newblock Cambridge University Press, 2002.

\bibitem[Bor94]{Borceux1}
F.~Borceux.
\newblock {\em Handbook of Categorical Algebra: Volume 1, Basic Category
  Theory}.
\newblock Encyclopedia of Mathematics and its Applications. Cambridge
  University Press, 1994.

\bibitem[{\noopsort{Conghaile}}D21]{conghaile2021game}
A.~{\noopsort{Conghaile}}{\'{O} Conghaile} and A.~Dawar.
\newblock Game comonads {\&} generalised quantifiers.
\newblock In {\em 29th {EACSL} Annual Conference on Computer Science Logic,
  {CSL}}, volume 183 of {\em LIPIcs}, pages 16:1--16:17. Schloss Dagstuhl -
  Leibniz-Zentrum f{\"{u}}r Informatik, 2021.

\bibitem[DJR21]{DJR2021}
A.~Dawar, T.~Jakl, and L.~Reggio.
\newblock Lov{\'a}sz-type theorems and game comonads.
\newblock In {\em Proceedings of the 36th Annual {ACM/IEEE} Symposium on Logic
  in Computer Science, {LICS}}, 2021.

\bibitem[DP02]{DP2002}
B.A. Davey and H.A. Priestley.
\newblock {\em Introduction to lattices and order}.
\newblock Cambridge University Press, New York, second edition, 2002.

\bibitem[FK72]{freyd1972categories}
P.J. Freyd and G.M. Kelly.
\newblock Categories of continuous functors, {I}.
\newblock {\em Journal of Pure and Applied Algebra}, 2(3):169--191, 1972.

\bibitem[HO00]{hyland2000full}
J.M.E. Hyland and C.-H.L Ong.
\newblock {On full abstraction for PCF: I, II, and III}.
\newblock {\em Information and computation}, 163(2):285--408, 2000.

\bibitem[JMS23]{jakl2023categorical}
T.~Jakl, D.~Marsden, and N.~Shah.
\newblock A categorical account of composition methods in logic.
\newblock Preprint available at \url{https://arxiv.org/abs/2304.10196}, 2023.

\bibitem[JNW93]{JNW1993}
A.~Joyal, M.~Nielsen, and G.~Winskel.
\newblock Bisimulation and open maps.
\newblock In {\em Proceedings of 8th Annual IEEE Symposium on Logic in Computer
  Science}, pages 418--427, 1993.

\bibitem[JNW96]{JNW1996}
A.~Joyal, M.~Nielsen, and G.~Winskel.
\newblock Bisimulation from open maps.
\newblock {\em Information and Computation}, 127(2):164--185, 1996.

\bibitem[JR22]{JR2022}
T.~Jakl and L.~Reggio.
\newblock An invitation to game comonads.
\newblock Lecture notes for a course at the 33rd European Summer School in
  Logic, Language and Information ({ESSLLI}), in preparation, 2022.

\bibitem[Klo94]{kloks1994treewidth}
T.~Kloks.
\newblock {\em Treewidth: computations and approximations}, volume 842 of {\em
  Springer Lecture Notes in Computer Science}.
\newblock Springer Science \& Business Media, 1994.

\bibitem[KV95]{KV1995}
P.G. Kolaitis and M.Y. Vardi.
\newblock On the expressive power of datalog: Tools and a case study.
\newblock {\em Journal of Computer and System Sciences}, 51(1):110--134, 1995.

\bibitem[Lib04]{Libkin2004}
L.~Libkin.
\newblock {\em Elements of finite model theory}.
\newblock Texts in Theoretical Computer Science. An EATCS Series.
  Springer-Verlag, Berlin, 2004.

\bibitem[ML98]{MacLane}
S.~Mac~Lane.
\newblock {\em Categories for the working mathematician}, volume~5 of {\em
  Graduate Texts in Mathematics}.
\newblock Springer-Verlag, New York, second edition, 1998.

\bibitem[MLM94]{MM1994}
S.~Mac~Lane and I.~Moerdijk.
\newblock {\em Sheaves in geometry and logic}.
\newblock Universitext. Springer-Verlag, New York, 1994.
\newblock A first introduction to topos theory, Corrected reprint of the 1992
  edition.

\bibitem[MS22]{MS2022}
Y.~Montacute and N.~Shah.
\newblock The pebble-relation comonad in finite model theory.
\newblock In {\em Proceedings of the 37th Annual ACM/IEEE Symposium on Logic in
  Computer Science}, LICS '22. IEEE Press, 2022.

\bibitem[NOdM06]{nevsetvril2006tree}
J.~Ne{\v{s}}et{\v{r}}il and P.~Ossona~de Mendez.
\newblock Tree-depth, subgraph coloring and homomorphism bounds.
\newblock {\em European Journal of Combinatorics}, 27(6):1022--1041, 2006.

\bibitem[Ran52]{Raney1952}
G.N. Raney.
\newblock Completely distributive complete lattices.
\newblock {\em Proc. Amer. Math. Soc.}, 3:677--680, 1952.

\bibitem[Rie]{riehl2008factorization}
E.~Riehl.
\newblock Factorization systems.
\newblock Notes available at
  \url{https://emilyriehl.github.io/files/factorization.pdf}.

\bibitem[Ros95]{RosenPhDthesis1995}
E.~Rosen.
\newblock {\em Finite Model Theory and Finite Variable Logics}.
\newblock PhD thesis, University of Pennsylvania, 1995.

\bibitem[Ros08]{Rossman2008}
B.~Rossman.
\newblock Homomorphism preservation theorems.
\newblock {\em J. ACM}, 55(3):15:1--15:53, 2008.

\bibitem[RR23]{reggio2023finitely}
L.~Reggio and C.~Riba.
\newblock Finitely accessible arboreal adjunctions and {H}intikka formulae.
\newblock Preprint available at \url{https://arxiv.org/abs/2304.12709}, 2023.

\bibitem[SIG17]{Church2017}
ACM SIGLOG.
\newblock {Alonzo Church Award}.
\newblock \url{https://siglog.org/winners-of-the-2017-alonzo-church-award/},
  2017.

\end{thebibliography}

\end{document}